\newtheorem{fact}[theorem]{Fact}
\newtheorem{observation}[theorem]{Observation}
\newcommand{\enc}{\mathsf{enc}}
\newcommand{\cc}[1]{\mathsf{#1}}
\newcommand{\supp}{\ensuremath{\mathsf{supp}}}
\newcommand{\tw}{\ensuremath{\mathsf{tw}}}
\newcommand{\W}[1]{\ensuremath{\mathsf{W[#1]}}}
\newcommand{\A}[1]{\ensuremath{\mathsf{A[#1]}}}
\newcommand{\homs}[2]{\mbox{\ensuremath{\mathsf{Hom}(#1 \to #2)}}}
\newcommand{\embs}[2]{\mbox{\ensuremath{\mathsf{Emb}(#1 \to #2)}}}
\newcommand{\subs}[2]{\mbox{\ensuremath{\mathsf{Sub}(#1 \to #2)}}}
\newcommand{\strembs}[2]{\mbox{\ensuremath{\mathsf{StrEmb}(#1 \to #2)}}}
\newcommand{\indsubs}[2]{\mbox{\ensuremath{\mathsf{IndSub}(#1 \to #2)}}}
\newcommand{\auts}[1]{\ensuremath{\mathsf{Aut}(#1)}}
\newcommand{\cphoms}[2]{\ensuremath{\mathsf{cp}\text{-}\mathsf{Hom}}(#1 \to #2)}
\newcommand{\cfhoms}[2]{\ensuremath{\mathsf{cf}\text{-}\mathsf{Hom}}(#1 \to #2)}
\newcommand{\clique}{\ensuremath{\textsc{Clique}}}
\newcommand{\homsprob}[2]{\mbox{\ensuremath{\textsc{Hom}(#1 \to #2)}}}
\newcommand{\cphomsprob}[2]{\mbox{\ensuremath{\textsc{cp-Hom}(#1 \to #2)}}}
\newcommand{\subsprob}[2]{\mbox{\ensuremath{\textsc{Sub}(#1 \to #2)}}}
\newcommand{\fptred}{\ensuremath{\leq^{\mathrm{fpt}}_{\mathrm{T}}}}
\newcommand{\interred}{\ensuremath{\equiv^{\mathrm{fpt}}_{\mathrm{T}}}}
\newcommand{\parsired}{\ensuremath{\leq^{\mathrm{fpt}}}}
\newcommand{\wparsired}{\ensuremath{\leq^{\mathrm{w\text{-}fpt}}}}
\newcommand{\lovasz}{Lov{\'{a}}sz}
\newcommand{\nesetril}{Ne{\v{s}}et{\v{r}}il}
\renewcommand{\ast}{\ensuremath{\!\raisebox{1ex}{${*}$}\!}}
\newcommand{\K}{\ensuremath{\mathrm{K}}}
\newcommand{\xcrown}[1]{\accentset{\symking}{#1}}
\title{Counting and Finding Homomorphisms is Universal for Parameterized Complexity Theory}
\titlerunning{Counting Homomorphisms is Universal for Parameterized Complexity Theory}
\author{Marc Roth}
{Merton College, University of Oxford, United Kingdom}
{marc.roth@merton.ox.ac.uk}
{https://orcid.org/0000-0003-3159-9418}
{}
\author{Philip Wellnitz}
{Max Planck Institute for Informatics, Saarland Informatics Campus (SIC), Saarbrücken, Germany
}
{wellnitz@mpi-inf.mpg.de}
{https://orcid.org/0000-0002-6482-8478}
{}
\authorrunning{M. Roth and P. Wellnitz}
\keywords{Parameterized complexity theory, counting problems, graph homomorphisms, Kneser graphs}
\begin{document}
\maketitle

\begin{abstract}
    Counting homomorphisms from a graph $H$ into another graph $G$ is a fundamental problem
    of (parameterized) counting complexity theory. In this work, we study the case
    where {\em both} graphs $H$ and $G$ stem from given classes of graphs: $H\in \mathcal{H}$
    and $G\in \mathcal{G}$.
    By this, we combine the \emph{structurally restricted} version of this problem
    (where the class $\mathcal{G}=\top$ is the set of all graphs), with the
    \emph{language-restricted} version (where the class $\mathcal{H}=\top$ is the set of all
    graphs).
    The structurally restricted version allows an exhaustive complexity classification
    for classes $\mathcal{H}$:
    Either we can count all homomorphisms in polynomial time (if the treewidth of $\mathcal{H}$
    is bounded), or the problem becomes $\#\W1$-hard [Dalmau, Jonsson, Th.Comp.Sci'04].
    In contrast, in this work, we show that the combined view most likely does not admit such
    a complexity dichotomy.

    Our main result is a construction based on Kneser graphs that associates every problem
    $\textsc{P}$ in $\#\W{1}$ with two classes of graphs $\mathcal{H}$ and $\mathcal{G}$ such that
    the problem $\textsc{P}$ is {\em equivalent} to the problem
    $\#\homsprob{\mathcal{H}}{\mathcal{G}}$ of counting homomorphisms
    from a graph in $\mathcal{H}$ to a graph in $\mathcal{G}$.
    In view of Ladner's seminal work on the existence of $\mathsf{NP}$-intermediate problems
    [J.ACM'75] and its adaptations to the parameterized setting, a classification of
    the class $\#\W1$ in fixed-parameter tractable and $\#\W1$-complete cases is unlikely.
    Hence, obtaining a complete classification for the problem
    $\#\homsprob{\mathcal{H}}{\mathcal{G}}$ seems unlikely.
    Further, our proofs easily adapt to $\W1$ and the problem of deciding whether
    a homomorphism between graphs exists.

    In search of complexity dichotomies, we hence turn to special graph classes.
    Those classes include line graphs, claw-free graphs, perfect graphs, and combinations thereof,
    and $F$-colorable graphs for fixed graphs $F$:
    If the class $\mathcal{G}$ is one of those classes
    and the class $\mathcal{H}$ is closed under taking minors, then we establish explicit criteria
    for the class $\mathcal{H}$ that partition the family of problems
    $\#\homsprob{\mathcal{H}}{\mathcal{G}}$ into polynomial-time solvable and $\#\W{1}$-hard
    cases. In particular, we can drop the condition of $\mathcal{H}$ being minor-closed for
    $F$-colorable graphs. As a consequence, we are able to lift the framework of graph motif
    parameters due to Curticapean, Dell and Marx [STOC'17] to $F$-colorable graphs and
    provide an exhaustive classification for the parameterized subgraph counting problem on
    $F$-colorable graphs. As a special case, we obtain an easy proof of the parameterized
    intractability result of the problem of counting $k$-matchings in bipartite graphs.
\end{abstract}

\clearpage

\section{Introduction}

\paragraph*{Homomorphisms between Graphs}

Given graphs $H$ and $G$, a fundamental question is whether (or how often) we can ``find'' the
graph $H$ in the graph $G$.
Depending on the application, different notions of ``finding'' the graph $H$ are studied:
In the classical {\em subgraph isomorphism} problem~\cite{Cook71,Ullmann76}
(also consult for instance \cite{AlonYZ95,NesetrilP85}), the goal is to search for subgraphs of $G$ that
are {\em isomorphic} to the graph $H$.
In contrast, allowing multiple vertices of the graph $H$ to be mapped
to the same vertex of $G$, requiring only edge relations to be preserved, we obtain
a relaxation of the subgraph isomorphism problem called the {\em (graph) homomorphism} problem.
The problem of finding (and by extension counting) homomorphisms in graphs
has a tight connection to problems related to conjunctive queries in data
bases~\cite{ChandraM77,DellRW19icalp}, as well as applications in for instance artificial
intelligence~\cite{FederV98}.
As it turns out, once we want to count all ``occurrences'' of the graph $H$ in $G$, understanding
the graph homomorphism problem is already enough to understand various other notions of
``finding'' graphs in other graphs:
As Curticapean, Dell, and Marx~\cite{CurticapeanDM17} proved, counting for instance isomorphic subgraphs
is the same as counting {\em linear combinations} of graph homomorphisms. Hence, in this work,
we focus on counting graph homomorphisms.\vspace*{1ex}

Formally, given two classes of graphs $\mathcal{H}$ and $\mathcal{G}$, the (decision)
problem $\homsprob{\mathcal{H}}{\mathcal{G}}$ is defined as follows: Given graphs
$H \in \mathcal{H}$ and $G \in \mathcal{G}$, decide whether there is a mapping $h: V(H) \to V(G)$
such that for any edge $uv$ in $V(H)$, the edge $h(u)h(v)$ exists in $V(G)$.
The problem of finding graph homomorphisms, also called $H$-colorings,
has been studied since the late 1970s and 1980s~\cite{GareyJS75,MaurerSW81,AlbertsonC85}.
In its most general form, where both classes $\mathcal{H}$ and $\mathcal{G}$ contain
all graphs (denoted by $\mathcal{H} = \mathcal{G} = \top$) the problem $\homsprob{\mathcal{H}}
{\mathcal{G}}$ is $\cc{NP}$-complete: Checking whether a graph $H$ admits a homomorphism to
the complete graph on $3$ vertices is equivalent to checking whether $H$ is 3-colorable,
a classical $\cc{NP}$-hard problem~(see for instance \cite{GareyJ79}).

Motivated by the hardness in the general case, special cases of the problem
$\homsprob{\mathcal{H}}{\mathcal{G}}$ have been studied:
The \emph{language-restricted} version of the problem $\homsprob{\mathcal{H}}{\mathcal{G}}$
assumes only the class $\mathcal{H}=\top$ to be the set of all graphs and restricts
the class $\mathcal{G}$. Note that the above example of checking whether a graph is
3-colorable falls into this framework (by setting $\mathcal{G}=\{K_3\}$), so the problem
$\homsprob{\top}{\mathcal{G}}$ is $\cc{NP}$-hard in general as well.
However, if the class $\mathcal{G}$ contains only {\em bipartite} graphs, the problem
$\homsprob{\mathcal{\top}}{\mathcal{G}}$ is solvable in polynomial time~\cite{HellN90}.
In fact, Hell and \nesetril~\cite{HellN90} also prove the following hardness result:
If the class $\mathcal{G}$ contains a non-bipartite graph, the problem
$\homsprob{\mathcal{\top}}{\mathcal{G}}$ is $\cc{NP}$-hard.
Together, this yields a {\em complexity dichotomy}: for any problem $\homsprob{\top}{\mathcal{G}}$,
we obtain its complexity just by looking at the class $\mathcal{G}$.
Based on this dichotomy, Feder and Vardi conjectured \cite{FederV98} that a similar dichotomy
is possible for constraint satisfaction problems; this Feder-Vardi-Conjecture was proved
recently by Bulatov~\cite{Bulatov17} and, independently, by Zhuk~\cite{Zhuk17}.\vspace*{1ex}

Having understood the problem $\homsprob{\top}{\mathcal{G}}$,
the focus shifted to understanding ``the other side'', that is
the case where the class $\mathcal{G}$ contains all graphs instead.
For this {\em structurally restricted} version of the problem
$\homsprob{\mathcal{H}}{\mathcal{G}}$, the success story continues:
As Grohe~\cite{Grohe07} proved, if every graph in the graph class $\mathcal{H}$
contains only graphs for which the so-called {\em treewidth} is small,\footnote{Strictly speaking,
the graphs in the class $\mathcal{H}$ need to be only {\em homomorphically equivalent} to
graphs with small treewidth.} then the problem
$\homsprob{\mathcal{H}}{\top}$ is solvable in polynomial-time again.
In all other cases, the problem $\homsprob{\mathcal{H}}{\top}$ is not solvable in
polynomial time (under reasonable assumptions from complexity theory).
In fact, Grohe's dichotomy is even stronger: It shows that from a {\em parametrized
complexity} view, the problem $\homsprob{\mathcal{H}}{\top}$ is either what is called
``fixed-parameter tractable'' (solvable in time $f(|V(H)|)\cdot |V(G)|^{O(1)}$ for graphs
$H\in \mathcal{H}$, $G\in\top$) or ``$\W1$-hard'' (essentially a parameterized
equivalent of $\cc{NP}$-hardness). We formalize these notions later; also consult
\cite{Niedermeier02,FlumG06,CyganFKLMPPS15} for an in-depth introduction to parameterized
complexity theory.

\paragraph*{The Doubly Restricted Version of Counting Homomorphisms}

A natural generalization of finding a solution to a problem is to {\em count} all solutions.
From an algorithmic point of view, counting all solutions may be way harder than finding
a solution: While finding a perfect matching in a graph has a classical polynomial-time
algorithm, {\em counting} all perfect matchings is known to be
$\cc{\#P}$-complete~\cite{Valiant79}.

Formally, for two classes of graphs $\mathcal{H}$ and $\mathcal{G}$, the counting
version of the homomorphism problem (denoted by $\#\homsprob{\mathcal{H}}{\mathcal{G}}$)
is defined as follows: Given graphs $H \in \mathcal{H}$ and $G \in \mathcal{G}$,
compute the number of (graph) homomorphism from the graph $H$ to the graph $G$.
Similarly to the decision realm, the language-restricted version $\#\homsprob{\top}{\mathcal{G}}$
has been studied in the context of the counting constraint satisfaction problem:
The dichotomy theorem of Dyer and Greenhill implies that the problem
$\#\homsprob{\top}{\mathcal{G}}$ is $\#\cc{P}$-complete if the class $\mathcal{G}$ contains a
graph with a connected component that is neither an isolated vertex with or without self-loop,
nor a complete graph with all self-loops, nor a complete bipartite graph without
self-loops~\cite{DyerG00}.
Otherwise, the problem $\#\homsprob{\top}{\mathcal{G}}$ is solvable in polynomial
time~(cf.\ \cite[Lemma~4.1]{DyerG00}). In a subsequent line of research, this classification was
lifted to general counting constraint satisfaction problems~\cite{Bulatov13,DyerR10,CaiC12}.

The structurally restricted version of the graph homomorphism problem
has been studied in the counting regime as well:
A counting analogue of Grohe's dichotomy was established by Dalmau and Jonsson~\cite{DalmauJ04}.
In \cite{DalmauJ04} they prove that the counting problem $\#\homsprob{\mathcal{H}}{\top}$ is
solvable in polynomial time if and only if there is a constant bound on the treewidth of the
graphs in the class $\mathcal{H}$; otherwise the problem $\#\homsprob{\mathcal{H}}{\top}$ is
complete for the class $\#\W{1}$ (where the class $\#\W1$ is the analogous class
to~$\W{1}$ for counting problems).

Initiated by the breakthrough result by Curticapean, Dell, and
Marx~\cite{CurticapeanDM17}, a line of research~\cite{Roth17,RothS18,DellRW19icalp} lifted the
dichotomy of Dalmau and Jonnson~\cite{DalmauJ04} to all parameterized counting problems that can
be expressed as linear combinations of homomorphisms, subsuming counting of subgraphs,
counting of induced subgraphs and even counting of answers to existential first-order queries.
This lifting technique is sometimes also called {\em complexity monotonicity}.

\paragraph*{Counting Homomorphisms is Universal for $\W1$ and $\#\W1$}

The previous results provide a surprisingly clean picture of the complexity landscape of the
problems of finding and counting graph homomorphisms for both, the language-restricted and the
structurally restricted version.
However, none of the previous results are applicable for the {\em doubly restricted} version:
Instead of restricting only $\mathcal{H}$ or $\mathcal{G}$,
we consider the problem $\homsprob{\mathcal{H}}{\mathcal{G}}$ where {\em both} classes
are fixed. This can be seen as a special case of both the structurally restricted version
and the language-restricted version. In particular, the known dichotomies translate only
for certain pairs of classes $\mathcal{H}, \mathcal{G}$, leaving a wide gap in the complexity
landscape to be explored. In particular, it is imaginable that for real-world instances,
{\em both} graphs $H$ and $G$ have a certain structure that can be exploited.
In fact, we show that the doubly restricted version can express {\em any} problem in $\W1$ and
$\#\W1$, respectively. Intuitively, this means that if we want to understand any problem $P$ in $\#\W1$,
we may instead consider an equivalent problem $\#\homsprob{\mathcal{H}_P}{\mathcal{G}_P}$. In
particular, any algorithm or hardness obtained for $\#\homsprob{\mathcal{H}_P}{\mathcal{G}_P}$
directly translates to the original problem $P$.
\begin{mtheorem}[Universality for $\W{1}$ and $\#\W1$]\label{ladner_intro}
    For any problem $P$ in $\W{1}$, there are classes~$\mathcal{H}=\mathcal{H}_P$ and
    $\mathcal{G} = \mathcal{G}_P$ such that \[P \interred \homsprob{\mathcal{H}}{\mathcal{G}},\]
    and for any problem $P'$ in $\#\W{1}$, there are classes~$\mathcal{H'} = \mathcal{H'}_P$
    and $\mathcal{G'} = \mathcal{G'}_P$ such that
    \[P' \interred \#\homsprob{\mathcal{H'}}{\mathcal{G'}},\]
    where $\interred$ denotes interreducibility (sometimes also called equivalence)
    with respect to parameterized Turing-reductions.\\
    The classes~$\mathcal{H}$ and $\mathcal{H'}$ are recursively enumerable and the classes
    $\mathcal{G}$ and $\mathcal{G'}$ are recursive.\lipicsEnd
\end{mtheorem}

\Cref{ladner_intro} in turn also makes a clear categorization of the problems
$\homsprob{\mathcal{H}}{\mathcal{G}}$ into ``easy'' (that is fixed-parameter tractable)
and ``hard'' (that is $\W1$-hard or $\#\W1$-hard) cases unlikely:
A general partition of the class $\W{1}$ in fixed-parameter tractable and $\W{1}$-complete
problems is very unlikely as indicated by Ladner's seminal result~\cite{Ladner75} and its
adaptation to the parameterized setting by Downey and Fellows~\cite{DowneyF92ladner}.
A similar reasoning applies to $\#\W{1}$.

Note that \cref{ladner_intro}, in particular its consequences for the absence of parameterized
dichotomies, are independent from the ``non-dichotomy'' results of~\cite{BodirskyG08}
and~\cite{ChenTW08}, which rule out a $\cc{P}$ vs.\ $\cc{NP}$/$\#\cc{P}$ dichotomy for the
structurally restricted versions: In \cite{BodirskyG08},  Bodirsky and Grohe prove
a $\cc{P}$ vs.\ $\cc{NP}$ non-dichotomy by a modification of Ladner's Theorem~\cite{Ladner75};
however, this has no direct implications from neither a parameterized complexity nor a counting
complexity point of view. Independently, in \cite{ChenTW08}, Chen, Thurley, and Weyer proved a
similar result also for the counting version and hence obtained a $\cc{P}$ vs. $\#\cc{P}$
non-dichotomy result; again, this has no direct implications for our setting.

\paragraph*{Dichotomies for $F$-Colorable Graphs and Kőnig Graphs}

Having established the doubly restricted version of the problem $\#\homsprob{\mathcal{H}}
{\mathcal{G}}$ as interesting in general, we proceed to demonstrate examples of both,
(1) how existing complexity dichotomies translate to the doubly restricted setting, as well as
(2) how we can exploit the existence of structure in both classes to obtain new complexity
dichotomies for certain pairs of graph classes.

Note that if we fix a graph class $\mathcal{G}$ for which the corresponding
language restricted problem $\#\homsprob{\top}{\mathcal{G}}$ is already ``easy'',
then the same is true for any graph class $\mathcal{H}$ and the problem
$\#\homsprob{\mathcal{H}}{\mathcal{G}}$.\footnote{For instance if $\mathcal{G}$ is the class of
all planar graphs, Eppstein~\cite{Eppstein99} gave an fixed-parameter tractable algorithm
for $\#\homsprob{\top}{\mathcal{G}}$; a similar result is known even for classes of bounded
local treewidth~\cite{FrickG01}. Hence, there are also fixed-parameter tractable algorithms
solving $\#\homsprob{\mathcal{H}}{\mathcal{G}}$ for any graph class~$\mathcal{H}$.\!}
While it may be possible to improve the running time of
known algorithms for special classes $\mathcal{H}$ in such a case, in this work we focus on
investigating classes $\mathcal{G}$ where the problem $\#\homsprob{\top}{\mathcal{G}}$
is hard. (Note further that a similar statement is true for classes $\mathcal{H}$ where
the structurally-restricted problem $\#\homsprob{\mathcal{H}}{\top}$ is ``easy''.)\vspace*{1ex}

As a first example how known dichotomies can be adapted to yield new results for the
doubly-restricted setting, we consider the case where the class $\mathcal{G} = \mathcal{G}_F$
is the set of all $F$-colorable graphs for some fixed graph $F$.\footnote{Observe that
containment in the class $\mathcal{G}_F$ is in general not solvable in polynomial time:
If $F$ is the triangle then $\mathcal{G}_F$, if considered as language, is the $3$-coloring
problem. For this reason, we model the problem $\#\homsprob{\mathcal{H}}{\mathcal{G}}$
as a \emph{parameterized promise} problem; the formal definition is given in \cref{sec:prelims}.}
For example, if $F$ is chosen to be the graph consisting of a single edge, then the problem
$\#\homsprob{\mathcal{H}}{\mathcal{G}_F}$ is the problem of counting homomorphisms from a graph
$H \in \mathcal{H}$ to a bipartite graph $G$.
As it turns out, it is possible to refine the dichotomy by Dalmau and Jonsson~\cite{DalmauJ04}
for the case $\mathcal{G}=\top$ to work for $F$-colorable graphs as well:
\begin{mtheorem}\label{thm:genhomdich_intro}
    Let $F$ denote a graph, and let $\mathcal{H}$ denote a recursively enumerable class of graphs.
    \begin{enumerate}[~1~]
        \item If the treewidth of the class $\mathcal{H}\cap\mathcal{G}_F$ is bounded,
            then the problem $\#\homsprob{\mathcal{H}}{\mathcal{G}_F}$ is polynomial-time solvable.
        \item Otherwise, the problem $\#\homsprob{\mathcal{H}}{\mathcal{G}_F}$ is
            $\#\W{1}$-complete.\lipicsEnd
    \end{enumerate}
\end{mtheorem}

While the proof \cref{thm:genhomdich_intro} is conceptually close to the proof in~\cite{DalmauJ04}, we can combine \cref{thm:genhomdich_intro} with
the aforementioned complexity monotonicty~\cite{CurticapeanDM17} to lift the result to the
realm of counting subgraphs:
\begin{mtheorem}[Intuitive version]\label{thm:subs_f_colored_intro}
    Let $F$ be a fixed graph and let $\mathcal{H}$ be a recursively enumerable class of graphs.
    Given a graph $H \in \mathcal{H}$ and a graph $G \in \mathcal{G}_F$, we wish to compute the
    number of subgraphs $\#\subs{H}{G}$ of $G$ that are isomorphic to $H$.
    \begin{enumerate}[~1~]
        \item If the matching number of $\mathcal{H}\cap \mathcal{G}_F$ is bounded then
            the problem $\#\subsprob{\mathcal{H}}{\mathcal{G}}$ is polynomial-time solvable.
        \item Otherwise, the problem $\#\subsprob{\mathcal{H}}{\mathcal{G}}$ is $\#\W{1}$-complete.
            \lipicsEnd
    \end{enumerate}

\end{mtheorem}
Note that \cref{thm:subs_f_colored_intro} subsumes a dichotomy for counting subgraphs
in bipartite graphs and, in particular, \cref{thm:subs_f_colored_intro} yields an alternative
and {\em easy} proof of $\#\W{1}$-hardness of counting $k$-matchings in bipartite
graphs~\cite{CurticapeanM14}.
Further, as an example of a new result which follows from \cref{thm:subs_f_colored_intro},
we obtain $\#\W{1}$-hardness for the problem of counting triangle packings in $3$-colorable graphs:
This problem asks, given parameter $k$ and a $3$-colorable graph $G$, to compute the number of
possibilities to embed $k$ vertex-disjoint triangles into $G$.\vspace*{1ex}

As an example for completely new insights gained in the doubly restricted setting, we consider
the cases where the class $\mathcal{G} = \mathcal{L}$ is the set of {\em line graphs} and
where the set $\mathcal{G} = \symking$ is the set of {\em Kőnig graphs}, respectively;
where a Kőnig graph is a line graph of a bipartite graph.\footnote{We chose this terminology
due to the fact that Kőnig's theorem states that line graphs of bipartite graphs are perfect
(see for instance\ \cite{ChudnovskyRST06}). The symbol $\symking$ is used since ``Kőnig'' is the German
word for ``King''.}
Kőnig graphs are of particular interest, as they are a subset of the well-studied classes of
perfect graphs~\cite{ChudnovskyRST06}, line graphs (of arbitrary graphs) and thus of the claw-free
graphs~\cite{Beineke70}. Consequently, the hardness results we obtain for Kőnig graphs hold for
the three previous classes of graphs as well.

Being a well-studied object for almost a whole century \cite{Whitney92}, line graphs
have applications in both graph theory (see for instance \cite{ChudnovskyRST06}), but also in
algorithm design (see for instance \cite{Marx10}). The first thorough study of {\em homomorphisms between
line graphs} is due to \nesetril~\cite{Nesetril71}; in particular, \nesetril~gave criteria
when a homomorphism from $L(H)$ to $L(G)$ corresponds to a homomorphism from~$H$ to~$G$.
We further motivate the study of line graphs (and by extension Kőnig graphs) by demonstrating
that the problem of {\em finding} a homomorphism to a line graph is always fixed-parameter
tractable:

\begin{theorem}\label{thm:dec_lines_tract_intro}
    The decision problems $\homsprob{\top}{\mathcal{L}}$ and thus $\homsprob{\top}{\symking}$
    are fixed-parameter tractable.
    In particular, given a graph $H$ and a line graph $L$, it is possible to decide the existence
    of a homomorphism from $H$ to $L$ in time \[f(|V(H)|) \cdot O(|V(L)|^2),\] for some computable
    function $f$ independent of $H$ and $L$.\lipicsEnd
\end{theorem}
As it turns out, in contrast, {\em counting} all homomorphisms to Kőnig graph is in general
$\#\W1$-hard; specifically, we prove the following:
\begin{mtheorem}\label{thm:main_line_graphs_intro}
    Let $\mathcal{H}$ denote a recursively enumerable class of graphs.
    If $\mathcal{H}$ has unbounded treewidth and is closed under taking minors, then
    the problem $\#\homsprob{\mathcal{H}}{\symking}$ is $\#\W{1}$-complete.\lipicsEnd
\end{mtheorem}
As argued before, the choice of Kőnig graphs induces the following consequences for perfect and
claw-free graphs.
\begin{corollary}\label{cor:minor_closed_classification_intro}
    Let $\mathcal{C}$ denote one of the classes of line-graphs, claw-free graphs or perfect graphs,
    or a non-empty union thereof.
    Further, let $\mathcal{H}$ denote a recursively enumerable class of graphs.
    \begin{enumerate}[~1~]
        \item If the treewidth of the class $\mathcal{H}$ is bounded, then the problem
            $\#\homsprob{\mathcal{H}}{\mathcal{C}}$ is solvable in polynomial time.
        \item Otherwise, if the class $\mathcal{H}$ is additionally minor-closed, the problem
            $\#\homsprob{\mathcal{H}}{\mathcal{C}}$ is $\#\W{1}$-complete.\lipicsEnd
    \end{enumerate}

\end{corollary}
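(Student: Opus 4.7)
The plan is to derive \cref{cor:minor_closed_classification_intro} directly from \cref{thm:main_line_graphs_intro} together with the single combinatorial observation that the class $\symking$ of König graphs is contained in each of the three candidate classes. Hence almost all of the substance is already encapsulated in \cref{thm:main_line_graphs_intro}, and what remains is essentially bookkeeping.

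For part~(1), when $\mathcal{H}$ has bounded treewidth the Dalmau--Jonsson dichotomy already yields a polynomial-time algorithm for $\#\homsprob{\mathcal{H}}{\top}$. Since $\mathcal{C} \subseteq \top$ for every class $\mathcal{C}$ considered, the very same algorithm solves $\#\homsprob{\mathcal{H}}{\mathcal{C}}$; no additional argument is needed.

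For part~(2), I would first verify that $\symking \subseteq \mathcal{C}$ for each of the three listed classes, and hence for any non-empty union thereof. This amounts to three classical facts: by definition, König graphs \emph{are} line graphs (of bipartite graphs), so $\symking \subseteq \mathcal{L}$; Beineke's characterization~\cite{Beineke70} shows that every line graph is claw-free, so $\symking$ lies in the class of claw-free graphs; and König's theorem (cf.~\cite{ChudnovskyRST06}) states that the line graph of a bipartite graph is perfect, giving the third inclusion. With $\symking \subseteq \mathcal{C}$ established, the $\#\W{1}$-hardness reduction is the identity: any instance $(H, G)$ of $\#\homsprob{\mathcal{H}}{\symking}$ is automatically a valid instance of $\#\homsprob{\mathcal{H}}{\mathcal{C}}$ with the same number of homomorphisms, and \cref{thm:main_line_graphs_intro} applies since $\mathcal{H}$ is minor-closed of unbounded treewidth.

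Containment in $\#\W{1}$ is standard: counting homomorphisms from $H$ to an arbitrary host graph $G$, parameterized by $|V(H)|$, lies in $\#\W{1}$, which is implicit already in the original Dalmau--Jonsson classification. There is no genuine obstacle in the proof; the only point that might require a brief pointer to the literature is the inclusion of $\symking$ into the perfect graphs, and for unions $\mathcal{C}_1 \cup \mathcal{C}_2 \cup \dots$ one merely observes that $\symking$ is contained in each summand and hence in the union.
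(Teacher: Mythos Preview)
Your proposal is correct and matches the paper's own proof essentially line for line: the paper also reduces part~(1) to the Dalmau--Jonsson algorithm via the trivial identity reduction to $\#\homsprob{\mathcal{H}}{\top}$, and derives part~(2) from \cref{thm:main_line_graphs_intro} (stated in the body as \cref{lem:main_line_graphs}) by invoking exactly the three inclusions $\symking \subseteq \mathcal{L}$, $\symking \subseteq$ claw-free (via~\cite{Beineke70}), and $\symking \subseteq$ perfect (via König's theorem, cf.~\cite{ChudnovskyRST06}). Your extra sentence on $\#\W{1}$-membership is appropriate since the introduction states ``$\#\W{1}$-complete'' while the body version only claims ``$\#\W{1}$-hard''; the paper handles this elsewhere via the blanket reduction $\#\homsprob{\mathcal{H}}{\mathcal{G}} \parsired \#\clique$.
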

Note that by restricting the graph class $\mathcal{H}$ in \cref{thm:main_line_graphs_intro},
we are able to give a more detailed view of what makes counting homomorphisms to Kőnig
graphs {\em hard}. In particular, the explicit criterion established by
\cref{thm:main_line_graphs_intro} suggests that we may hope for fast
algorithms only for classes $\mathcal{H}$ that are not minor-closed.

However, in general, \cref{thm:main_line_graphs_intro} does not answer the question
whether we can classify the problem $\#\homsprob{\mathcal{H}}{\symking}$ into easy
and hard problems, hence we answer that question with the following {\em implicit}
dichotomy:

\begin{mtheorem}\label{thm:implicit_line_dicho_intro}
    Let $\mathcal{H}$ denote a recursively enumerable class of graphs.
    Then the problem $\#\homsprob{\mathcal{H}}{\symking}$ is either fixed-parameter tractable
    or $\#\W{1}$-complete under parameterized Turing-reductions.\lipicsEnd
\end{mtheorem}

\paragraph*{Technical Overview}

For our universality result (\cref{ladner_intro}), we rely on known results for
homomorphisms between {\em Kneser graphs}.
More precisely, we use a computable function that associates each integer $n\geq 3$ with a
Kneser graph $\K(n)$ such that there are no homomorphisms between $\K(n)$ and $\K(m)$ whenever
$n\neq m$.
Now, given some problem $P \in \#\W{1}$, we use the existence of a
certain parameterized weakly parsimonious reduction $\mathbb{A}$ from $P$ to the problem of
counting homomorphisms from Kneser graphs $\K(n)$.
In particular, for any instance $x$ of the problem $P$, we have an efficiently computable mapping
to a pair of graphs $x \mapsto (H_x,G_x)$ such that $P(x)$ is equal (up to a normalizing factor)
to the number of homomorphisms from the Kneser graph $H_x$ to the graph $G_x$,
the latter of which can be assumed to allow a homomorphism to $H_x$.
The main idea is then to choose $\mathcal{H}$ as the set of all graphs $H_x$ and $\mathcal{G}$
as the set of all graphs $G_x$. Then we prove that $P$ and
$\#\homsprob{\mathcal{H}}{\mathcal{G}}$ are interreducible.
While the reduction $P \fptred \#\homsprob{\mathcal{H}}{\mathcal{G}}$ is immediate,
we consider the construction of the backward reduction as our main technical contribution.

In particular, consider a pair of graphs $(H_x, G_y) \in \mathcal{H}\times\mathcal{G}$.
In order to obtain a reduction $\#\homsprob{\mathcal{H}}{\mathcal{G}} \fptred P$,
that is to compute the number $\#\homs{H_x}{G_y}$, we need to construct an instance to
the problem $P$. This is easy if both $H_x$ and $G_y$ indeed correspond to the same
instance $z = x = y$. If $H_x$ and $G_y$ do not correspond to a common instance, however,
we need information about $\#\homs{H_x}{G_y}$ from {\em somewhere else}, as any oracle to the
problem $P$ is useless in this situation. In our case,
the construction ensures that $\#\homs{H_x}{G_y} = 0$ in this situation; but in order
to obtain this equality (while also maintaining decodability of the original instance $x$),
an involved construction using Kneser graphs seems to be required.

An even more fundamental (but easier to solve) challenge is to {\em reversibly} encode
any string $x$ into a part of the graph $G_x$ in such a way, that the number of homomorphisms
to the graph $G_x$ changes in a controlled way (in our case the number of homomorphisms stays
in fact the same). As our constructed Kneser graphs have a chromatic number of at least 3,
encoding a string $x$ is possible using a comparably simple construction using paths.
Implicitly, this step as well relies on deep theory about Kneser graphs, in particular
we rely on \lovasz' seminal result~\cite{Lovasz78} which asserts that $H_x$ cannot be mapped
homomorphically into a graph with low chromatic number.\vspace*{1ex}

For our dichotomy results, as advertised, from a technical point of view,
obtaining \cref{thm:genhomdich_intro} is a rather simple lifting exercise from the result
in~\cite{DalmauJ04}; we obtain \cref{thm:subs_f_colored_intro} by a rather
direct application of complexity monotonicity~\cite{CurticapeanDM17}.

In contrast, the analysis of the complexity of counting homomorphisms to Kőnig graphs
is technically more involved.
The proof of the explicit classification for minor-closed classes~$\mathcal{H}$
(\cref{thm:main_line_graphs_intro}) uses a gadget construction that, intuitively, associates
each graph with a Kőnig graph, while keeping the number of grid-like substructures stable.
In view of the diverse applications of the Grid-Tiling Problem
(see for instance\ \cite[Chapter~14.4.1]{CyganFKLMPPS15}), the construction might yield further
intractability results for counting problems on Kőnig graphs (and thus on claw-free and perfect
graphs) and might hence be of independent interest.

Finally, the implicit and exhaustive classification for counting homomorphisms to Kőnig graphs
(\cref{thm:implicit_line_dicho_intro}) relies on Whitney's Isomorphism Theorem for line
graphs~\cite{Whitney92} which allows to express the number of homomorphisms from a graph $H$
to a Kőnig graph $G$ as a finite linear combination of homomorphisms of the form
\[ \sum_{F} \lambda_F \cdot \#\homs{F}{L^{-1}(G)},\]
where the graphs $F$ can be efficiently computed from $H$
and $\#\homs{F}{L^{-1}(G)}$ is the number of
homomorphisms from $F$ to the primal graph of $G$.
\Cref{thm:implicit_line_dicho_intro} then follows by complexity
monotonicity~\cite{CurticapeanDM17} and the classification of counting homomorphisms to
bipartite graph as given by \cref{thm:genhomdich_intro}.

\clearpage
\paragraph*{Organization of the Paper}

We start with an introduction to the concepts and notation used in this work
(including formal definitions of (parameterized) promise problems and reductions between them)
in \cref{sec:prelims}.

The proof of \cref{ladner_intro} is presented in \cref{sec:ladnersection}. For completeness,
we provide a sketch of the hardness proof of $\#\homsprob{\mathcal{H}}{\top}$ in
\cref{sec:hom_hard_strategy}. Continuing, in \cref{sec:fcol} we prove the dichotomy for counting
homomorphisms and subgraphs in $F$-colorable graphs, some proofs are deferred to the
appendix \cref{sec:app_fcol}.
Finally, in \cref{sec:line_graphs}, we present the new dichotomy for Kőnig graphs.

\section{Preliminaries}\label{sec:prelims}
We write $[n]$ to denote the set $\{1,\dots,n\}$. Further, we assume the binary alphabet $\{0,1\}$.
In particular, we assume that numbers are encoded binary as well,
which allows us to abuse notation and write $\mathbb{N} \subseteq \{0,1\}\ast$.
Given a function $g: A \times B \rightarrow C$ and an element $a \in A$, we write $g(a,\star)$ for
the function which maps $b \in B$ to $g(a,b)$.
Given a finite set $A$ we write $|A|$ and $\#A$ for the cardinality of $A$.
Given two functions $f: A \rightarrow B$ and $g:B \rightarrow C$, we write $f\circ g$ for their
\emph{composition} that maps $x \in A$ to $g(f(x)) \in C$.
A \emph{partition} of a set $A$ is a set of non-empty and pairwise disjoint subsets of $A$,
called \emph{blocks}, whose union is $A$.

\subsection{Graphs and Homomorphisms}
We consider undirected simple graphs without self-loops (unless stated otherwise)
and we assume that graphs are encoded by their adjacency matrices.
Given a graph $G$, we write $V(G)$ and $E(G)$ for the vertices and edges of $G$.
A graph is called \emph{complete} or a \emph{clique} if all vertices are pairwise adjacent.
A \emph{subgraph} of $G$ is a graph obtained from $G$ by deleting vertices (including adjacent
edges) and edges; more precisely, the graph $F$ is a subgraph of $G$ if $F=(V,E)$ such that
$V\subseteq V(G)$ and $E \subseteq E(G)\cap V^2$.
The graph $F$ is called a \emph{proper subgraph} if $F \neq G$.
A graph $M$ is a \emph{minor} of a graph $G$ if $M$ can be obtained by a sequence of
edge-contraction from a subgraph of $G$.
Here the contraction of an edge $e=\{u,v\}$ is the operation of adding a new vertex $uv$ which
is made adjacent to all vertices that have been adjacent to $u$ or $v$.
After that, the vertices $u$, $v$ and possible self-loops and multi-edges are deleted.
Given a subset $S$ of vertices of $G$, the \emph{induced subgraph} $G[S]$ has vertices $S$
and edges $E(G)\cap S^2$.
Given a partition $\rho$ of $V(G)$, the \emph{quotient graph} $G/\rho$ of $G$ is obtained
from $G$ by identifying every pair of vertices that are contained in the same block of $\rho$.
After that, multiple edges are deleted. Note that this construction induces self-loops if there
is an edge between two vertices in the same block. Adopting the notation
of~\cite{CurticapeanDM17}, we denote quotient graphs without self-loops as \emph{spasms}.

Given graphs $H$ and $G$, a \emph{homomorphism} from $H$ to $G$ is a mapping
$h:V(H)\rightarrow V(G)$ that preserves the adjacency of vertices, that is,
for every edge $\{u,v\}$ in $E(H)$, the graph $G$ has the edge $\{h(u),h(v)\} \in E(G)$.
If a homomorphism $h$ is injective, then it is called an \emph{embedding}.
If an embedding $h$ additionally satisfies that for every edge $\{h(u),h(v)\}$
in $E(G)$ there is an edge $\{u,v\}$ in $E(H)$, then $h$ is called a \emph{strong embedding}
and a strong embedding is called an \emph{isomorphism} if it is bijective.
Two graphs $H$ and $G$ are called \emph{isomorphic} if there is an isomorphism from $H$ to $G$.
In this paper, we (implicitly) work only on isomorphism classes of graphs; we abuse notation and
write $H = G$ if $H$ and $G$ are isomorphic.
In particular, we denote $\top$ for the set of all (isomorphism types of) graphs.
A homomorphism from $H$ to itself is called an \emph{endomorphism}. Further, a bijective
endomorphism is called an \emph{automorphism}.
We write $\homs{H}{G}$, $\embs{H}{G}$ and $\strembs{H}{G}$ for the set of all homomorphisms,
embeddings and strong embeddings from $H$ to $G$, respectively.
Further, we write $\auts{H}$ for the set of automorphisms of $H$, $\subs{H}{G}$ for the set
of subgraphs of~$G$ that are isomorphic to $H$ and $\indsubs{H}{G}$ for the set of induced subgraphs of $G$ that are isomorphic to $H$.

\paragraph*{Homomorphic Equivalence and Cores}

Two graphs $H$ and $G$ are called \emph{homomorphically equivalent} if there is both a
homomorphism from $H$ to $G$ and a homomorphism from $G$ to $H$.
Clearly, homomorphic equivalence is an equivalence relation; further the following is known.

\begin{lemma}[See for instance\ Chapter~1.6 in~\cite{HellN04}]
    The minimal representative of an equivalence class (with respect to homomorphic equivalence)
    is unique up to isomorphisms.\lipicsEnd
\end{lemma}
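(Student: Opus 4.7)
The plan is to run the standard core-uniqueness argument. Let $H_1$ and $H_2$ be two minimal representatives of the same homomorphic equivalence class, and let $f\colon H_1 \to H_2$ and $g\colon H_2 \to H_1$ witness homomorphic equivalence. I will first reduce the statement to a self-map property of minimal representatives, then promote a vertex-bijection to a full isomorphism.

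First I would show that every endomorphism of a minimal representative $H$ is bijective on vertices. Given an endomorphism $e\colon H \to H$, consider the image subgraph $H' := e(H) \subseteq H$. The restriction of $e$ gives a homomorphism $H \to H'$, while the inclusion $H' \hookrightarrow H$ gives a homomorphism in the reverse direction. Hence $H'$ lies in the same homomorphic equivalence class as $H$. Minimality of $H$ (interpreted as having no strictly smaller representative in the class, equivalently no proper homomorphically equivalent subgraph) then forces $H' = H$ as vertex sets, so $e$ is surjective on $V(H)$, and finiteness promotes this to a bijection.

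Applying this to the endomorphisms $g \circ f$ of $H_1$ and $f \circ g$ of $H_2$, I conclude that $g \circ f$ and $f \circ g$ are both vertex-bijections. Consequently $f$ is injective (from $g \circ f$ being injective) and surjective (from $f \circ g$ being surjective) on vertices, i.e., a vertex-bijection. Because $f$ is a homomorphism and a vertex-bijection, it induces an injection $E(H_1) \hookrightarrow E(H_2)$ (distinct edges have distinct image endpoint pairs); symmetrically $g$ induces an injection $E(H_2) \hookrightarrow E(H_1)$. Therefore $|E(H_1)| = |E(H_2)|$, so $f$ is also an edge-bijection, and thus an isomorphism.

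The only real subtlety — which I would call out as the main obstacle — is pinning down what ``minimal representative'' means, since the statement does not spell it out. The proof above works under either the interpretation ``fewest vertices in the equivalence class'' or the interpretation ``no proper subgraph lies in the equivalence class'', because in both cases the key step — collapsing $H$ onto a proper subgraph via an endomorphism — contradicts minimality. Under the vertex-count interpretation the contradiction is immediate; under the subgraph interpretation one additionally observes that the image $e(H)$ together with the inclusion furnishes the homomorphic equivalence needed to invoke minimality, which is exactly what I used.
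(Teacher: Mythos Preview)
Your argument is correct and is essentially the standard core-uniqueness proof. Note, however, that the paper does not actually supply its own proof of this lemma: it is stated with a reference to Chapter~1.6 of Hell and \nesetril's textbook and used as a black box. So there is nothing in the paper to compare against beyond observing that your write-up matches the textbook development the paper cites, including the companion fact (stated immediately afterward in the paper as a separate lemma) that every endomorphism of a core is an automorphism, which you recover along the way.
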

It is thus well-defined to speak of ``the'' minimal representative of an equivalence class;
we say that such a graph is a \emph{core}.
An equivalent definition of a core is given by the following known result.
\begin{lemma}[See for instance\ Chapter~1.6 in~\cite{HellN04}]\label{lem:coraut}
    A graph $H$ is a core if and only if it there is no homomorphism from $H$ to a proper
    subgraph of $H$. In particular, every endomorphism of a core is an
    automorphism.\lipicsEnd
\end{lemma}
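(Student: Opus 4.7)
The plan is to prove the equivalence by two separate implications, using the definition of a core as a minimal representative of a homomorphic equivalence class (already asserted by the preceding lemma), and then to derive the \emph{in particular} statement directly from the forward direction.

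For the forward direction, I would suppose that $H$ is a core and, for contradiction, pick a homomorphism $f\colon H \to H'$ with $H' \subsetneq H$ a proper subgraph. Composing with the inclusion $H' \hookrightarrow H$ (which is itself a homomorphism), I obtain that $H$ and $H'$ are homomorphically equivalent. But $H'$ is strictly smaller than $H$ (in vertex count, or in edge count if the vertex sets agree), contradicting the minimality of $H$ as the representative of its equivalence class.

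For the backward direction, I would assume that $H$ admits no homomorphism to any proper subgraph of itself and show that $H$ must already be minimal in its class. Given any homomorphically equivalent graph $K$ with homomorphisms $f\colon H \to K$ and $g\colon K \to H$, the composition $g\circ f$ is an endomorphism of $H$ whose homomorphic image is a subgraph of $H$; by hypothesis this image cannot be proper, so $g\circ f$ is surjective and hence bijective on vertices, and a short edge-counting argument (a homomorphism injective on vertices is injective on edges, and the image lies in $E(H)$) promotes $g\circ f$ to an automorphism. From injectivity of $g\circ f$ one reads off $|V(K)|\geq |V(H)|$ and $|E(K)|\geq |E(H)|$, so no strictly smaller representative exists and $H$ is a core. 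The \emph{in particular} clause then comes for free: for any endomorphism $e\colon H\to H$ of a core, the image is a subgraph of $H$, which by the forward direction cannot be proper, so $e$ is forced to be bijective on both vertices and edges, i.e., an automorphism. The main obstacle I anticipate is handling the edge-count case cleanly in the backward step, since a proper subgraph may coincide with $H$ on vertices and differ only on edges; the vertex-injectivity-implies-edge-injectivity observation is precisely what bridges this gap.
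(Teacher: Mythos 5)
Your proof is correct, and there is nothing in the paper to compare it against: the paper states this lemma with a citation to Chapter~1.6 of Hell and \nesetril~\cite{HellN04} and gives no proof of its own, so your write-up supplies the standard textbook argument. Both directions are sound. In the forward direction, composing a homomorphism $H\to H'$ (with $H'$ a proper subgraph) with the inclusion $H'\hookrightarrow H$ shows homomorphic equivalence of $H$ and $H'$, contradicting minimality; in the backward direction, the composite endomorphism of $H$ has image a subgraph that cannot be proper, and your finiteness argument (vertex-injectivity of a homomorphism forces edge-injectivity, and an injective self-map of a finite set is onto) correctly promotes it to an automorphism, which matches the paper's definition of an automorphism as a bijective endomorphism, and yields $|V(K)|\geq|V(H)|$ and $|E(K)|\geq|E(H)|$ for every graph $K$ in the equivalence class. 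The only point worth making explicit is the one you already flag: the paper's definition of a core (``the'' minimal representative of the equivalence class) does not specify what ``minimal'' means. You read it as minimality in both vertex and edge count (equivalently, subgraph-minimality within the class), which is what your edge-count fallback in the forward direction uses when $V(H')=V(H)$. Under a vertex-count-only reading, that case needs one extra line: the endomorphism $H\to H'\hookrightarrow H$ is either non-injective, so its image is a homomorphically equivalent graph on fewer vertices, or injective, in which case edge counting against $E(H')\subsetneq E(H)$ gives the contradiction---and this is exactly the machinery you already deploy in the backward direction, so the gap is cosmetic rather than substantive.
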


\paragraph*{Colorings and Graph Parameters}

An $H$\emph{-coloring} of a graph $G$ is a homomorphism $c \in \homs{G}{H}$.
We say that a graph $G$ is $H$\emph{-colorable} or allows a coloring into $H$
if the graph $G$ has an $H$-coloring.
In particular, given a positive integer $k \in \mathbb{N}$, we say that $G$ is $k$-colorable if it
allows a coloring into the complete graph with $k$ vertices.
Given a graph $G$ together with an $H$-coloring $c$,
we say that a homomorphism $h \in \homs{H}{G}$ is \emph{color-prescribed} if $c(h(v))= v$ for all $v \in V(H)$.
We write $\#\cphoms{H}{G}$ for the set of all color-prescribed homomorphisms from $H$ to $G$.

The following three graph parameters are of particular importance in this paper.
First, the \emph{chromatic number} of a graph $G$ is defined to be the smallest $k$ such that $G$ is $k$-colorable.
Second, the \emph{odd girth} is defined to be the length of the smallest odd cycle in a graph, and undefined if no odd cycle exists.
Third, we rely on the graph parameter of \emph{treewidth}.
Intuitively, a graph $G$ has small treewidth if it has a ``tree-like structure''.
In particular, if a graph $G$ has a small treewidth, then the graph $G$ also has small
``separators''. These separators allow for efficient dynamic programming algorithms for a wide range of problems that are known to be hard in the unrestricted setting.
However, as we need the treewidth of a graph only in a black-box manner, we defer the reader to
the literature (for instance\ Chapter~7 in~\cite{CyganFKLMPPS15}) for the formal definition and a detailed
exposition.

\paragraph*{Tensor Products of Graphs}
Given two graphs $G$ and $A$, the \emph{tensor product} $G\times A$ is the graph with vertices
$V(G)\times V(A)$, where $\times$ is the Cartesian product of sets. Two vertices $(g,a)$ and
$(g',a')$ are adjacent in $G\times A$ if the edge $\{g,g'\}$ is in $E(G)$ and the edge $\{a,a'\}$
is in $E(A)$. Now let $H$ be a fixed graph.
It is well-known that the function $\#\homs{H}{\star}$ is linear with respect to $\times$
and multiplication~\cite[Equation~5.30]{Lovasz12}, that is,
\[\#\homs{H}{G \times A} = \#\homs{H}{G} \cdot \#\homs{H}{A}.\]
A further well-known fact about the tensor product reads as follows.
\begin{fact}[Folklore]\label{fac:f_col_tensor}
    Let $G$, $A$ and $F$ be graphs. If either one of $G$ or $A$ is $F$-colorable,
    then so is their tensor product $G \times A$.\lipicsEnd
\end{fact}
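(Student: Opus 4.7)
The plan is to exhibit an explicit projection-and-color homomorphism from $G \times A$ to $F$. By the symmetry of the tensor product construction, the coordinate-swap map gives $G \times A \cong A \times G$, so I may assume without loss of generality that $G$ is the $F$-colorable factor, witnessed by some homomorphism $c \colon G \to F$.

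Given such a $c$, the natural candidate is the map $h \colon V(G \times A) \to V(F)$ defined by $h(g,a) := c(g)$, i.e.\ project to the first coordinate and then apply $c$. Verifying that $h$ is a homomorphism reduces to a single edge check: if $\{(g,a),(g',a')\}$ is an edge of $G \times A$, then by the very definition of the tensor product $\{g,g'\} \in E(G)$, and since $c$ preserves adjacency, $\{h(g,a),h(g',a')\} = \{c(g),c(g')\} \in E(F)$, as required. Note that the second-coordinate constraint $\{a,a'\} \in E(A)$ plays no role in this verification, which is precisely why the hypothesis ``$G$ \emph{or} $A$ is $F$-colorable'' suffices, rather than requiring both.

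I do not anticipate any obstacle: conceptually the argument is just the observation that the coordinate projections out of a categorical product are homomorphisms, composed with the given $F$-coloring. The symmetric case in which $A$ is the $F$-colorable factor is handled identically via the second-coordinate projection, taking $h(g,a) := c(a)$ for a homomorphism $c \colon A \to F$. Since the graphs considered in the paper carry no self-loops, no additional case distinction at ``non-edges'' is needed, and the proof is complete in one paragraph once the candidate $h$ is written down.
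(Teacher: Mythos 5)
Your proof is correct: projecting $G \times A$ onto the $F$-colorable factor and composing with the given coloring is exactly the standard argument, and your edge check is sound since every edge of the tensor product has an edge in each coordinate. The paper states this fact as folklore without proof, and your one-paragraph argument is precisely the canonical justification it implicitly relies on, so there is nothing to add.
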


\subsection{Parameterized Counting Problems and Promises}
Recall that the problem $\#\homsprob{\mathcal{H}}{\mathcal{G}}$ asks, given graphs
$H \in \mathcal{H}$ and $G \in \mathcal{G}$, to compute the number of homomorphisms from
$H$ to $G$. However, this definition is informal in the sense that it leaves out the specification
of the output if the input is {\em invalid}, that is, if the graph $H$ does not belong to
the class $\mathcal{H}$ or the graph $G$ does not belong to the class $\mathcal{G}$.

A naive option to solve this issue, is to require the output to be $0$ if the input is invalid.
However, in this case we exclude a plethora of interesting cases from our studies,
as even seemingly trivial instances might encode $\cc{NP}$-hard problems.
Consider for example the class $\mathcal{G}$ of $3$-colorable graphs.
Following the naive option, the problem $\#\homsprob{\mathcal{H}}{\mathcal{G}}$ becomes
$\cc{NP}$-hard even if the class $\mathcal{H}$ contains only the graph $K_1$ consisting of a
single vertex, as it encodes the $3$-colorability problem:
An instance $(K_1,G)$ of the problem $\#\homsprob{\mathcal{H}}{\mathcal{G}}$ is mapped to zero
if and only if the graph $G$ is $3$-colorable.
In particular, fixed-parameter tractability of this problem would yield an algorithm running in
time $f(|K_1|)\cdot |V(G)|^{O(1)}$ which is a polynomial in $|V(G)|$, and thus imply
$\cc{P}=\cc{NP}$.
In sharp contrast, the number of homomorphisms from the graph $K_1$ to any graph $G$ is just
the number of vertices $|V(G)|$ and thus the hardness of the problem
$\#\homsprob{\mathcal{H}}{\mathcal{G}}$ stems only from enforcing invalid inputs to be mapped
to zero.

Another option for solving the issue of invalid instances of the problem
$\#\homsprob{\mathcal{H}}{\mathcal{G}}$ is as follows: If a given instance $(H, G)$
consists of graphs $H \in \mathcal{H}$ and $G \in \mathcal{G}$, then we are supposed
to compute the number of homomorphisms $\#\homs{H}{G}$ correctly; otherwise,
we may output {\em any} number.
Formally, this requires us to model the problem $\#\homsprob{\mathcal{H}}{\mathcal{G}}$
as a {\em promise problem}.\footnote{Goldreich~\cite[Chapter~2.4.1]{Goldreich08} states that
\emph{promise problems offer the most direct way of formulating natural computational problems}.
Indeed, some of the most striking results in complexity theory implicitly rely on promise
problems. Examples are ``gap problems'' and ``uniqueness promises''; we refer the reader to
\cite[Chapter~2.4.1.2]{Goldreich08} for a discussion.}
In what follows, we thus present a concise but self-contained introduction to parameterized
promise (counting) problems.

A \emph{parameterization} $\kappa$ is a polynomial-time computable function from the set $\{0,1\}\ast$
to the natural numbers $\mathbb{N}$.
Note that the assumption of polynomial-time computable parameterizations for both, decision and
counting problems, is common (see for instance\ \cite[Definitions~1.1 and~14.1]{FlumG06}), but not
standard. We refer the reader to the discussion of this issue in Chapter~1.2 in the textbook of Flum and Grohe~\cite{FlumG06}.

\begin{definition}\label{def:ppc}
    A \emph{parameterized promise} \emph{counting problem} (or short ``PPC problem'') is a
    triple~$(P,\kappa,\Pi)$ consisting of a function $P:\{0,1\}\ast \rightarrow \mathbb{N}$,
    a parameterization $\kappa: \{0,1\}\ast \rightarrow \mathbb{N}$, and a
    promise~$\Pi \subseteq \{0,1\}\ast$.
    A \emph{parameterized counting problem} (without promises) is a PPC problem with
    $\Pi = \{0,1\}\ast$.

    A PPC problem $(P,\kappa,\Pi)$ is computable in time $t$ if there is a deterministic
    algorithm~$\mathbb{A}$ that fulfills the following. 
    \begin{enumerate}
        \item On input $x \in \{0,1\}\ast$, the algorithm $\mathbb{A}$ runs in time $t(|x|)$.
        \item On input $x \in \Pi$, the algorithm $\mathbb{A}$ outputs $P(x)$.
    \end{enumerate}
    In particular, we call the problem $(P,\kappa,\Pi)$ \emph{fixed-parameter tractable} if there
    is a computable function~$f$ such that the triple $(P,\kappa,\Pi)$ can be computed in
    time $f(\kappa(x))\cdot |x|^{O(1)}$.

    Further, we call $x \in \{0,1\}\ast$ an {\em instance} of the problem  $(P, \kappa, \Pi)$ and
    say that an instance $x$ is {\em valid} if it is contained in the promise, that is $x
    \in \Pi$.\lipicsEnd
\end{definition}
Note that we obtain the standard definition of fixed-parameter tractability of (non-promise)
counting problems if we set the promise $\Pi$ to be $\{0,1\}\ast$.
Note further that parameterized decision problems with promises are obtained from \cref{def:ppc}
by restricting the image of the function $P$ to be $\{0,1\}$.
In this case, \cref{def:ppc} coincides with the standard definition of (parameterized) promise
problems (see for instance\ Definition~3.1 in the full version~\cite{BGS18arxiv} of~\cite{BGS18}).

We consider the following family of PPC problems.
\begin{definition}
    Let $\mathcal{H}$ and $\mathcal{G}$ be classes of graphs.
    The PPC problem $\#\homsprob{\mathcal{H}}{\mathcal{G}}$ asks,
    given $H \in \mathcal{H}$ and $G \in \mathcal{G}$, to compute the number of homomorphisms
    $\#\homs{H}{G}$; the parameter is $|V(H)|$.
    Formally, the promise of $\#\homsprob{\mathcal{H}}{\mathcal{G}}$ is the set of all (encodings of) pairs $(H,G)\in \mathcal{H}\times \mathcal{G}$.

    \noindent Further, we define $\#\cphomsprob{\mathcal{H}}{\mathcal{G}}$ as the PPC problem of,
    given $H \in \mathcal{H}$, $G \in \mathcal{G}$ and an $H$-coloring of $G$, computing
    the number of color-prescribed homomorphisms $\#\cphoms{\mathcal{H}}{\mathcal{G}}$;
    the parameter is $|V(H)|$.
    Again, the formal promise is defined as the set of all (encodings of) pairs $(H,G)\in
    \mathcal{H}\times \mathcal{G}$.\lipicsEnd
\end{definition}
The \emph{decision problems} $\homsprob{\mathcal{H}}{\mathcal{G}}$ and
$\cphomsprob{\mathcal{H}}{\mathcal{G}}$ are defined similarly, with the exception that the output is required to be $1$ if the number of homomorphisms $\#\homs{H}{G}$ is positive or
the number of color-prescribed homomorphisms $\#\cphoms{H}{G}$ is positive, respectively.

\begin{remark}\label{rem:nopromise}
    If membership of a graph in the class $\mathcal{G}$ can be tested in polynomial time and
    the class $\mathcal{H}$ is recursive, then there is no need to define the problem
    $\#\homsprob{\mathcal{H}}{\mathcal{G}}$ as promise problem.
    Instead, we can define the output to be zero if a given pair $(H,G)$ is not contained in
    $\mathcal{H}\times \mathcal{G}$; note that $H \in \mathcal{H}$ can be verified in time $f(H)$
    for some computable function $f$ as, by assumption, $\mathcal{H}$ is
    recursive.\lipicsEnd
\end{remark}

\paragraph*{Reductions and Hardness}

In this paper, we consider the following two notions of reducibility for PPC problems.
\begin{definition}[Parameterized (Weakly) Parsimonious Reductions]
    Let PPC problems $(P,\kappa,\Pi)$ and $(P',\kappa',\Pi')$ be given.
    A \emph{parameterized weakly parsimonious reduction} from $(P,\kappa,\Pi)$ to
    $(P',\kappa',\Pi')$ is a pair consisting of a deterministic algorithm $\mathbb{A}$ and a
    triple of computable functions $(f,g,s)$ such that:
    \begin{enumerate}[~1~]
        \item For all valid instances $x \in \Pi$,
            the algorithm $\mathbb{A}$ outputs a valid instance of
            $(P',\kappa',\Pi')$, that is $\mathbb{A}(x)\in \Pi'$.
        \item Given a valid instance $x \in \Pi$, we can compute the value $P(x)$
            as the product of the value $g(x)$ and the solution of
            the computed instance $\mathbb{A}(x)$ to problem $P'$;
            that is, we have that $P(x) = g(x) \cdot P'(\mathbb{A}(x))$.
        \item The PPC problem $(g,\kappa,\Pi)$ is fixed-parameter tractable.
        \item On input $x \in \{0,1\}\ast$, the algorithm $\mathbb{A}$ runs in time
            $f(\kappa(x))\cdot |x|^{O(1)}$.
        \item For all $x \in \{0,1\}\ast$, the parameter of the instance $\mathbb{A}(x)$
            is bounded by $s(\kappa(x))$,  that is $\kappa'(\mathbb{A}(x)) \leq s(\kappa(x))$.
    \end{enumerate}
    We write $(P,\kappa,\Pi)\wparsired (P',\kappa',\Pi')$ if such a reduction exists.
    If $g$ is the identity function on $\Pi$, then the reduction is called \emph{parsimonious}
    and we write $(P,\kappa,\Pi)\parsired (P',\kappa',\Pi')$.\lipicsEnd
\end{definition}

\begin{definition}[Parameterized Turing-reductions]
    Let $(P,\kappa,\Pi)$ and $(P',\kappa',\Pi')$ be PPC problems.
    A \emph{parameterized Turing-reduction} from $(P,\kappa,\Pi)$ to $(P',\kappa',\Pi')$
    is a pair of an algorithm $\mathbb{A}$ equipped with oracle access to the function $P'$
    and a pair $(f,s)$ of computable functions such that:
    \begin{enumerate}[~1~]
        \item On input $x \in \{0,1\}\ast$,
            the algorithm $\mathbb{A}$ runs in time $f(\kappa(x)) \cdot |x|^{O(1)}$.
        \item On input $x \in \Pi$,
            the algorithm $\mathbb{A}$ computes the function  $P(x)$.
        \item On input $x \in \Pi$,
            the algorithm $\mathbb{A}$ queries the oracle only on strings $y$ with $y\in \Pi'$
            and $\kappa'(y)\leq s(\kappa(x))$.
    \end{enumerate}
    We write $(P,\kappa,\Pi)\fptred (P',\kappa',\Pi')$ if such a reduction
    exists.\lipicsEnd
\end{definition}
Unsurprisingly, the previous notions of reducibility coincide with the common notions for
reducibility between parameterized counting problems if the promises $\Pi$ and $\Pi'$ are trivial,
that is, if $\Pi= \Pi'=\{0,1\}\ast$ (see for instance\ \cite[Definition~1.8]{Curticapean15}).
Further, the following facts are straightforward to verify.
\begin{fact}
    Let $(P,\kappa,\Pi)$ and $(P',\kappa',\Pi')$ be PPC problems. We have that
    \[(P,\kappa,\Pi)\parsired (P',\kappa',\Pi') \Longrightarrow
    (P,\kappa,\Pi)\wparsired (P',\kappa',\Pi') \Longrightarrow
(P,\kappa,\Pi)\fptred (P',\kappa',\Pi'). \]

Further, all of the notions of reducibility $\parsired$, $\wparsired$, and $\fptred$ are
transitive.\lipicsEnd
\end{fact}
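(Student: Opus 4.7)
The plan is to verify each implication by unfolding the definitions.

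The first implication, $\parsired \Rightarrow \wparsired$, is essentially immediate: a parsimonious reduction is, by definition, a weakly parsimonious reduction in which the rescaling function $g$ is the constant-one function (so that the identity $P(x) = g(x) \cdot P'(\mathbb{A}(x))$ collapses to $P(x) = P'(\mathbb{A}(x))$). Since the constant-one function is trivially fixed-parameter tractable, the same tuple $(\mathbb{A}, f, g, s)$ that witnesses the parsimonious reduction also witnesses a weakly parsimonious one.

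For the second implication, $\wparsired \Rightarrow \fptred$, suppose the weakly parsimonious reduction is witnessed by an algorithm $\mathbb{A}$ together with functions $(f, g, s)$. I would construct a Turing algorithm $\mathbb{B}$ with oracle access to $P'$ as follows: on input $x \in \{0,1\}\ast$, compute $y := \mathbb{A}(x)$ and $n := g(x)$, query the oracle for $P'(y)$, and return $n \cdot P'(y)$. The three conditions required of a parameterized Turing reduction then follow directly from the definition of the weakly parsimonious reduction: condition~2 gives $P(x) = g(x) \cdot P'(\mathbb{A}(x)) = n \cdot P'(y)$ on every $x \in \Pi$, so $\mathbb{B}$ is correct; condition~1 guarantees $y \in \Pi'$ whenever $x \in \Pi$, so the oracle is only queried on valid instances; and condition~5 guarantees $\kappa'(y) \leq s(\kappa(x))$, so the parameter of every query is properly bounded. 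Finally, the running time is the sum of the time needed to compute $\mathbb{A}(x)$ (at most $f(\kappa(x)) \cdot |x|^{O(1)}$ by condition~4), the time needed to compute $g(x)$ (fixed-parameter tractable by condition~3), and one polynomial-time multiplication.

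There is no real obstacle here; the only point that deserves mention is that we must ensure every oracle query remains within the promise $\Pi'$, which is precisely the role of condition~1 of the weakly parsimonious reduction. Both implications are thus established by a direct unfolding of the definitions, in line with the paper's claim that these facts are straightforward to verify.
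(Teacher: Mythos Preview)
Your proof is correct and is exactly the kind of routine definition-unfolding the paper has in mind; the paper itself offers no proof beyond calling the fact ``straightforward to verify.'' One minor terminological note: the paper defines a parsimonious reduction as one where ``$g$ is the identity function on $\Pi$,'' which you (rightly) read as the constant-one function---i.e.\ the multiplicative identity---since that is what makes $P(x) = g(x)\cdot P'(\mathbb{A}(x))$ collapse to $P(x) = P'(\mathbb{A}(x))$.
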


\begin{fact}
Let $(P,\kappa,\Pi)$ and $(P',\kappa',\Pi')$ be PPC problems and assume that $(P,\kappa,\Pi)$
reduces to $(P',\kappa',\Pi')$ with respect to any of $\parsired$, $\wparsired$, or $\fptred$.
If $(P',\kappa',\Pi')$ is fixed-parameter tractable, then  $(P,\kappa,\Pi)$ is also
fixed-parameter tractable.\lipicsEnd
\end{fact}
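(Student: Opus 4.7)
The plan is to reduce all three cases to the single case of parameterized Turing-reductions. By the first fact stated immediately before the claim, any $\parsired$-reduction is also a $\wparsired$-reduction, and any $\wparsired$-reduction is also an $\fptred$-reduction. Hence, it suffices to establish the statement under the assumption that $(P,\kappa,\Pi)\fptred(P',\kappa',\Pi')$, and the other two cases follow immediately.

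Let $\mathbb{A}$ be the algorithm witnessing the parameterized Turing-reduction: it has oracle access to $P'$, runs in time $f(\kappa(x))\cdot|x|^{O(1)}$, and on every valid input $x\in\Pi$ issues only queries $y$ with $y\in\Pi'$ and $\kappa'(y)\le s(\kappa(x))$. Let $\mathbb{B}$ be an algorithm witnessing fixed-parameter tractability of $(P',\kappa',\Pi')$, running in time $f'(\kappa'(y))\cdot|y|^{O(1)}$ and outputting $P'(y)$ whenever $y\in\Pi'$. The plan is to simulate $\mathbb{A}$ on input $x$ and answer each oracle query $y$ by invoking $\mathbb{B}$ on $y$.

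For the running-time analysis, observe first that any string $y$ queried during the simulation has length at most the running time of $\mathbb{A}$, i.e.\ $|y|\le f(\kappa(x))\cdot|x|^{O(1)}$. Second, since $x\in\Pi$, the promise condition of the Turing-reduction guarantees $y\in\Pi'$, so $\mathbb{B}$ is obliged to output $P'(y)$ correctly. Third, by the parameter bound on queries, $\kappa'(y)\le s(\kappa(x))$. Combining, each single call to $\mathbb{B}$ costs at most
\[
    f'(\kappa'(y))\cdot|y|^{O(1)} \;\le\; f'(s(\kappa(x)))\cdot\bigl(f(\kappa(x))\cdot|x|^{O(1)}\bigr)^{O(1)},
\]
which is of the form $h(\kappa(x))\cdot|x|^{O(1)}$ for a suitable computable function $h$. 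Since the number of queries is at most the running time of $\mathbb{A}$, the total cost of the composed algorithm is still of the form $h'(\kappa(x))\cdot|x|^{O(1)}$ for some computable $h'$, and the output equals $P(x)$ for every $x\in\Pi$. This witnesses that $(P,\kappa,\Pi)$ is fixed-parameter tractable.

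The essentially only subtlety to track is that the correctness of $\mathbb{B}$ only applies to strings inside the target promise $\Pi'$; this is precisely why condition~(3) in the definition of parameterized Turing-reductions on PPC problems requires that on valid inputs $\mathbb{A}$ queries only strings in $\Pi'$. Once this is noted, the argument is a routine composition of two FPT time bounds, and no further ingredients are required. The proofs for $\parsired$ and $\wparsired$ could also be given directly by computing $\mathbb{A}(x)$, applying $\mathbb{B}$ to obtain $P'(\mathbb{A}(x))$, and multiplying by $g(x)$, but the reduction to the Turing case is the shortest path.
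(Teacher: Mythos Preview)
The paper states this result as a \textbf{Fact} without proof, labeling it ``straightforward to verify''; there is thus no paper argument to compare against, and your proposal supplies the standard proof the authors omit.

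Your reduction of the three cases to the Turing case via the previous fact is clean, and the composition argument is correct for valid inputs. There is one minor technicality worth tightening: the definition of fixed-parameter tractability for PPC problems in the paper requires the running-time bound $f(\kappa(x))\cdot|x|^{O(1)}$ to hold for \emph{all} $x\in\{0,1\}^{*}$, not only for $x\in\Pi$. Your time analysis for a single call to $\mathbb{B}$ invokes the inequality $\kappa'(y)\le s(\kappa(x))$, which is only guaranteed by condition~(3) of the Turing-reduction when $x\in\Pi$. For $x\notin\Pi$, $\mathbb{A}$ may issue a query $y$ with arbitrarily large $\kappa'(y)$, and then $f'(\kappa'(y))$ is not bounded in terms of $\kappa(x)$. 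The fix is routine: before invoking $\mathbb{B}$ on a query $y$, compute $\kappa'(y)$ (polynomial time) and $s(\kappa(x))$ (time depending only on $\kappa(x)$); if $\kappa'(y)>s(\kappa(x))$ answer the query arbitrarily, say with $0$. This preserves correctness on $\Pi$ (where the branch is never taken) and enforces the FPT time bound on all inputs. With this patch your argument is complete.
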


Evidence of fixed-parameter \emph{in}tractability of parameterized counting problems
(with promises) is given by hardness for the complexity class $\#\W{1}$.
It is common to define $\#\W{1}$ via the complete problem $\#\clique$.
The problem $\#\clique$ asks, given $k \in \mathbb{N}$ and a graph $G$, to compute the number of
cliques of size $k$ in $G$.
Note that by \cref{rem:nopromise} the problem $\#\clique$ can be assumed to have no promise.

\begin{definition}[\cite{FlumG04,McCartin06}]
    The class $\#\W{1}$ contains all parameterized counting problems without promises that can be
    reduced to $\#\clique$ by parameterized parsimonious reductions.\lipicsEnd
\end{definition}
Next, we extend the notion of $\#\W{1}$-hardness to PPC problems.
\begin{definition}
    We say a PPC problem $(P,\kappa,\Pi)$ is $\#\W{1}$\emph{-hard} under parameterized parsimonious
    reductions if
    \[ \#\clique \parsired (P,\kappa,\Pi).\]
    Hardness under $\wparsired$ and $\fptred$ is defined likewise.
    A parameterized counting problem without promises is $\#\W{1}$\emph{-complete} if it is
    contained in $\#\W{1}$ and $\#\W{1}$-hard.\lipicsEnd
\end{definition}

As $\#\homsprob{\mathcal{H}}{\mathcal{G}}$ is a promise problem, it is formally not contained in
$\#\W{1}$. However, it can be shown that $\#\homsprob{\mathcal{H}}{\mathcal{G}}$ cannot be harder
than $\#\W{1}$-complete problems:
\begin{lemma}
    Let $\mathcal{H}$ and $\mathcal{G}$ be computable graph classes. We have that
    \[\#\homsprob{\mathcal{H}}{\mathcal{G}} \parsired \#\clique.\]
\end{lemma}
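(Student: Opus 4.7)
The plan is to build, from each valid input $(H, G) \in \mathcal{H} \times \mathcal{G}$, a single graph $G'$ whose $k$-cliques (with $k := |V(H)|$) biject with the homomorphisms in $\homs{H}{G}$, and to return the pair $(k, G')$ as the output of the reduction. Fix any enumeration $V(H) = \{v_1, \ldots, v_k\}$ and set $V(G') := [k] \times V(G)$. Between two distinct vertices $(i, u), (j, w) \in V(G')$, I would place an edge exactly when $i \neq j$ and, whenever $\{v_i, v_j\} \in E(H)$, also $\{u, w\} \in E(G)$. Intuitively, the first coordinate records which vertex of $H$ is receiving an image, the second coordinate records that image, and the edge rule locally enforces that every $H$-edge is mapped to a $G$-edge.

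The core of the argument is the bijection. Since any two vertices of a clique in $G'$ must differ in the first coordinate, every $k$-clique has the shape $\{(1, u_1), \ldots, (k, u_k)\}$ and so determines a function $h \colon V(H) \to V(G)$ by $h(v_i) := u_i$. The clique condition then reads precisely: whenever $\{v_i, v_j\} \in E(H)$, one has $\{u_i, u_j\} \in E(G)$, i.e.\ $h \in \homs{H}{G}$. Conversely, for each $h \in \homs{H}{G}$ the set $\{(i, h(v_i)) : i \in [k]\}$ is a $k$-clique of $G'$. Crucially, the second coordinates are allowed to repeat, so non-injective homomorphisms still produce honest cliques whose vertices remain distinct as pairs; this is the only subtle point, and it is precisely the reason for working over $[k] \times V(G)$ instead of over $V(G)$ alone. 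The upshot is that $\#\homs{H}{G}$ equals the number of $k$-cliques of $G'$ exactly, with no correction factor, so the reduction is parsimonious rather than merely weakly parsimonious.

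What remains is routine bookkeeping. The construction is computable in time polynomial in $|V(H)| \cdot |V(G)|$, well within the FPT budget, and the output parameter $k' = k$ coincides with the input parameter, so the size bound $s$ may be taken as the identity. For strings outside the promise $\Pi$ (encodings that do not correspond to a pair in $\mathcal{H} \times \mathcal{G}$), I would simply emit a fixed trivial $\#\clique$-instance: the definition of $\parsired$ requires correctness only on $\Pi$, and $\#\clique$ carries no promise of its own (cf.\ \cref{rem:nopromise}). I do not foresee a substantial obstacle in the proof; the one place requiring a moment's thought is reconciling the fact that homomorphisms may collapse vertices while cliques cannot, and this is handled uniformly by the first-coordinate device above.
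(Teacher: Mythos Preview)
Your proof is correct. The construction you give is the standard direct reduction from homomorphism counting to clique counting: the product set $[k]\times V(G)$ with the ``conditional'' edge rule is exactly what makes $k$-cliques biject with maps $V(H)\to V(G)$ respecting adjacency, and your handling of the non-injective case via the first coordinate is the right observation. The bookkeeping (parameter preserved, polynomial time, promise handling) is also in order.

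The paper, however, does not carry out this construction. Its proof is a two-line black-box argument: it invokes the known equality $\#\A{1}=\#\W{1}$ from Flum and Grohe's textbook, which already yields $\#\homsprob{\mathcal{H}}{\top}\parsired\#\clique$, and then observes that the identity map gives $\#\homsprob{\mathcal{H}}{\mathcal{G}}\parsired\#\homsprob{\mathcal{H}}{\top}$. So the paper outsources all the work to a heavy structural theorem, whereas you unroll the relevant piece of that theorem by hand. Your route is more elementary and entirely self-contained; the paper's route is shorter on the page but relies on machinery whose proof is, under the hood, essentially a generalization of the very gadget you built.
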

\begin{proof}
    It is known that $\#\homsprob{\mathcal{H}}{\top} \parsired \#\clique$ by the more general
    result that $\#\A{1} = \#\W{1}$~\cite[Theorem~14.17]{FlumG06}.
    The reduction $\#\homsprob{\mathcal{H}}{\mathcal{G}} \parsired \#\homsprob{\mathcal{H}}{\top}$
    is given by the identity function.
\end{proof}
As a concluding remark for this subsection, note that parameterized reductions, $\#\clique$, and
(hardness and completeness for) $\#\W{1}$ have corresponding notions in the decision realm.
In particular, $\clique$, that is, the problem of \emph{deciding} the existence of a clique of
size~$k$, constitutes the canonical complete problem for $\W{1}$.
We refer the reader to the textbook of Flum and Grohe~\cite{FlumG06} for further details of parameterized decision complexity, as this work mainly deals with counting problems.

\subsection{Quantum Graphs and Complexity Monotonicity}\label{sec:quantum}

The framework of {\em Complexity Monotonicity} was recently introduced by
Curticapean, Dell, and Marx in their breakthrough result regarding the complexity of the (induced)
subgraph counting problem~\cite{CurticapeanDM17}.
Very roughly speaking, the principle of complexity monotonicity states that
\begin{center}
    \textit{Computing a linear combination of homomorphism numbers is precisely as hard as computing its hardest term.}
\end{center}
While linear combinations of homomorphisms have been modeled by so-called
\emph{graph motif parameters} in~\cite{CurticapeanDM17}, we instead rely on the notion of
\emph{quantum graphs} as introduced by \lovasz~\cite[Chapter~6]{Lovasz12}.
\begin{definition}[Quantum graphs]
    A \emph{quantum graph} $Q$ is a formal linear combination of graphs with finite support.
    We write \[Q = \sum_H \lambda_H \cdot H,\]
    where $\lambda_H$ is non-zero only for finitely many graphs.
    We write $\mathsf{supp}(Q)$ for the set of all graphs $H$ for which $\lambda_H$ is non-zero.
    The elements of the support $\mathsf{supp}(Q)$ are called \emph{constituents}
    of~$Q$.\lipicsEnd
\end{definition}
Graph parameters extend to quantum graphs linearly. In particular, we define
\begin{equation}\label{eq:qgraphs}
    \#\homs{Q}{G} := \sum_H \lambda_H \cdot \#\homs{H}{G}.
\end{equation}
Now, given a set $\mathcal{Q}$ of quantum graphs, we write $\supp(\mathcal{Q})$ for the set of all
constituents of all quantum graphs in $\mathcal{Q}$.
Further, given a class $\mathcal{G}$ of graphs, the PPC problem
$\#\homsprob{\mathcal{Q}}{\mathcal{G}}$ is defined similarly as in case of (non-quantum) graphs:
Given $(Q,G)\in \mathcal{Q}\times \mathcal{G}$, the goal is to compute the number $\#\homs{Q}{G}$;
the parameter is given by the description length $|Q|$ of $Q$.

The main result of Curticapean, Dell and Marx can be stated as follows:
\begin{theorem}[Complexity Monotonicity~\cite{CurticapeanDM17}]
    Let $\mathcal{Q}$ be a recursively enumerable class of quantum graphs. Then we have that
    \[\#\homsprob{\mathcal{Q}}{\top} \interred
    \#\homsprob{\mathsf{supp}(\mathcal{Q})}{\top}.\]\lipicsEnd
\end{theorem}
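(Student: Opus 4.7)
The plan is to establish the two reductions comprising $\interred$ separately. The easy direction $\#\homsprob{\mathcal{Q}}{\top} \fptred \#\homsprob{\supp(\mathcal{Q})}{\top}$ is essentially bookkeeping: on input $(Q, G)$ I read the finitely many constituents of $Q$ from its encoding, query the oracle on $(H, G)$ for each $H \in \supp(Q)$, and assemble the answer as $\#\homs{Q}{G} = \sum_H \lambda_H \#\homs{H}{G}$ by the defining identity~\eqref{eq:qgraphs}. Since $|\supp(Q)|$ and the sizes of all constituents are bounded by $|Q|$, the parameter and running-time requirements of an fpt Turing reduction are met automatically.

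For the hard direction $\#\homsprob{\supp(\mathcal{Q})}{\top} \fptred \#\homsprob{\mathcal{Q}}{\top}$, I proceed as follows. On input $(H^*, G)$, I first recover some $Q \in \mathcal{Q}$ with $H^* \in \supp(Q)$ by enumerating $\mathcal{Q}$ (possible since $\mathcal{Q}$ is recursively enumerable); this search depends only on $H^*$ and is therefore admissible as fpt preprocessing. Label the support $\supp(Q) = \{H_1, \ldots, H_m\}$ with $H^* = H_1$, and write $Q = \sum_{i=1}^m \lambda_i H_i$. The core trick is to exploit multiplicativity of $\#\homs{H}{\cdot}$ under tensor products: for each $j \in [m]$, query the oracle on $(Q, G \times H_j)$ to obtain
\[ b_j := \#\homs{Q}{G \times H_j} = \sum_{i=1}^m \lambda_i \cdot \#\homs{H_i}{G} \cdot \#\homs{H_i}{H_j}. \]
Introducing unknowns $x_i := \lambda_i \cdot \#\homs{H_i}{G}$ and the coefficient matrix $M_{ji} := \#\homs{H_i}{H_j}$, this reads $b = M x$, a linear system of $m$ equations in $m$ unknowns. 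If $M$ is invertible, I recover $x_1$ and output $\#\homs{H^*}{G} = x_1 / \lambda_1$, which is well-defined because $H^* \in \supp(Q)$ forces $\lambda_1 \neq 0$.

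The main obstacle is invertibility of $M$, a classical fact of \lovasz: for any finite family of pairwise non-isomorphic simple graphs $F_1, \ldots, F_n$, the matrix $(\#\homs{F_i}{F_j})$ is non-singular over $\mathbb{Q}$. The strategy I would adopt is to factor $M$ through the standard chain of identities relating $\#\hom$, $\#\emb$, and $\#\stremb$ via partitions of vertex sets and sums over edge-supersets; ordering the constituents by $|V(\cdot)|$, the strong-embedding matrix becomes upper triangular with diagonal entries $\#\strembs{H_i}{H_i} = |\auts{H_i}| > 0$, where the vanishing off-diagonal blocks on equal-size pairs comes from pairwise non-isomorphicity, and the partition/edge-superset transformations contribute further invertible triangular change-of-basis factors. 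This yields invertibility of $M$. Finally, to close the reduction, $|Q|$ is a computable function of $|V(H^*)|$ alone (fixed by the enumeration of $\mathcal{Q}$), so each oracle query satisfies the required parameter bound $s(|V(H^*)|)$, and each tensor product $G \times H_j$ has size at most $|V(G)| \cdot |Q|$, giving an overall fpt running time.
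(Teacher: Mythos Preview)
Your proposal is correct and follows essentially the same route as the paper (and \cite{CurticapeanDM17}): the easy direction is the trivial unfolding of \eqref{eq:qgraphs}, and the hard direction tensors $G$ with suitable graphs, exploits multiplicativity of $\#\homs{H}{\cdot}$ under $\times$, and solves the resulting linear system via \lovasz' non-singularity result.

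One caveat on your sketch of the \lovasz\ invertibility fact: the triangular factorisation you describe (via \eqref{eq:hom_to_emb}, \eqref{eq:emb_to_stremb}) passes through \emph{all} spasms and edge-supergraphs of the $H_i$, not just the $m$ constituents themselves, so the ``change-of-basis factors'' are not $m\times m$ matrices over $\{H_1,\dots,H_m\}$ and do not directly witness invertibility of your $M$. The result that $(\#\homs{H_i}{H_j})_{i,j}$ is non-singular for \emph{any} finite family of pairwise non-isomorphic graphs is nonetheless true---it is the content of \cite[Proposition~5.44(b)]{Lovasz12}---but its proof requires a slightly more careful argument than the one you outline. Since the paper itself treats this as a black box from \cite{Lovasz12,CurticapeanDM17}, this does not affect the validity of your reduction.
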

The reduction
$\#\homsprob{\mathcal{Q}}{\top} \fptred \#\homsprob{\mathsf{supp}(\mathcal{Q})}{\top}$ is trivial:
Given a quantum graph $Q$ and a graph $G$, we can compute the number $\#\homs{Q}{G}$ as given by
Equation~\eqref{eq:qgraphs}.
However, the other direction relies on a deep theory of
\lovasz~\cite[Chapters~5 and~6]{Lovasz12} and is given by the following lemma.
\begin{lemma}[Lemma~3.6 in~\cite{CurticapeanDM17}]\label{lem:monotonicity}
    Let $Q$ be a quantum graph.
    There is a deterministic algorithm $\mathbb{A}$ that is given oracle access to
    $\#\homs{Q}{\star}$ and that, on input a graph $G$, computes the number $\#\homs{H}{G}$ for
    every constituent $H$ of $Q$. Further, there are computable functions $f$ and $s$ such
    that the running time of $\mathbb{A}$ is bounded by $f(|Q|)\cdot |V(G)|^{O(1)}$ and the number
    of vertices of every graph $G'$ for which the oracle is queried,
    is bounded by $s(|Q|)\cdot |V(G)|$.\lipicsEnd
\end{lemma}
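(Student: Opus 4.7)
The plan is to reduce the problem of recovering each individual count $\#\homs{H_i}{G}$, where $H_i$ ranges over $\supp(Q)$, to solving one linear system whose coefficient matrix depends only on $Q$ and whose right-hand side is obtained from $|\supp(Q)|$ oracle queries to $\#\homs{Q}{\star}$. The reduction rests entirely on the multiplicativity of $\#\homs{H}{\cdot}$ under the tensor product, already recorded in the excerpt.

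Enumerate $\supp(Q) = \{H_1,\ldots,H_k\}$ and introduce unknowns $x_i := \lambda_{H_i}\cdot \#\homs{H_i}{G}$. For any test graph $A$, one oracle call at $G\times A$ returns
\[
  \#\homs{Q}{G\times A} \;=\; \sum_{i=1}^k \lambda_{H_i}\cdot \#\homs{H_i}{G}\cdot \#\homs{H_i}{A}
  \;=\; \sum_{i=1}^k \#\homs{H_i}{A}\cdot x_i.
\]
Thus, if we can fix graphs $A_1,\ldots,A_k$ depending only on $Q$ such that the $k\times k$ matrix $M$ defined by $M_{j,i}:=\#\homs{H_i}{A_j}$ is invertible, then $k$ oracle queries produce a vector $v\in\mathbb{Z}^k$ with $Mx=v$, and $\#\homs{H_i}{G}=x_i/\lambda_{H_i}$ can be recovered for every $i$ by solving this system.

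The main obstacle, and the heart of the argument, is producing such $A_j$'s from $Q$ alone together with a computable upper bound $s_0(|Q|)$ on $|V(A_j)|$. For existence, I would invoke a standard consequence of the theory of graph homomorphisms developed by \lovasz~\cite{Lovasz12}: for pairwise non-isomorphic graphs $H_1,\ldots,H_k$, the functionals $A\mapsto \#\homs{H_i}{A}$ are linearly independent on the space of all graphs, so some $k$-tuple $(A_1,\ldots,A_k)$ making $M$ invertible must exist. To convert existence into an effective size bound, I would enumerate graphs $A$ in order of size and greedily collect rows that increase the rank of the matrix built so far; since a suitable $k$-tuple exists, this procedure terminates after a number of enumeration steps that is a computable function of $|Q|$, yielding the required bound $|V(A_j)|\leq s_0(|Q|)$.

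Once the $A_j$'s are fixed, each queried graph $G\times A_j$ has at most $s_0(|Q|)\cdot|V(G)|$ vertices, matching the stated bound on query sizes. Computing the $A_j$'s and the entries of $M$ takes time bounded by a computable function of $|Q|$; solving the $k\times k$ linear system by Gaussian elimination takes time polynomial in $k$ and in the bit-length of the entries; and forming each tensor product $G\times A_j$ costs $|V(G)|^{O(1)}$. Combining these ingredients yields the required running time of $f(|Q|)\cdot|V(G)|^{O(1)}$.
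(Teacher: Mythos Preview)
Your proposal is correct and follows essentially the same approach that the paper (via its proof of \cref{lem:f_col_monotonicity} and its reference to~\cite{CurticapeanDM17}) uses: query the oracle on tensor products $G\times A_j$, exploit multiplicativity of $\#\homs{H}{\cdot}$ to obtain a linear system in the unknowns $\lambda_{H_i}\cdot\#\homs{H_i}{G}$, invoke \lovasz's linear independence of the functionals $\#\homs{H_i}{\star}$ to guarantee an invertible coefficient matrix, and solve by Gaussian elimination. Your added discussion of effectively bounding the sizes of the $A_j$'s by greedy enumeration is a reasonable way to make the computability claim explicit, but otherwise there is no substantive difference.
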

In \cref{sec:fcol} we show that the previous lemma readily extends to the problem
$\#\homsprob{\mathcal{Q}}{\mathcal{G}_F}$, where $\mathcal{G}_F$ is the set of all $F$-colorable
graphs for some fixed graph $F$.

\section{Counting and Finding Homomorphisms is Universal}\label{sec:ladnersection}
\begin{figure}
    \centering
    \begin{tikzpicture}[scale=.85,transform shape]
        \pic at (0,0) {kneser=5/1};
        \pic at (4.65,0) {kneser=5/2};
        \pic[edge fix=2] at (10.75,0) {kneser=6/2};
    \end{tikzpicture}
    \caption{Examples for Kneser graphs: The Kneser graphs $\K(5,1)$, $\K(5,2)$, and $\K(6,2)$;
    the graph~$\K(5,2)$ is also known as the \emph{Petersen graph}.
    The subset corresponding to each vertex is indicated by white and black dots.}\label{fig:kne}
\end{figure}

In this part of the paper, we show that every parameterized counting problem in $\#\W{1}$ is
interreducible with a problem $\#\homsprob{\mathcal{H}}{\mathcal{G}}$ with respect to parameterized
Turing-reductions.
Further, the proof shows that the analogous statement holds for (parameterized) decision problems
in $\W{1}$ and a problem $\homsprob{\mathcal{H}}{\mathcal{G}}$.
The starting point is the following lemma; it follows from the standard hardness proof of
$\#\homsprob{\mathcal{H}}{\top}$ for classes~$\mathcal{H}$ of unbounded treewidth.
We provide an exposition of the proof in \cref{sec:hom_hard_strategy}---see \cref{lem:help_hardness}.

\begin{lemma}\label{lem:parsihard}
    Let $\mathcal{H}$ be a computable class of connected cores of unbounded treewidth.
    Then the problem $\#\homsprob{\mathcal{H}}{\top}$ is $\#\W{1}$-hard under
    parameterized weakly parsimonious reductions.
    In particular, the images of the reductions can be assumed to contain only pairs~$(H,G)$
    such that $H\in \mathcal{H}$ and $G$ is connected and $H$-colorable.\lipicsEnd
\end{lemma}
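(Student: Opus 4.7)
The plan is to reduce the $\#\W{1}$-complete problem $\#\cc{MulticoloredClique}$ to $\#\homsprob{\mathcal{H}}{\top}$ via a single-query (weakly parsimonious) reduction, arranging the target graph to be both connected and $H$-colorable. The high-level strategy mirrors the classical Dalmau--Jonsson argument, but I would exploit the core property assumed in $\mathcal{H}$ to eliminate the need for multiple oracle queries (or inclusion--exclusion) that typically appear in such reductions.

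Given the parameter $k$ of the source problem, I would first select a host graph $H = H_k \in \mathcal{H}$. Combining the unbounded treewidth of $\mathcal{H}$ with the Excluded Grid Theorem, there is a computable function $\tau$ such that every graph of treewidth at least $\tau(k)$ contains a $(k\times k)$-grid minor $\Gamma$ with branch sets of bounded size. Since $\mathcal{H}$ is recursive, naive enumeration produces such an $H$ in time depending only on $k$, so $|V(H)|$ is a computable function of $k$ and the size condition of a parameterized reduction is satisfied.

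For the gadget, given an instance $(G, V_1 \cup \dots \cup V_k)$, I would construct an $H$-colored graph $(G^*, c)$ with $c\colon V(G^*) \to V(H)$ so that $\#\cphoms{H}{G^*}$ equals the number of multicolored $k$-cliques in $G$, up to a factor $c_{H,k}$ depending only on $H$ and $k$. Each branch set of $\Gamma$ is blown up into a gadget that selects a vertex of the corresponding $V_i$, and the grid adjacencies enforce the edges of $G$; vertices of $H$ outside $\Gamma$ are replaced by trivial identity gadgets that carry no combinatorial content. By construction $G^*$ is $H$-colorable (witnessed by $c$); since $H$ is connected and the gadgets are carried along its edges, $G^*$ is connected too.

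Finally, to turn the color-prescribed count into an uncolored one with only a multiplicative correction, I would use that $H$ is a core. By \cref{lem:coraut}, every endomorphism of $H$ is an automorphism. Hence, for every $h \in \homs{H}{G^*}$ the composition $\sigma := c \circ h$ lies in $\auts{H}$, and $h$ is color-prescribed with respect to the recolored coloring $\sigma^{-1} \circ c$. Right-composing with automorphisms produces a bijection between the color-prescribed sets for any two such recolorings, which yields
$$\#\homs{H}{G^*} \;=\; |\auts{H}| \cdot \#\cphoms{H}{G^*} \;=\; |\auts{H}| \cdot c_{H,k} \cdot \#\cc{MCC}(G,k),$$
so $g := 1/(|\auts{H}| \cdot c_{H,k})$ is the multiplicative function of the weakly parsimonious reduction, and it is computable from $k$ alone. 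The main obstacle is the rigidity of the gadget in the third paragraph: the branch sets of the grid minor inside $H$ must be handled so that no spurious homomorphisms arise, while simultaneously preserving connectedness and $H$-colorability. Both the core assumption and the connectedness assumption on $\mathcal{H}$ are essential here, the first to guarantee the clean $|\auts{H}|$ factor and the second to thread the gadgets along the edges of $H$ without disconnecting $G^*$.
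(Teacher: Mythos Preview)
Your proposal is correct and follows essentially the same route as the paper's proof (given in \cref{sec:hom_hard_strategy}, item~(2) of \cref{lem:help_hardness}): reduce from clique counting via the Excluded Grid Theorem to color-prescribed homomorphisms from some $H_k\in\mathcal{H}$, and then use the core property of $H_k$ to collapse $\#\cphoms{H_k}{\hat G}$ to $\#\homs{H_k}{\hat G}\cdot\#\auts{H_k}^{-1}$, yielding a single-query weakly parsimonious reduction. The only cosmetic differences are that the paper starts from $\#\clique$ (incurring a $k!$ factor) rather than $\#\cc{MulticoloredClique}$, and that the paper secures connectedness of $\hat G$ by first passing to connected instances of clique (\cref{lem:connclique}); your assertion that connectedness of $G^*$ follows from connectedness of $H$ alone is slightly too quick, since the fibers over the grid branch sets can become disconnected if the source graph $G$ is, so you should likewise assume the source instance is connected.
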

The remainder of this section is devoted to the proof of the following theorem.
\begin{theorem}\label{thm:universal_turing_red}
    Let $(F,\kappa)$ denote a problem in $\#\W{1}$.
    There are classes $\mathcal{H}$ and $\mathcal{G}$ such that
    \[(F,\kappa) \interred \#\homsprob{\mathcal{H}}{\mathcal{G}}.\]
    Further, $\mathcal{H}$ is recursively enumerable and $\mathcal{G}$ is
    recursive.\lipicsEnd
\end{theorem}
In particular, we show that for any problem $(F, \kappa)$ in $\#\W{1}$, we can construct graph
classes~$\mathcal{H}$ and~$\mathcal{G}$ such that we have for any graphs $H \in \mathcal{H}$
and $G \in \mathcal{G}$:
\begin{itemize}
    \item If $\#\homs{H}{G}\neq 0$, the pair $(H,G)$ corresponds to exactly one instance
        $x$ of the problem $(F,\kappa)$, and we can obtain both $x$ and $\kappa(x)$ from
        $H$ and $G$.
    \item If $\#\homs{H}{G}=0$, the pair $(H,G)$ does not correspond to an instance
        of $(F,\kappa)$.
\end{itemize}

\paragraph*{Counting Homomorphisms Between Kneser Graphs}

By \cref{lem:parsihard}, the problem $\#\homsprob{\mathcal{H}}{\top}$ is $\#\W1$-hard
for {\em any} (recursively enumerable) class of connected cores
(of unbounded treewidth). In particular, it is known
that this is the case for the
class of {\em Kneser graphs}. Further, Kneser graphs have other nice properties which we
exploit in the proof of \cref{thm:universal_turing_red}.
Formally, we start with the following definition; consider also \cref{fig:kne} for examples of
Kneser graphs.

\begin{definition}[see for instance\ {\cite[Chapter~3]{HahnT97}}]
    Given integers $r$ and $s$, the \emph{Kneser graph} $\K(r,s)$ is the graph that has as vertices
    the subsets of size $s$ of $[r]$ and edges between two vertices if the corresponding sets are
    disjoint.\lipicsEnd
\end{definition}

Given a number $n \geq 3$, we set $\K(n) := \K((2n+1)(n-2), n(n-2))$.
With this choice of the parameters~$r$ and~$s$, we can use the following results;
recall that the chromatic number of a graph $G$ is the minimum $k$ such that $G$ allows a
homomorphism to the complete graph of size $k$, and the odd girth of a graph is the length of the
smallest cycle of odd length.
\begin{fact}[\cite{Lovasz78} and~Propositions~3.13,~3.14 in~\cite{HahnT97}] \label{fac:chromoddgirth}
    The graph $\K(n)$ has chromatic number~$n$ (and thus a treewidth of at least $n-1$)
    and odd girth $2n+1$.
    Further, the graph $\K(n)$ is a core, that is, the graph $\K(n)$ is minimal with respect
    to homomorphic equivalence.
\end{fact}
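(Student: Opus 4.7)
The plan is to verify each of the three properties for $\K(n) = \K((2n+1)(n-2), n(n-2))$ by invoking the classical theory of Kneser graphs. Setting $r := (2n+1)(n-2)$ and $s := n(n-2)$, I would first compute the chromatic number using \lovasz' theorem~\cite{Lovasz78}, which asserts $\chi(\K(r,s)) = r - 2s + 2$ whenever $r \geq 2s$. A direct calculation gives
\[
r - 2s + 2 = (n-2) \cdot [(2n+1) - 2n] + 2 = (n-2) + 2 = n,
\]
so $\chi(\K(n)) = n$ as claimed.

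For the odd girth, I would prove both an upper bound and a matching lower bound of $2n+1$. For the upper bound, I partition $[r]$ into $2n+1$ blocks $B_0, \dots, B_{2n}$ of size $n-2$ each, and for $i \in \mathbb{Z}/(2n+1)\mathbb{Z}$ I define
\[
V_i := B_i \cup B_{i+1} \cup \cdots \cup B_{i+n-1},
\]
a set of size exactly $s$. Two such sets $V_i$ and $V_j$ are disjoint whenever $j - i \equiv n \pmod{2n+1}$, so the sequence $V_0, V_n, V_{2n}, \dots, V_{2n \cdot n}$ (indices taken modulo $2n+1$) traces a closed walk in $\K(n)$. Since $\gcd(n, 2n+1) = 1$, this walk visits $2n+1$ distinct vertices before returning to $V_0$, yielding an odd cycle of length $2n+1$. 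For the matching lower bound, I would invoke the standard formula of~\cite{HahnT97} stating that the odd girth of $\K(r,s)$ equals $2\lceil s/(r-2s)\rceil + 1$ for $r > 2s$; with $r - 2s = n - 2$ and $s = n(n-2)$ this evaluates to $2n + 1$. Proving this bound from scratch would use a convexity argument: for any odd closed walk $A_0, \dots, A_\ell = A_0$ with consecutive sets disjoint, the sum $\sum_i \chi_{A_i}$ is bounded coordinatewise by $\lfloor \ell/2 \rfloor$, and comparing this with $\sum_i |A_i| = \ell s$ forces $\ell \geq 2\lceil s/(r-2s) \rceil + 1$.

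Finally, for the core property, I would use the classical result---worked out in Chapter~3 of~\cite{HahnT97}---that $\K(r,s)$ is a core whenever $r \geq 2s+1$. The key observation is that $\K(n)$ is vertex-transitive under the natural action of the symmetric group on $[r]$, so by \cref{lem:coraut} it suffices to show that every endomorphism $\phi \colon \K(n) \to \K(n)$ is surjective on vertices. If the image $H' := \phi(\K(n))$ were a proper subgraph of $\K(n)$, then $H'$ would still satisfy $\chi(H') = n$ by monotonicity of the chromatic number under homomorphisms, yet $H'$ would miss at least one vertex; combining this with the rigid structure of optimal $n$-colorings of $\K(n)$---implicit in the equality case of \lovasz' theorem---yields a contradiction. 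The hardest step is exactly this analysis of endomorphisms: while the chromatic number drops out of \lovasz' theorem as a black box and the odd girth follows from a short combinatorial computation, ruling out non-surjective endomorphisms requires careful bookkeeping on the structure of optimal colorings and their interplay with the full symmetric group symmetry of $\K(n)$.
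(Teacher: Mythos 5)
The paper gives no proof of this fact at all---it is quoted directly from \lovasz~\cite{Lovasz78} and Propositions~3.13--3.14 of~\cite{HahnT97}---so your write-up is already more explicit than the source. Your first two parts are correct: with $r=(2n+1)(n-2)$ and $s=n(n-2)$ one has $r-2s=n-2$, so $\chi(\K(n))=r-2s+2=n$ by \lovasz' theorem, and your odd-girth argument is sound in both directions (the explicit $(2n+1)$-cycle through the sets $V_0,V_n,V_{2n},\dots$ works because $\gcd(n,2n+1)=1$, and the counting argument for the lower bound is the standard one: the indices $i$ with $x\in A_i$ form an independent set in $C_\ell$, giving $\ell s\le r\lfloor \ell/2\rfloor$ and hence $\ell\ge 2s/(r-2s)+1=2n+1$).

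The gap is in your sketch for the core property. The route via chromatic-number rigidity cannot be made to work: it would need every endomorphic image that misses a vertex to have chromatic number below $n$, i.e.\ something like vertex-criticality of $\K(r,s)$, and Kneser graphs are \emph{not} vertex-critical for $s\ge 2$. Concretely, the Schrijver graph (the subgraph induced by the stable $s$-subsets of the cyclically ordered ground set) is a proper induced subgraph of $\K(r,s)$ with the same chromatic number $r-2s+2$, so from ``$\phi$ is not surjective'' and ``$\chi(\mathrm{im}\,\phi)=n$'' no contradiction follows, and \lovasz' theorem has no ``equality case'' supplying the coloring rigidity you invoke---optimal colorings of $\K(r,s)$ are far from unique. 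The classical proof that $\K(r,s)$ is a core for $r\ge 2s+1$ (Hahn--Tardif; Godsil--Royle, Theorem~7.9.1) runs through independent sets instead of colorings: by the Erd\H{o}s--Ko--Rado theorem the maximum independent sets are exactly the stars $S_i$, and since the graph is vertex-transitive the preimage of a maximum independent set under an endomorphism $\phi$ is again a maximum independent set, so $\phi^{-1}(S_i)=S_{\sigma(i)}$ for some map $\sigma$ on $[r]$; then $i\in\phi(A)\Leftrightarrow\sigma(i)\in A$ forces $\phi$ to be induced by a permutation of $[r]$, hence an automorphism. Quoting the core property as a black box (as the paper does) is perfectly fine; but the ``hardest step'' as you sketch it would fail.
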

Note that by \cref{lem:coraut}, the graph $\K(n)$ being a core implies that every endomorphism
of~$\K(n)$ is already an automorphism.
Hence, the number of homomorphisms from the graph~$\K(n)$ to itself is precisely the number of automorphisms $\#\auts{\K(n)}$.

\begin{fact}[Folklore, see for instance\ \cite{VJ05}]
    The graph $\K(r,s)$ is connected if $r > 2s$.\\
    Hence, $\K(n)$ is connected.\lipicsEnd
\end{fact}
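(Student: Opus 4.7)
The fact has two parts: the main claim that $\K(r,s)$ is connected whenever $r > 2s$, and the corollary that $\K(n) = \K((2n+1)(n-2), n(n-2))$ is connected. The corollary reduces instantly to the main claim once one computes $r - 2s = (2n+1)(n-2) - 2n(n-2) = n - 2$, which is positive for all $n \geq 3$ by the standing assumption on $n$.

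For the main claim, the plan is to exhibit a path in $\K(r,s)$ between any two $s$-subsets $A, B \subseteq [r]$. If $A \cap B = \emptyset$, then $A$ and $B$ are adjacent and we are done. Otherwise, I would split into two subcases. In the easy subcase $r \geq 3s$, the set $[r] \setminus (A \cup B)$ has size at least $s$, so any $s$-subset $C$ of it yields a path $A - C - B$ in $\K(r,s)$ of length two. In the harder subcase $2s < r < 3s$, I would construct a path by an iterative ``swap'' procedure: set $A_0 = A$, and at step $i$ choose a neighbor $A_{i+1} \subseteq [r] \setminus A_i$ by first placing all of $B \setminus A_i$ into $A_{i+1}$ and then filling the remaining slots from $[r] \setminus (A_i \cup B)$. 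The hypothesis $r > 2s$ is precisely what guarantees that enough fresh elements are always available to carry out the fill-up. A bookkeeping argument that tracks the size of $B \setminus A_i$ then shows that within $O\!\bigl(s/(r-2s)\bigr)$ steps one reaches some $A_k$ disjoint from $B$, and one further step connects $A_k$ to $B$.

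The main obstacle will be the bookkeeping in the hard subcase: one has to certify that the greedy choice of $A_{i+1}$ strictly reduces (or, at worst, keeps under control) a suitable progress measure, so that the procedure is guaranteed to terminate at a vertex disjoint from $B$. Since the paper labels the claim as folklore, I would keep the exposition compact and, if the bookkeeping becomes cumbersome, appeal to the well-known diameter bound $\lceil (s-1)/(r-2s)\rceil + 1$ for Kneser graphs with $r > 2s$, which suffices to conclude connectivity and is documented in standard references on Kneser graphs, e.g.\ \cite{VJ05}.
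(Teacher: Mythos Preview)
The paper does not supply a proof of this fact; it is stated with a citation to~\cite{VJ05} and used as a black box. Your derivation of the corollary for $\K(n)$ is correct, and your easy subcase $r\ge 3s$ is fine.

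However, your iterative procedure in the hard subcase $2s<r<3s$ does not terminate as claimed. With $A_{i+1}=(B\setminus A_i)\cup X_i$ for some $X_i\subseteq[r]\setminus(A_i\cup B)$, one has $A_{i+1}\cap B = B\setminus A_i$ exactly, and hence $|A_{i+1}\cap B|=s-|A_i\cap B|$. Starting from $|A_0\cap B|=t$ with $0<t<s$, the sequence of intersection sizes is $t,\,s-t,\,t,\,s-t,\dots$, which never reaches $0$. The ``bookkeeping'' you anticipate being merely cumbersome is in fact impossible for this particular greedy: the procedure oscillates rather than progresses.

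A cleaner route, which avoids the case split entirely, is to reduce to the situation where $A$ and $B$ differ in a single element. Any two $s$-subsets of $[r]$ are linked by a chain of single-element swaps, so it suffices to connect $A$ and $B$ when $|A\cup B|=s+1$. In that case $|[r]\setminus(A\cup B)|=r-s-1\ge s$ (this is exactly the hypothesis $r\ge 2s+1$), so any $s$-subset $C\subseteq[r]\setminus(A\cup B)$ is a common neighbour of $A$ and $B$, giving the path $A$--$C$--$B$. Your fallback to the diameter formula from~\cite{VJ05} would also work, but at that point you are citing the result rather than proving it.
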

An important property of Kneser graphs is the well-known
fact that they constitute an antichain with respect to the homomorphism order.
We provide a proof for convenience.
\begin{lemma}\label{lem:kneser_magic}
    Let $n$ and $m$ be distinct positive integers. Then we have
    \[\#\homs{\K(m)}{\K(n)} = 0.\]
\end{lemma}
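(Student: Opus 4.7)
The plan is to prove the contrapositive in two cases, depending on the sign of $m-n$, and to leverage the two graph-parameter properties of $\K(n)$ supplied by \cref{fac:chromoddgirth}: the chromatic number $\chi(\K(n)) = n$ and the odd girth $\operatorname{og}(\K(n)) = 2n+1$. Both parameters behave monotonically under homomorphisms, which will give an immediate contradiction in each case. Since $\#\homs{\K(m)}{\K(n)} = 0$ is equivalent to $\homs{\K(m)}{\K(n)} = \emptyset$, it suffices to rule out the existence of a single homomorphism.

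First I would handle the case $m > n$. Suppose for contradiction that some homomorphism $h \colon \K(m) \to \K(n)$ exists. By \cref{fac:chromoddgirth}, $\chi(\K(n)) = n$, so there is a proper $n$-coloring of $\K(n)$, i.e.\ a homomorphism $c \colon \K(n) \to K_n$. The composition $c \circ h$ is then a homomorphism $\K(m) \to K_n$, which exhibits a proper $n$-coloring of $\K(m)$. This contradicts $\chi(\K(m)) = m > n$, again from \cref{fac:chromoddgirth}.

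Next I would handle the case $m < n$. Since $m \geq 1$ we actually need $m \geq 3$ here for the odd-girth argument to apply directly; for the edge cases $m \in \{1,2\}$ the graphs $\K(m)$ are by construction connected graphs of positive chromatic number, and one can either run the chromatic number argument again ($\chi(\K(m)) = m$, which is fine, but the desired direction is different) or simply invoke that $\K(1), \K(2)$ still contain an odd cycle of length equal to their odd girth $2m+1$. Assuming a homomorphism $h \colon \K(m) \to \K(n)$, pick any odd cycle $C$ in $\K(m)$ of length $2m+1$; such a $C$ exists by \cref{fac:chromoddgirth}. The image $h(C)$ is a closed walk of odd length $2m+1$ in $\K(n)$, and any closed odd walk contains an odd cycle of no greater length. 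Hence $\K(n)$ contains an odd cycle of length at most $2m+1 < 2n+1$, contradicting $\operatorname{og}(\K(n)) = 2n+1$.

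Both cases yield contradictions, so no homomorphism $\K(m) \to \K(n)$ exists whenever $n \neq m$, which gives $\#\homs{\K(m)}{\K(n)} = 0$. I do not anticipate a genuine technical obstacle: the whole argument is essentially a two-line application of the well-known monotonicity of $\chi$ and $\operatorname{og}$ under homomorphisms, once \cref{fac:chromoddgirth} is in hand. The only subtlety worth double-checking is the odd-girth monotonicity lemma (that a closed odd walk forces an odd cycle of no greater length), but this is a standard folklore fact that can be stated inline without proof.
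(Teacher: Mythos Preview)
Your proposal is correct and follows essentially the same approach as the paper: both arguments split on whether $m>n$ or $m<n$ and invoke the monotonicity of the chromatic number and of the odd girth under homomorphisms, combined with \cref{fac:chromoddgirth}. Your digression about the edge cases $m\in\{1,2\}$ is unnecessary, since in the paper $\K(n)$ is only defined for $n\geq 3$.
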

\begin{proof}
    For every pair of graphs $H$ and $G$ with $\#\homs{H}{G}\neq 0$, we have that the odd girth
    of $H$ is bounded from below by the odd girth of $G$~\cite[Exercise~1.10.2]{HellN04}.
    Further, the chromatic number of $H$ is bounded from above by the chromatic number of
    $G$~\cite[Proposition~1.8]{HellN04}. The lemma hence holds by \cref{fac:chromoddgirth}.
\end{proof}
Now, let $\mathcal{K}_{even}$ denote the set of all graphs $\K(n)$ with even $n$
and let $\mathcal{K}_{odd}$ denote the set of all graphs $\K(n)$ with odd $n$.

\paragraph*{Encoding Problems into Graphs Classes}

A central tool for the proof of \cref{thm:universal_turing_red} is an encoding of arbitrary
strings into graphs, which we discuss next. In particular, we use a disjoint union of paths
for the encoding.
To this end, let $P_i$ be the path with $i$ edges.
Given a string $x = x[1]x[2]\cdots x[n] \in \{0,1\}\ast$, we define $\enc(x)$ to be the graph
that is the disjoint union of paths $P_i$ for all $i\leq |x|$ with $x_i = 1$, as well as
of $|x|$ isolated vertices. Consider \cref{fig:strenc} for a visualization.

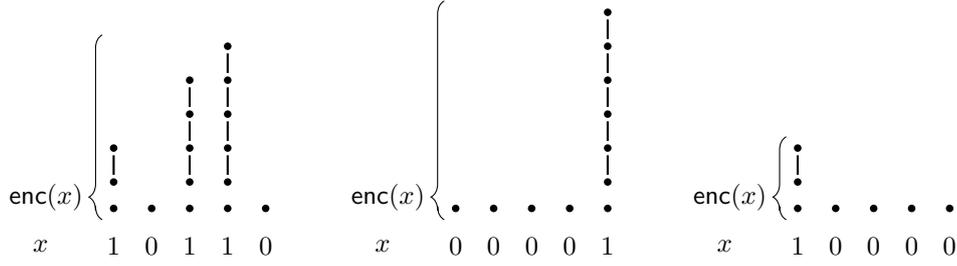
\begin{figure}
    \centering
    \begin{tikzpicture}[scale=.9]
        \pic at (0,0) {encode=5/10110/1/.5};
        \pic at (5,0) {encode=5/1/1/.5};
        \pic at (10,0) {encode=5/10000/1/.5};
    \end{tikzpicture}
    \caption{Examples of strings and their encoding into a graph.}\label{fig:strenc}
\end{figure}

Next, we show how to use the encoding $\enc$ to (reversibly) encode an instance of an
arbitrary problem in $\#\W{1}$ into a pair of graphs. To that end, let $(F,\kappa)$
denote any problem in $\#\W{1}$.
By \cref{lem:parsihard,fac:chromoddgirth}, we have that $(F,\kappa)\parsired\#\homsprob{\mathcal{K}_{even}}{\top}$,
that is, there is an algorithm $\mathbb{A} = \mathbb{A}_F$ and a triple $(f,g,s)$ of computable
functions such that for all strings $x \in \{0,1\}\ast$ all of the following holds:
\begin{enumerate}[(a)]
    \item The algorithm $\mathbb{A}$ computes a pair of graphs $\mathbb{A}(x)=(H_x,G_x)$,
        where $H_x \in \mathcal{K}_{even}$ and the graph $G_x$ is connected and
        $H_x$-colorable.\label{enp1}
    \item The answer to the instance $x$ of the problem $(F,\kappa)$ can be computed as
        \[F(x)=g(x)\cdot \#\homs{H_x}{G_x}.\] \label{enp2}
    \item The problem $(g,\kappa)$ is fixed-parameter tractable, that is
        it is solvable in time
        $f'(\kappa(x))\cdot{|x|}^{O(1)}\!$.\label{enp3}
    \item The algorithm $\mathbb{A}$ runs in time $f(\kappa(x))\cdot |x|^{O(1)}\!$.\label{enp4}
    \item The size of the computed graph $|V(H_x)|$ is at most $s(\kappa(x))$.\label{enp5}
\end{enumerate}

Now, let an instance $x$ to $(F,\kappa)$ be given.
Using the graphs $\mathbb{A}(x) = (H_x,G_x)$ computed by the algorithm $\mathbb{A}$,
we construct a pair of new graphs, which additionally encodes the original instance $x$
as well as its parameter $\kappa(x)$ by setting
\begin{align}
    \hat{H}_x &:= H_x \cup \K(2\kappa(x)+3),\text{ and }\label{eqn:k1}\\
    \hat{G}_x &:= G_x \cup \enc(\langle x, H_x \rangle) \cup \K(2\kappa(x)+3),\label{eqn:k2}
\end{align}
where $\langle x, H_x \rangle$ is any efficient encoding of the pair $(x,H_x)$.
We proceed to show that the constructed graphs $\hat{H}_x$ and $\hat{G}_x$ behave as intended;
also consider \cref{fig:24} for a visualization.

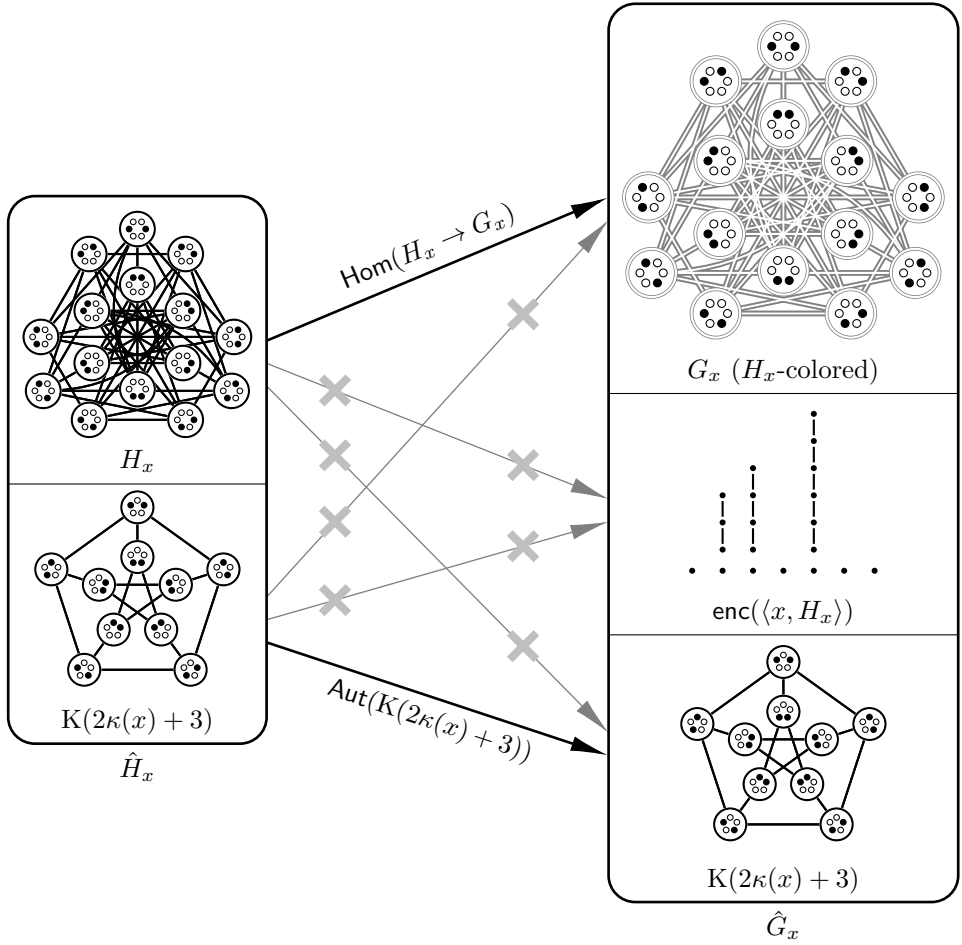
\begin{figure}[t]
    \centering
    \tikzset{
        kns/.pic = {
            \begin{scope}[scale=.5, every node/.append style={transform shape}]
                \pic[edge fix=2] {kneser=6/2};
            \end{scope}
        }
    }
    \tikzset{
        knsb/.pic = {
            \begin{scope}[scale=.7, every node/.append style={transform shape}]
                \pic[white,fill=gray,edge fix=2] {kneser=6/2};
            \end{scope}
        }
    }

    \tikzset{
        knns/.pic = {
            \begin{scope}[scale=.5, every node/.append style={transform shape}]
                \pic {kneser=5/2};
            \end{scope}
        }
    }
    \begin{tikzpicture}

        \draw[line width=1pt,-{Latex[length=5mm, width=2mm]}] (1.7,1.9) --node[above,midway,sloped]
        {$\homs{H_x}{G_x}$} (8.5-2.3,3.8);

        \draw[gray,line width=.5pt,-{Latex[length=5mm, width=2mm]}] (1.7,1.6) --
        node[gray!50,cross=8pt, line width=3pt,pos=.2]{}
        node[gray!50,cross=8pt, line width=3pt,near end]{}
        (8.5-2.3,-.2);
        \draw[gray,line width=.5pt,-{Latex[length=5mm, width=2mm]}] (1.7,1.3) --
        node[gray!50,cross=8pt, line width=3pt,pos=.2]{}
        node[gray!50,cross=8pt, line width=3pt,near end]{}
        (8.5-2.3,-3.3);
        \draw[gray,line width=.5pt,-{Latex[length=5mm, width=2mm]}] (1.7,-1.5) --
        node[gray!50,cross=8pt, line width=3pt,pos=.2]{}
        node[gray!50,cross=8pt, line width=3pt,near end]{}
        (8.5-2.3,3.5);
        \draw[gray,line width=.5pt,-{Latex[length=5mm, width=2mm]}] (1.7,-1.8) --
        node[gray!50,cross=8pt, line width=3pt,pos=.2]{}
        node[gray!50,cross=8pt, line width=3pt,near end]{}
        (8.5-2.3,-.5);

        \draw[line width=1pt,-{Latex[length=5mm, width=2mm]}] (1.7,-2.1) --node[below,
        midway,sloped]{$\auts{\K(2\kappa(x)+3)}$}  (8.5-2.3,-3.6);

        \begin{scope}[local bounding box=a]
            \node at (0, 3.7) {};
            \pic at (0,1.95) {kns};
            \node at (0, .3) {$H_x$};
            \pic at (0,-1.5) {knns};
            \node at (0, -3.15) {$\K(2\kappa(x)+3)$};
            \draw (-1.7,0) -- (1.7,0) {};
            \node at (-1.4,0) {};
            \node at (1.4,0) {};
        \end{scope}
        \draw[rounded corners=10pt,line width=1pt] (a.north east) rectangle (a.south west);
        \node[yshift=-2ex] (a) at (a.south) {$\hat{H}_x$};

        \begin{scope}[local bounding box=b, shift={(8.5,0)}]
            \node at (0,6.25) {};
            \pic at (0,3.8) {knsb};
            \node at (0, 1.5) {$G_x$ ($H_x$-colored)};
            \pic[scale=.8,transform shape] at (0-1.6,-1.15) {encode=7/110100/0/.5};
            \node at (0, -1.7) {$\enc(\langle x, H_x\rangle)$};
            \pic at (0,-3.55) {knns};
            \node at (0, -5.25) {$\K(2\kappa(x)+3)$};

            \draw (-2.3,1.2) -- (2.3,1.2);
            \draw (-2.3,-2) -- (2.3,-2);
            \node at (-1.9,0) {};
            \node at (1.9,0) {};
        \end{scope}

        \draw[rounded corners=10pt,line width=1pt] (b.north east) rectangle (b.south west);
        \node[yshift=-2ex] (a) at (b.south) {$\hat{G}_x$};

    \end{tikzpicture}
    \caption{\Cref{lem:mainp1} illustrated. A cross denotes that no homomorphisms between
        the parts of the graphs exist. Note that the Kneser graphs used in the lemma differ
    from the ones depicted.}\label{fig:24}
\end{figure}

\begin{lemma}\label{lem:mainp1}
    Given a problem $(F,\kappa)$ and an instance $x$ to $(F,\kappa)$, let
    the graphs $\hat{H}_x$ and $\hat{G}_x$ be defined as in \cref{eqn:k1,eqn:k2}.
    Then we have that
    \begin{enumerate}
        \item The graph $\hat{G}_x$ is $\hat{H}_x$-colored and
        \item The number of homomorphisms from $\hat{H}_x$ to $\hat{G}_x$ is given by
            \[\#\homs{\hat{H}_x}{\hat{G}_x} = \#\homs{H_x}{G_x} \cdot \#\auts{\K(2\kappa(x)+3)}.\]
    \end{enumerate}
\end{lemma}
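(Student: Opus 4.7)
The plan is to first establish part~(1) by explicitly exhibiting an $\hat{H}_x$-coloring of $\hat{G}_x$, and then to obtain part~(2) via a case analysis: since both $\hat{H}_x$ and $\hat{G}_x$ are disjoint unions of their connected components, every homomorphism $\hat{H}_x\to\hat{G}_x$ decomposes into independent choices of a target component for each of the two components $H_x$ and $\K(2\kappa(x)+3)$ of $\hat{H}_x$, and the bulk of the argument consists of ruling out every combination of components except the two diagonal ones.

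For part~(1), I would build the desired homomorphism $\hat{G}_x \to \hat{H}_x$ piecewise on the three connected pieces of $\hat{G}_x$: the subgraph $G_x$ maps into $H_x$ via the $H_x$-coloring guaranteed by property~(a) of the reduction $\mathbb{A}$; the subgraph $\K(2\kappa(x)+3)$ maps into the isomorphic copy inside $\hat{H}_x$ via the identity; and the subgraph $\enc(\langle x,H_x\rangle)$, being a disjoint union of paths and isolated vertices, is bipartite and therefore $H_x$-colorable, using that $H_x=\K(n)$ contains an edge since $\chi(H_x)=n\geq 4$ by \Cref{fac:chromoddgirth}. Glueing these three partial homomorphisms yields the claimed coloring.

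For part~(2), I would first observe that $H_x$ is connected and hence must be mapped into a single connected component of $\hat{G}_x$. The component $H_x$ cannot land in any component of $\enc(\langle x,H_x\rangle)$, because paths and isolated vertices have chromatic number at most $2$ while $\chi(H_x)=n\geq 4$; and it cannot land in the copy of $\K(2\kappa(x)+3)$ inside $\hat{G}_x$ either, because $n$ is even while $2\kappa(x)+3$ is odd, so \Cref{lem:kneser_magic} applies. Hence $H_x$ must map into $G_x$, contributing exactly $\#\homs{H_x}{G_x}$ homomorphisms. Symmetrically, $\K(2\kappa(x)+3)$ cannot land in any component of $\enc(\cdot)$ for chromatic-number reasons. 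The key remaining case, and the main obstacle, is to rule out mapping $\K(2\kappa(x)+3)$ into $G_x$: for this I plan to compose any hypothetical homomorphism $\K(2\kappa(x)+3)\to G_x$ with the $H_x$-coloring $G_x\to H_x=\K(n)$ from part~(1), obtaining a homomorphism $\K(2\kappa(x)+3)\to\K(n)$, which again contradicts \Cref{lem:kneser_magic} since the two indices differ in parity. Thus $\K(2\kappa(x)+3)$ must be mapped into its copy inside $\hat{G}_x$; since $\K(2\kappa(x)+3)$ is a core by \Cref{fac:chromoddgirth}, \Cref{lem:coraut} implies that every such homomorphism is an automorphism, yielding exactly $\#\auts{\K(2\kappa(x)+3)}$ homomorphisms. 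As the two component choices are independent, multiplying the counts gives the claimed formula.
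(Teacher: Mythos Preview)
Your proposal is correct and follows essentially the same approach as the paper: build the coloring piecewise for part~(1), then for part~(2) rule out all non-diagonal target components using chromatic-number bounds for $\enc(\cdot)$, \Cref{lem:kneser_magic} for the mismatched Kneser graphs, and the composition-with-the-coloring trick to exclude $\K(2\kappa(x)+3)\to G_x$. The only cosmetic difference is that in part~(1) you map $\enc(\langle x,H_x\rangle)$ into $H_x$ whereas the paper maps it into the $\K(2\kappa(x)+3)$ component; both work since each contains an edge.
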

\begin{proof} Recall that by definition, we have that $\hat{H}_x := H_x \cup \K(2\kappa(x)+3)$ and
    $\hat{G}_x := G_x \cup \enc(\langle x, H_x \rangle) \cup \K(2\kappa(x)+3)$, where $\enc(\star)$
    encodes a string into a disjoint set of paths and isolated vertices.

    Now, for the homomorphism from $\hat{G}_x$ to $\hat{H}_x$, note that the graph $G_x$ is
    $H_x$-colored (by \cref{lem:parsihard} and in particular \cref{enp1}). Further,
    the graph $\K(2\kappa(x)+3)$ has an automorphism and contains at least one edge, so
    there is a homomorphism from the graph $\enc(\langle x, H_x \rangle) \cup \K(2\kappa(x)+3)$
    into the graph $\K(2\kappa(x)+3)$. In total, this completes the proof that $\hat{G}_x$
    is $\hat{H}_x$-colored.

    For the number of homomorphisms from $\hat{H}_x$ to $\hat{G}_x$, note that there
    are $\#\homs{H_x}{G_x}$ many homomorphisms from ${H}_x$ to ${G}_x$ and
    $\#\auts{\K(2\kappa(x)+3)}$ many homomorphisms from $\K(2\kappa(x)+3)$ to itself.
    As the graphs $\hat{H}_x$ and $\hat{G}_x$ consist of the disjoint union of
    ${H}_x$ and $\K(2\kappa(x)+3)$, and  ${G}_x$ and $\K(2\kappa(x)+3)$, respectively,
    we directly obtain a lower bound for the number of homomorphisms:\begin{equation}
        \#\homs{\hat{H}_x}{\hat{G}_x} \ge \#\homs{H_x}{G_x} \cdot \#\auts{\K(2\kappa(x)+3)}.\label{eqn:lb}
    \end{equation}
    To prove the upper bound, observe the following.
    \begin{enumerate}[i]
        \item The graph $H_x$ cannot be mapped homomorphically to the graph $\K(2\kappa(x)+3)$,
            as $H_x$ is contained in $\mathcal{K}_{even}$ and $\K(2\kappa(x)+3)$ is contained in
            $\mathcal{K}_{odd}$; hence both are distinct Kneser graphs and by
            \cref{lem:kneser_magic} no homomorphisms between them are possible.
        \item The graph $\K(2\kappa(x)+3)$ cannot be mapped homomorphically to the graph $G_x$.
            Suppose otherwise, that an homomorphism $h: \K(2\kappa(x)+3) \to G_x$ existed.
            As $G_x$ is $H_x$-colorable, there is a homomorphism $c: G_x \to H_x$.
            Composing the homomorphisms $h$ and $c$ yields a homomorphism
            \[h\circ c:\K(2\kappa(x)+3) \to H_x,\] that is, a homomorphism from a graph in $\mathcal{K}_{odd}$
            to a graph in $\mathcal{K}_{even}$, which, again, is not possible by
            \cref{lem:kneser_magic}.
        \item None of the graphs $H_x$ and $\K(2\kappa(x)+3)$ can be mapped homomorphically to
            the graph $\enc(\langle x, H_x \rangle)$, as paths have a chromatic number of at most
            $2$, and both graphs $H_x$ and $\K(2\kappa(x)+3)$ have a chromatic number of at least
            $3$.
    \end{enumerate}
    Hence, the homomorphisms counted in the lower bound (\ref{eqn:lb}) are already {\em all}
    homomorphisms from $\hat{H}_x$ to $\hat{G}_x$:\[
        \#\homs{\hat{H}_x}{\hat{G}_x} = \#\homs{H_x}{G_x} \cdot \#\auts{\K(2\kappa(x)+3)}.
    \] This completes the proof.
\end{proof}

Now let $\hat{\mathcal{H}}$ and $\hat{\mathcal{G}}$ be the sets of all graphs $\hat{H}_x$
and $\hat{G}_x$, respectively, corresponding to instances $x$ to $(F,\kappa)$ for which
the function $g$ is non-zero, that is $g(x)\neq 0$.
For the classes $\hat{\mathcal{H}}$ and $\hat{\mathcal{G}}$ to be useful to us, we need to show
that the only pairs of graphs $H\in \hat{\mathcal{H}}$ and $G \in \hat{\mathcal{G}}$ that
admit a homomorphism from $H$ to $G$ are those, that correspond to the same pair $(x,\kappa(x))$.
Formally, consider the following lemma.

\begin{lemma}\label{lem:mainp2}
    Let a problem $(F,\kappa)\in \#\W{1}$ and the corresponding graph classes $\hat{\mathcal{H}}$
    and $\hat{\mathcal{G}}$ be given. For any graphs $K\in\hat{\mathcal{H}}$ and
    $G\in\hat{\mathcal{G}}$, if there is a homomorphism from $H$ to $G$,
    then there is an instance $x$ to $(F,\kappa)$ that corresponds to both
    $H$ and $G$, that is, $H = \hat{H}_x$ and $G = \hat{G}_x$.
\end{lemma}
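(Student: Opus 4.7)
The plan is to decompose any homomorphism $h \colon H \to G$ along the two connected components of $H$ and to use the Kneser antichain property to force equality. Write $H = \hat{H}_y = H_y \cup \K(2\kappa(y)+3)$ and $G = \hat{G}_z = G_z \cup \enc(\langle z, H_z\rangle) \cup \K(2\kappa(z)+3)$ for instances $y, z$ with $g(y), g(z) \neq 0$. I will show that the existence of $h$ forces $\hat{H}_y = \hat{H}_z$; then setting $x := z$ gives $G = \hat{G}_x$ and $H = \hat{H}_y = \hat{H}_z = \hat{H}_x$, as required.

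The first step is to locate the images of the two connected components of $H$. Both $H_y$ and $\K(2\kappa(y)+3)$ are connected, so each must map into a single connected component of $G$. The components of $G$ are $G_z$ (connected by property~\cref{enp1}), the individual paths and isolated vertices inside $\enc(\langle z, H_z\rangle)$, and the Kneser copy $\K(2\kappa(z)+3)$. By \cref{fac:chromoddgirth} both $H_y$ and $\K(2\kappa(y)+3)$ have chromatic number at least $3$, so neither can be mapped into a $2$-colorable component coming from $\enc(\langle z, H_z\rangle)$; this uses only that the encoding is bipartite by construction.

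Next, I would invoke \cref{lem:kneser_magic} three times. Since $H_y \in \mathcal{K}_{even}$ cannot be mapped into $\K(2\kappa(z)+3) \in \mathcal{K}_{odd}$, the restriction $h|_{H_y}$ must land in $G_z$. Conversely, $\K(2\kappa(y)+3) \in \mathcal{K}_{odd}$ cannot be mapped into $G_z$: composing such a hypothetical map with the $H_z$-coloring of $G_z$ (guaranteed by~\cref{enp1}) would yield a homomorphism from an odd Kneser graph to $H_z \in \mathcal{K}_{even}$, which is ruled out by \cref{lem:kneser_magic}. Hence $h$ sends $\K(2\kappa(y)+3)$ into $\K(2\kappa(z)+3)$, and one more application of \cref{lem:kneser_magic} forces $\kappa(y) = \kappa(z)$. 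Finally, composing $h|_{H_y} \colon H_y \to G_z$ with the $H_z$-coloring gives a homomorphism between two even Kneser graphs, so \cref{lem:kneser_magic} pins down $H_y = H_z$, which together with $\kappa(y) = \kappa(z)$ yields $\hat{H}_y = \hat{H}_z$.

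The whole argument is driven by two rigidity ingredients: the connectedness of each Kneser summand (which pins its image to a single component of $G$) and the Kneser antichain property together with the $\chi \geq 3$ bound (which then forces the summands and the parameter to match). The main obstacle to anticipate is ensuring that no portion of $h$ leaks into the encoding component, but this is handled uniformly by the chromatic-number argument, which is precisely why $\enc$ was designed as a disjoint union of paths and isolated vertices.
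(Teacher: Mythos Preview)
Your proof is correct and uses the same core ingredients as the paper: connectedness of the Kneser summands, the chromatic-number bound to exclude the encoding component, and repeated applications of \cref{lem:kneser_magic}. The paper argues by contradiction and splits into cases according to which of the Kneser indices $a,b,c$ coincide, whereas you track directly where each connected component of $H$ must land; your direct argument avoids the case split and also sidesteps the paper's slightly loose claim that ``the instances $x$ and $y$ are the same'' (only $\hat{H}_x=\hat{H}_y$ is needed and follows, exactly as you conclude).
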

\begin{proof}
    It suffices to show that there are no homomorphisms from $H$ to $G$ if the graphs
    $H$ and $G$ {\em do not} correspond to the same instance of $(F,\kappa)$.
    Hence, assume that the graphs $K\in \hat{\mathcal{H}}$ and $G\in \hat{\mathcal{G}}$
    correspond to distinct instances $x=x_K$ and $y=x_G$ of $(F,\kappa)$.
    For the sake of contradiction, further assume that there is a homomorphism $h$ from the
    graph $H$ to the graph $G$.
    By \eqref{eqn:k1} and \eqref{eqn:k2}, for some distinct integers $a, b$, an integer $c$,
    and a graph $H_x \not=K(a)$, we have that
    \begin{align*}
        H &= \K(a) \cup \K(b)\text{ and }\\
        G &= G_1 \cup \enc(\langle y, H_y\rangle) \cup \K(c),
    \end{align*} where $G_1$ is a connected graph that is $H_y$-colored.

    Similar to (iii) from the proof of \cref{lem:mainp1} we can show that,
    there are no homomorphisms from the graphs $\K(a)$ or $\K(b)$ to the graph
    $\enc(\langle x, H_x\rangle)$. Further, as the numbers $a$ and $b$ are distinct,
    only at most one of $a$ and $b$ may be equal to $c$. We distinguish two cases,
    depending on whether $c$ is equal to either $a$ or $b$, or not.

    First, assume that the number $b$ is the same as $c$. (The case $a = c$ is similar.)
    In this case, we have that $\K(b) = \K(c)$, and hence $\kappa(x) = \kappa(y)$.
    Further, by \cref{lem:kneser_magic}, the homomorphism $h$ maps the graph $\K(a)$ to
    the graph $G_1$, as $\K(a)$ and $\K(c)$ are different Kneser graphs. Combining this
    homomorphism from $\K(a)$ to $G_1$ with the homomorphism from $G_1$ to $H_y$
    (which exists as $G_1$ is $H_y$-colorable) yields a homomorphism from $\K(a)$ to $H_y$.
    However, as $\K(a)=H_x$ and $H_y$ are both Kneser graphs, a homomorphism between them
    is possible only if they are the same Kneser graph. This in turn, means that
    the instances $x$ and $y$ are the same, which is a contradiction.

    Second, consider the case where the numbers $a$, $b$, and $c$ are pairwise distinct.
    By \cref{lem:kneser_magic}, we obtain that there are no homomorphisms from the
    graph $\K(a)$ to the graph $\K(c)$, as well as that there are no homomorphisms from the
    graph $\K(b)$ to the graph $\K(c)$.
    Hence, the homomorphism $h$ maps both graphs $\K(a)$ and $\K(b)$ to the graph $G_1$.
    Assume wlog. that $\K(a) = H_x$. Now, as before, we obtain a homomorphism from the
    graph~$H_x$ to the graph $H_y$ and hence (by \cref{lem:kneser_magic}) $x=y$,
    which is a contradiction.

    In total, if $H$ and $G$ do not correspond to the same instance $x$, there is
    no homomorphism from $H$ to $G$. This concludes the proof.
\end{proof}

\paragraph*{The Main Reductions}

Using \cref{lem:mainp1,lem:mainp2}, we proceed to show that the problems $(F,\kappa)$ and
$\#\homsprob{\hat{\mathcal{H}}}{\hat{\mathcal{G}}}$ are interreducible with respect to
parameterized Turing reductions.

\begin{proof}[Proof of Theorem~\ref{thm:universal_turing_red}]
    We start with the more involved direction.
    \begin{claim}\label{cl:c2}
        We have that $\#\homsprob{\hat{\mathcal{H}}}{\hat{\mathcal{G}}} \fptred (F,\kappa)$.
    \end{claim}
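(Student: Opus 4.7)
The plan is to describe a parameterized Turing-reduction that, on input $(H,G)\in\hat{\mathcal{H}}\times\hat{\mathcal{G}}$ with parameter $|V(H)|$, proceeds in three phases. First, I would recover from $G$ the unique instance $y$ with $G=\hat{G}_y$. Second, I would check whether $H=\hat{H}_y$. Third, I would either return $0$ (if the two graphs come from distinct instances, as justified by \cref{lem:mainp2}) or invoke \cref{lem:mainp1} together with a single oracle call to $F$ to compute $\#\homs{H}{G}$. Since \cref{lem:mainp1,lem:mainp2} already carry the combinatorial content, all the technical work reduces to the efficient recovery of $y$ from $G$, which I expect to be the main obstacle.

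For this recovery phase I would exploit the rigid structure of $\hat{G}_y=G_y\cup\enc(\langle y,H_y\rangle)\cup\K(2\kappa(y)+3)$, in which $G_y$ is connected, the encoding is a disjoint union of paths and isolated vertices, and $\K(2\kappa(y)+3)$ is a connected Kneser graph. The Kneser component is identifiable among the connected components of $G$, since it must be isomorphic to some $\K(n)$ with $n=2\kappa'+3$, and this identification pins down $\kappa(y)$. After removing it, what remains is the disjoint union of $G_y$ with the encoding components, and the encoding itself has a rigid signature: the number of isolated vertices equals the length of the encoded string and the path lengths mark the positions of the ones. To isolate $G_y$ against the danger of spurious decompositions (for instance, when $G_y$ itself resembles a path or an isolated vertex), I would enumerate each of the at most $|V(G)|$ connected components of the remainder as a candidate for $G_y$, decode the rest as the encoding of some string, and accept the unique decomposition that is structurally consistent with $\enc$; by the promise $G\in\hat{\mathcal{G}}$, exactly one such decomposition is valid, and it directly yields the pair $(y,H_y)$. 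This phase runs in time polynomial in $|V(G)|$.

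Having recovered $y$, I would compute $\hat{H}_y$ in FPT time by running the algorithm $\mathbb{A}$ from \cref{lem:parsihard} on $y$ and taking the disjoint union of its first output with $\K(2\kappa(y)+3)$. If $H\neq\hat{H}_y$, I return $0$, since then \cref{lem:mainp2} guarantees $\#\homs{H}{G}=0$. Otherwise $(H,G)=(\hat{H}_y,\hat{G}_y)$, so \cref{lem:mainp1} gives
\[\#\homs{H}{G}=\#\homs{H_y}{G_y}\cdot\#\auts{\K(2\kappa(y)+3)}.\]
Since every graph in $\hat{\mathcal{H}}$ corresponds to an instance $y$ with $g(y)\neq 0$, a single oracle query returns $\#\homs{H_y}{G_y}=F(y)/g(y)$; the factor $\#\auts{\K(2\kappa(y)+3)}$ depends only on $\kappa(y)$, and $g(y)$ is FPT-computable in $\kappa(y)$. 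Finally, the parameter bound on the oracle query follows from the fact that $H$ contains $\K(2\kappa(y)+3)$ as a connected component and that $|V(\K(n))|$ grows with $n$, so $\kappa(y)\leq r(|V(H)|)$ for a computable function $r$, as required.
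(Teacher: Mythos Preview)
Your overall strategy mirrors the paper's: recover the underlying instance from $G$, check consistency with $H$, and either return $0$ (via \cref{lem:mainp2}) or compute the answer via \cref{lem:mainp1} and one oracle call. However, there is a genuine running-time gap. The step ``compute $\hat{H}_y$ in FPT time by running the algorithm $\mathbb{A}$ on $y$'' is not FPT with respect to the parameter $|V(H)|$: the algorithm $\mathbb{A}$ runs in time $f(\kappa(y))\cdot|y|^{O(1)}$, where $\kappa(y)$ is extracted from $G$, not from $H$. When the promise holds but $H=\hat{H}_z$ and $G=\hat{G}_y$ stem from different instances with $\kappa(z)\ll\kappa(y)$, you have no bound relating $\kappa(y)$ to $|V(H)|$ until \emph{after} the check $H=\hat{H}_y$---and that check is precisely what running $\mathbb{A}(y)$ is supposed to enable. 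The same issue affects the computation of $g(y)$, which is only FPT in $\kappa(y)$. (Recall too that the running time of a parameterized Turing-reduction must be bounded on \emph{all} inputs, not only those in the promise.)

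The paper avoids this in two ways. First, it never runs $\mathbb{A}$: the graph $H_y$ is already part of the encoding $\enc(\langle y,H_y\rangle)$ inside $G$---and you in fact extract it during your recovery phase---so there is no need to recompute it. Second, and crucially, the paper extracts the odd Kneser component $\K(b)$ from $H$ \emph{first} and checks $2\kappa(c)+3=b$ \emph{before} performing any computation that is merely FPT in $\kappa(c)$; passing this test guarantees $\kappa(c)\le|V(H)|$. Your argument is easily repaired by inserting such a check: before (or instead of) running $\mathbb{A}$, verify that $\K(2\kappa(y)+3)$ is a connected component of $H$, and output $0$ otherwise, which is justified by \cref{lem:mainp2}.
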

    \begin{claimproof}
        Given graphs $H$ and $G$ and an oracle $\mathbb{O}$ for the problem $(F,\kappa)$,
        we wish to compute the number of homomorphisms
        $\#\homs{H}{G}$ if the promise $H \in \hat{\mathcal{H}}$ and $G \in \hat{\mathcal{G}}$
        is fulfilled. Consider the following algorithm $\mathbb{B}$.
        \begin{enumerate}
            \item Verify that the graph $H$ is the union of two Kneser graphs
                $\K(a) \in \mathcal{K}_{even}$ and $\K(b) \in \mathcal{K}_{odd}$.
                If this is not the case, output~$0$.
            \item Verify that the graph $G$ is the union of a set of paths $\mathcal{P}$
                (and isolated vertices) and two connected components $G_1$ and $G_2$
                that are not paths. If this is not the case, output~$0$.
            \item Verify that either $G_1=\K(b)$ or $G_2=\K(b)$ holds.
                If this holds, assume w.l.o.g.\ that $G_2 = \K(b)$. Otherwise output~$0$.
            \item Find the pair $(c,H')$ such that $\enc(\langle c,H' \rangle) = \mathcal{P}$
                or report that no decoding is possible (for instance\ if the set of paths $\mathcal{P}$
                is empty, contains the same path multiple times or the number of isolated vertices
                does not match). If the decoding failed, output~$0$.
            \item Compute the parameter $\kappa(c)$ of the instance $c$.
                If we have that $2\kappa(c)+3 \neq b$, output $0$.
            \item Verify that the graphs $H'$ and $\K(a)$ are isomorphic.
                If they are not isomorphic, output $0$.
            \item Compute the value $g(c)$. If we have that $g(c) = 0$, output $0$.
            \item Query the oracle $\mathbb{O}$ on input $c$ and obtain $\mathbb{O}(c)$.
                Output the number \[\mathbb{O}(c) \cdot \#\mathsf{Aut}(\K(2\kappa(c)+3))\cdot g(c)^{-1}\!.\]
        \end{enumerate}
        We first prove the required bound on the running time of the algorithm $\mathbb{B}$.
        On input $H$~and~$G$, Step~1 takes time depending only on $|V(H)|$;
        Step~2 can be done in time polynomial in $|V(G)|$.
        Step~3 takes again time depending only on $|V(H)|$.
        Considering Step~4, we observe that by the definition of the encoding $\mathsf{enc}$
        and by the assumption that $\langle \star,\star \rangle$ is an efficient encoding of
        pairs, the decoding can be done in time polynomial in $|V(\mathcal{P})|\leq |V(G)|$.
        Step~5 can also be done in time polynomial in $|V(G)|$, as the function $\kappa$ is
        computable in polynomial time in $|c|$. As the encoding $\enc(\langle c,H \rangle)$
        contains an isolated vertex for every bit of the string~$c$, we have that
        \begin{equation}
            |c| \leq |V(G)|\label{eqn:231}
        \end{equation}
        and hence the claimed running time for Step~5.
        Similarly to the Step~3, we can perform Step~6 in time depending only on $|V(H)|$.
        Now assume Step~7 is reached. In this case, we have that $2\kappa(c)+3 = b$
        and consequently
        \begin{equation}\label{eq:parbound}
            \kappa(c) \leq  |V(H)|,
        \end{equation}
        as the graph $\K(b)$ is a component of $H$ and we have that $|V(\K(b))| \geq b$.
        Note that Steps~7~and~8 take time $f'(\kappa(c)) \cdot |c|^{O(1)}$,
        as the problem $(g,\kappa)$ is fixed-parameter tractable (see (\ref{enp3})).
        W.l.o.g., we can assume that $f'$ is monotonically increasing and thus~\eqref{eq:parbound}
        yields a running time bound of $f'(|V(H)|) \cdot |V(G)|^{O(1)}$;
        recall \eqref{eqn:231}, that is $|c| \leq |V(G)|$.\\
        Note that the last argument also shows that the parameter $\kappa(c)$ of the oracle query
        $\mathbb{O}(c)$ is bounded by $|V(H)|$.

        It remains to prove the correctness of algorithm $\mathbb{B}$.
        To this end, assume that the promise is fulfilled, that is,
        $H \in \hat{\mathcal{H}}$ and $G \in \hat{\mathcal{G}}$.
        (If the promise is not fulfilled, we are not required to compute a correct output.)

        Hence, for instances $x$ and $y$, we have that\begin{align*}
            H = \hat{H}_y &= \K(a) \cup \K(b)\\
                          &= H_y \cup \K(2\kappa(y)+3)\\
            \intertext{and}
            G=\hat{G}_x   &= G_1 \cup \mathcal{P} \cup G_2\\
                          &= G_x \cup \enc(\langle x, H_x \rangle) \cup \K(2\kappa(x)+3).
        \end{align*} Further, by construction we have that $g(x) \not= 0$.
        We consider three cases.
        \begin{enumerate}[i]
            \item $H_x \neq H_y$: The instances $x$ and $y$ are different. Hence by
                \cref{lem:mainp2}, there are no homomorphisms from $H$ to $G$.

                In the algorithm $\mathbb{B}$, in this case, the test in Step~6 fails,
                and $\mathbb{B}$ outputs $0$, which is correct.
            \item $H_x=H_y$ and $\kappa(x)\neq\kappa(y)$. Note that $H_x=H_y$ does not
                imply that the corresponding instances are the same; the algorithm~%
                $\mathbb{A}$ is not necessarily injective. Indeed, in this case
                the instances $x$ and $y$ differ and so, again by \cref{lem:mainp2},
                there are no homomorphisms from the graph $H$ to the graph $G$.

                In the algorithm $\mathbb{B}$, in this case, the test in Step~3 fails,
                and $\mathbb{B}$ outputs $0$, which is correct.
            \item $H_x=K_y$ and $\kappa(x)=\kappa(y)$: In this case, we have that
                $\hat{H}_y = \hat{H}_x$.
                Hence, by \cref{lem:mainp1}, the number of homomorphisms from $H$ to $G$ is
                \begin{equation}
                    \#\homs{H}{G} = \#\homs{H_x}{G_x} \cdot \#\auts{\K(2\kappa(x) + 3)}.\label{eqn:cl251}
                \end{equation}
                Note that the oracle $\mathbb{O}$ on input $x$ computes the number
                \begin{equation}
                    \mathbb{O}(x) = g(x) \cdot \#\homs{H_x}{G_x}\label{eqn:cl252},
                \end{equation}
                and we may assume that $g(x) \not= 0$ by construction. Hence, combining
                \eqref{eqn:cl251} and \eqref{eqn:cl252} yields that we can compute the number
                of homomorphisms from the graph $H$ to the graph $G$ as follows:\[
                    \#\homs{H}{G} = \mathbb{O}(x)\cdot{g(x)}^{-1}\cdot\#\auts{\K(2\kappa(x) + 3)}.
                \]

                In the algorithm~$\mathbb{B}$, it is easy to verify that the Steps~3~to~7
                succeed and that in Step~8, we indeed return
                $\mathbb{O}(x)\cdot{g(x)}^{-1}\cdot\#\auts{\K(2\kappa(x) + 3)}$.
                Hence, the algorithm is correct in this case as well.
        \end{enumerate}
        In total, the algorithm $\mathbb{B}$ correctly solves the problem
        $\#\homsprob{\hat{H}}{\hat{G}}$. This finishes the proof of the reduction.
    \end{claimproof}

    Finally, we construct and verify the easy reduction.
    \begin{claim}\label{cl:c1}
        We have that $(F,\kappa) \fptred \#\homsprob{\hat{\mathcal{H}}}{\hat{\mathcal{G}}}$.
    \end{claim}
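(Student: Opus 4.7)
The plan is to build the reduction directly around the pair $(\hat H_x, \hat G_x)$ already constructed from \eqref{eqn:k1} and \eqref{eqn:k2}, using a single oracle call per input. Given an instance $x$ of $(F,\kappa)$, I would first compute the value $g(x)$; this is possible in fpt time in $\kappa(x)$ by property \cref{enp3}. If $g(x) = 0$, the algorithm outputs $0$ immediately and does not call the oracle. This is correct because, by \cref{enp2}, we then have $F(x) = g(x) \cdot \#\homs{H_x}{G_x} = 0$. Crucially, this case is \emph{exactly} the one in which $(\hat H_x, \hat G_x)$ might not lie in $\hat{\mathcal H} \times \hat{\mathcal G}$, since the latter were defined as the images of instances with $g(x) \neq 0$, so sidestepping the oracle here is what keeps the reduction promise-preserving.

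If $g(x) \neq 0$, I run the algorithm $\mathbb A$ of \cref{lem:parsihard} in fpt time (by \cref{enp4}) to obtain $(H_x, G_x)$, assemble $\hat H_x$ and $\hat G_x$ according to \eqref{eqn:k1} and \eqref{eqn:k2}, and query the oracle on this pair to obtain $N := \#\homs{\hat H_x}{\hat G_x}$. By \cref{lem:mainp1} we have $N = \#\homs{H_x}{G_x} \cdot \#\auts{\K(2\kappa(x)+3)}$. The Kneser factor depends only on $\kappa(x)$, is strictly positive, and can be computed in time depending only on $\kappa(x)$; I therefore divide to recover $\#\homs{H_x}{G_x}$ and output $g(x) \cdot \#\homs{H_x}{G_x}$, which equals $F(x)$ by \cref{enp2}.

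For the running-time and parameter bookkeeping, constructing $(\hat H_x, \hat G_x)$ takes fpt time by \cref{enp4} together with the fact that $|V(\K(2\kappa(x)+3))|$ is a function of $\kappa(x)$ alone, and the parameter of the single oracle query is $|V(\hat H_x)| = |V(H_x)| + |V(\K(2\kappa(x)+3))|$, which is bounded by a computable function of $\kappa(x)$ using \cref{enp5}. All conditions of a parameterized Turing-reduction are then satisfied. I do not expect any real obstacle here: the only conceptual subtlety is the handling of the case $g(x)=0$, and the heavy lifting of the theorem — ensuring that the oracle can meaningfully decode $(H, G)$ back to a unique instance $x$ and reject mismatched pairs — has already been done in \cref{cl:c2} by way of the Kneser-graph barrier provided by \cref{lem:mainp2}.
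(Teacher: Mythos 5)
Your proposal is correct and follows essentially the same route as the paper's proof: compute $g(x)$ first and output $0$ without querying when $g(x)=0$, otherwise simulate $\mathbb{A}$, build $(\hat H_x,\hat G_x)$, query the oracle once, and divide out $\#\auts{\K(2\kappa(x)+3)}$ via \cref{lem:mainp1} before multiplying by $g(x)$. Your explicit remark that the $g(x)=0$ check is what guarantees the single oracle query lies inside the promise $\hat{\mathcal H}\times\hat{\mathcal G}$ is a point the paper leaves implicit, but the argument is the same.
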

    \begin{claimproof}
        Given an instance $x$ to $(F, \kappa)$ and an oracle $\mathbb{O}$ solving the problem
        $\#\homsprob{\hat{\mathcal{H}}}{\hat{\mathcal{G}}}$, we wish to compute the number $F(x)$.
        Recall that by \cref{lem:parsihard}, there is a reduction
        $(F,\kappa) \fptred \#\homsprob{\mathcal{K}_{even}}{\top}$; let $\mathbb{A}$ again
        denote the corresponding algorithm. Recall further, that for the graphs
        $(H_x, G_x) = \mathbb{A}(x)$, we have that\[
            F(x) = g(x)\cdot \#\homs{H_x}{G_x}.
        \]
        Now, to compute the result $F(x)$, we first compute the value $g(x)$ in FPT time
        with respect to $\kappa$. If we observe $g(x)=0$, we output $0$.
        Otherwise, we simulate the algorithm $\mathbb{A}$ and obtain graphs $H_x$ and $G_x$
        in time $f(\kappa(x))\cdot |x|^{O(1)}\!$.
        After that, we can compute the graphs $\hat{H}_x$ and $\hat{G}_x$
        in time $\hat{f}(\kappa(x)) \cdot |x|^{O(1)}$:
        The construction of the encoding $\enc(\langle x, H_x \rangle)$ can be done in
        polynomial time in $|x|$ and $|V(H_x)|$. Note that $|V(H_x)|$ is bounded by
        $s(\kappa(x))$ and that the construction of $\K(2\kappa(x)+3)$ clearly takes time
        depending only on $\kappa(x)$. In particular, we have that the size of the graph $\hat{H}_x$
        depends only on $\kappa(x)$. Hence, we can query the oracle~$\mathbb{O}$ for the problem
        $\#\homsprob{\hat{\mathcal{H}}}{\hat{\mathcal{G}}}$ on the graphs $\hat{H}_x$ and
        $\hat{G}_x$ and obtain the number of homomorphisms from $\hat{H}_x$ and $\hat{G}_x$.

        Recall that by \cref{lem:mainp1}, we have that
    \begin{align*}
    \#\homs{H_x}{G_x} &= \#\mathsf{Aut}(\K(2\kappa(x)+3))^{-1} \cdot \#\homs{\hat{H}_x}{\hat{G}_x} \\
                      &= \#\mathsf{Aut}(\K(2\kappa(x)+3))^{-1} \cdot\mathbb{O}(\hat{H}_x,\hat{G}_x).\end{align*}
    Hence, to compute the result $F(x)= g(x)\cdot \#\homs{H_x}{G_x}$,
        we can compute the number $\#\mathsf{Aut}(\K(2\kappa(x)+3))^{-1}$ in time
        depending only on
        $\kappa(x)$ and multiply with the result of the oracle query and the result previous
        computation of the value $g(x)$.
        This completes the proof, as we may assume that the algorithm $\mathbb{A}$ is correct.
    \end{claimproof}
    In total, by the reductions from Claims~\ref{cl:c2} and \ref{cl:c1}, we obtain
    \[(F,\kappa) \interred \#\homsprob{\mathcal{H}}{\mathcal{G}},\]
    thus completing the proof.
\end{proof}
Note that the previous proof shows the corresponding theorem for the decision realm,
if we choose the decision version of \cref{lem:parsihard} as a starting point;
the decision version of \Cref{lem:parsihard} can be found as \cref{cor:decision_hard} in
\cref{sec:hom_hard_strategy}.
\begin{theorem}\label{thm:universal_turing_red_dec}
    Let $(F,\kappa)$ be a problem in $\W{1}$. There are classes
    $\mathcal{H}$ and $\mathcal{G}$ such that
    \[(F,\kappa) \interred \homsprob{\mathcal{H}}{\mathcal{G}}.\]
    Further, $\mathcal{H}$ is recursively enumerable and $\mathcal{G}$ is recursive.\lipicsEnd
\end{theorem}

\section{Counting Homomorphisms and Subgraphs in \texorpdfstring{$F$}{F}-Colorable Graphs}\label{sec:fcol}

Let $\mathcal{H}$ denote a recursively enumerable class of graphs. Further, given a fixed graph
$F$, let $\mathcal{G}_F$ denote the class of all graphs $G$ that admit a homomorphism to $F$,
that is, the class of $F$-colorable  graphs.
In this section we establish that the existing dichotomy for counting homomorphisms due to Dalmau and Jonsson~\cite{DalmauJ04}  extends to the PPC problem
$\#\homsprob{\mathcal{H}}{\mathcal{G}_F}$; that is, counting the number of homomorphisms
from a graph $H \in \mathcal{H}$ to a graph $G \in \mathcal{G}_F$.
Note that the notion of $\mathcal{G}_F$ captures and generalizes the important special cases of
the class of all bipartite graphs (when $F$ is a single edge) or, more generally,
the class of all $k$-colorable graphs for any fixed number $k$ (when $F$ is the complete
graph on $k$ vertices).

\begin{theorem}\label{thm:genhomdich}
    Let $F$ be a graph, and let $\mathcal{H}$ be a recursively enumerable class of graphs.
    \begin{enumerate}[~1~]
    \item If the treewidth of $\mathcal{H}\cap\mathcal{G}_F$ is bounded then the PPC problem
        $\#\homsprob{\mathcal{H}}{\mathcal{G}_F}$ is polynomial-time solvable.
    \item Otherwise, the problem $\#\homsprob{\mathcal{H}}{\mathcal{G}_F}$ is $\#\W{1}$-hard.
        \lipicsEnd
    \end{enumerate}
\end{theorem}

It turns out that the previous theorem can be proved by a refined analysis of the existing proof
due to Dalmau and Jonsson~\cite{DalmauJ04}.
For this reason, we defer the proof to \cref{sec:proof_f_homsdicho}.
In what follows, instead, we demonstrate that the previous classification for counting
homomorphisms to $F$-colorable graphs yields a complete classification for the associated subgraph
counting problem. More precisely, we define $\#\subsprob{\mathcal{H}}{\mathcal{G}}$ as the
PPC problem of, given graphs $H \in \mathcal{H}$ and $G \in \mathcal{G}$,
computing the number $\#\subs{H}{G}$, that is, the number of subgraphs in $G$ that are
isomorphic to $H$; the parameter is $|V(H)|$.
Formally, the promise is the set $\mathcal{H}\times\mathcal{G}$.

An example of a problem $\#\subsprob{\mathcal{H}}{\mathcal{G}}$ is the problem of computing
the number of $k$-matchings in bipartite graphs; recall that a $k$-matching is a set of $k$ edges
that are pairwise disjoint.
This problem was first shown to be $\#\W{1}$-hard by Curticapean and Marx~\cite{CurticapeanM14}
and constitutes the bottleneck for the intractable cases of the subgraph counting problem:
\begin{theorem}[\cite{CurticapeanM14}]\label{thm:subsdicho}
    Let $\mathcal{H}$ be a recursively enumerable class of graphs.
    \begin{enumerate}[~1~]
        \item If the matching number of the class $\mathcal{H}$ is bounded then
            the problem $\#\subsprob{\mathcal{H}}{\top}$ is polynomial-time solvable.
        \item Otherwise, the problem $\#\subsprob{\mathcal{H}}{\top}$ is $\#\W{1}$-hard.
            \lipicsEnd
    \end{enumerate}
\end{theorem}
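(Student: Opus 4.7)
The plan is to prove this dichotomy by invoking the complexity monotonicity framework (\cref{lem:monotonicity}) to transfer the homomorphism dichotomy of Dalmau and Jonsson (and its consequence obtained via our treatment of the structurally restricted case) to the subgraph counting problem. The key identity is that for every graph $H$, the subgraph count equals a fixed linear combination of homomorphism counts over the spasms of $H$:
\begin{equation*}
    \#\subs{H}{G} \;=\; \sum_{F \in \mathcal{S}(H)} \lambda_{H,F}\cdot \#\homs{F}{G},
\end{equation*}
where $\mathcal{S}(H)$ denotes the (finite) set of spasms of $H$ and the coefficients $\lambda_{H,F}$ are obtained by Möbius inversion on the partition lattice of $V(H)$. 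Crucially, the coefficient on the trivial (discrete) partition, corresponding to $F=H$ itself, is $\#\auts{H} \neq 0$, so $H$ itself is always a constituent.

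For the tractable direction, I would establish the combinatorial fact that the treewidth of every spasm of $H$ is bounded by a function of the matching number $\nu(H)$. The intuition is that any vertex cover of $H$ provides a ``backbone'' that survives quotienting: the image of a minimum vertex cover in any spasm $F$ is a vertex cover of $F$ of size at most $2\nu(H)$, and hence $\tw(F) \leq 2\nu(H)$. Thus if $\mathcal{H}$ has bounded matching number, all spasms of graphs in $\mathcal{H}$ have bounded treewidth, so each term $\#\homs{F}{G}$ is computable in polynomial time by Dalmau--Jonsson, yielding a polynomial-time algorithm for $\#\subsprob{\mathcal{H}}{\top}$ by summing the finitely many terms.

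For the $\#\W{1}$-hard direction, suppose the matching number of $\mathcal{H}$ is unbounded. Applying complexity monotonicity (\cref{lem:monotonicity}) to the quantum graph $Q_H := \sum_F \lambda_{H,F}\cdot F$, an oracle for $\#\subsprob{\mathcal{H}}{\top}$ lets us compute $\#\homs{F}{G}$ for every spasm $F$ of every $H \in \mathcal{H}$. It therefore suffices to exhibit, from unbounded matching number, a sub-family of spasms $\{F_k\}$ of unbounded treewidth, as then Dalmau--Jonsson gives $\#\W{1}$-hardness of computing $\#\homs{F_k}{G}$. The construction of such spasms is the main obstacle: given a matching of size $\binom{k}{2}$ in some $H \in \mathcal{H}$, index its edges by unordered pairs $\{i,j\} \subseteq [k]$, and quotient by identifying, for each $i \in [k]$, one chosen endpoint from each matching edge incident to index $i$ into a single block. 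This quotient has no self-loops (no two identified vertices are adjacent, since matching edges are vertex-disjoint), hence is a spasm, and it contains $K_k$ as a subgraph, so it has treewidth at least $k-1$.

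The main technical obstacle is precisely this spasm-construction step: one must verify that the identifications avoid creating self-loops on the matching edges themselves, and more generally argue that \emph{all} non-matching edges of $H$ do not obstruct the construction (they may only add further edges and thus can only increase treewidth). Subject to this, the framework of complexity monotonicity together with \cref{thm:genhomdich} in the unrestricted case $F = \top$ (equivalently, Dalmau--Jonsson) yields both halves of the dichotomy.
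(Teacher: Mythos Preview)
First, note that the paper does not give its own proof of this theorem: it is quoted as a known result from~\cite{CurticapeanM14}, and the paper remarks that Curticapean, Dell, and Marx~\cite{CurticapeanDM17} later gave a much simpler proof via complexity monotonicity. Your proposal is precisely an attempt at that CDM17-style argument, so the overall strategy is sound. There are, however, two issues.

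\textbf{The hard direction has a genuine gap.} Your claim that the quotient ``has no self-loops (no two identified vertices are adjacent, since matching edges are vertex-disjoint)'' is incorrect: vertex-disjointness of the matching edges guarantees that the endpoints are \emph{distinct}, not that they are \emph{non-adjacent}. If $H$ contains a non-matching edge between two endpoints that you place in the same block~$i$, the quotient acquires a self-loop and is not a spasm. You recognise the concern in your final paragraph but then dismiss it with ``they may only add further edges and thus can only increase treewidth'' --- that is exactly where the argument breaks, since an edge inside a block becomes a self-loop, not an additional edge. A concrete obstruction: take $H$ with matching edges $e_{12},e_{13},e_{23}$ and add all four edges between the endpoints of $e_{12}$ and $e_{13}$; then every choice of endpoints for block~$1$ creates a self-loop. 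A correct proof needs an additional step, e.g.\ first passing (via a pigeonhole or Ramsey-type argument) to a sub-matching whose endpoint sets on each side form independent sets in $H$ --- compare how the paper's proof of \cref{thm:subs_f_colored} uses the $F$-colouring to force exactly this independence before contracting to $K_{b,b}$.

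\textbf{A smaller gap in the tractable direction.} Your bound $\tw(F)\le 2\nu(H)$ for every spasm $F$ is correct (image of a vertex cover is a vertex cover), and so each term $\#\homs{F}{G}$ is computable in time $|V(G)|^{O(\nu)}$. But ``summing the finitely many terms'' is not automatically polynomial in $|V(H)|$: the number of partitions of $V(H)$ is super-polynomial. You need the further observation that the number of \emph{pairwise non-isomorphic} spasms of a graph with vertex cover number at most $2c$ is $|V(H)|^{O_c(1)}$, and that they (together with their coefficients) can be enumerated in that time. This is true and not hard, but it is a step that must be stated. Also, a minor slip: the coefficient of $H$ itself in the expansion of $\#\subs{H}{\star}$ is $1/\#\auts{H}$, not $\#\auts{H}$; this does not affect the argument since all you need is that it is non-zero.
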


Here, the {\em matching number} of a graph is the size of its largest matching
and a class of graphs $\mathcal{H}$ has bounded matching number if there is an overall
constant $c$ such that the matching number of each graph $H \in \mathcal{H}$ is bounded by $c$.

Recently, Curticapean, Dell, and Marx~\cite{CurticapeanDM17} strongly generalized \cref{thm:subsdicho} with a much simpler proof.
They key ingredient of their work is the algorithm given by \cref{lem:monotonicity}.
We further generalize their proof to $F$-colorable graphs and obtain the following strengthening
of the classification for counting subgraphs.

\begin{theorem}\label{thm:subs_f_colored}
    Let $F$ be a fixed graph and let $\mathcal{H}$ be a recursively enumerable class of graphs.
    \begin{enumerate}[~1~]
        \item If the matching number of $\mathcal{H}\cap \mathcal{G}_F$ is bounded then
            the problem $\#\subsprob{\mathcal{H}}{\mathcal{G}_F}$ is polynomial-time solvable.
        \item Otherwise, the problem $\#\subsprob{\mathcal{H}}{\mathcal{G}_F}$ is $\#\W{1}$-hard.
            \lipicsEnd
    \end{enumerate}
\end{theorem}

Due to space constraints and the fact that we need to perform only minor modifications of the
arguments of Curticapean, Dell and Marx~\cite{CurticapeanDM17}, we defer the proof to
\cref{sec:app_fcol_sub}.

\section{Counting Homomorphisms in Kőnig Graphs}\label{sec:line_graphs}

Given a graph $G$, its associated {\em line graph} $L(G)$ is the following graph:
As vertices $L(G)$ has the edges of $G$ and two vertices $e$ and $\hat{e}$ of $L(G)$ are adjacent
if the corresponding edges are neither equal nor disjoint, that is, $|e \cap\hat{e}|=1$.
We write $\mathcal{L}$ for the set of all line graphs.
By Kőnig's Theorem, a line graph of a bipartite graph is also a perfect
graph (see for instance\ \cite{ChudnovskyRST06}). To simplify notation, we hence call
a line graph of a bipartite graph a {\em Kőnig graph}. We write $\symking$ to denote the
class of all Kőnig graphs\footnote{The symbol $\symking$ is used since ``Kőnig'' is the German
word for ``King''.}. This section is devoted to the complexity analysis of the
problem $\#\homsprob{\mathcal{H}}{\symking}$ of counting homomorphisms from a graph from some
arbitrary graph class $\mathcal{H}$ to a Kőnig graph.

\subsection{Tractability of Deciding Homomorphisms in Kőnig Graphs}

We start by investigating the decision version, that is, the problem $\homsprob{\top}{\symking}$.
It turns out that if we are interested only in the existence, and not the number, of homomorphisms,
then the problem becomes fixed-parameter tractable.

\begin{theorem}\label{thm:dec_lines_tract}
    The decision problems $\homsprob{\top}{\mathcal{L}}$ and thus $\homsprob{\top}{\symking}$
    are fixed-parameter tractable.
    In particular, given a graph $H$ and a line graph $L$, it is possible to decide the existence
    of a homomorphism from $H$ to $L$ in time \[f(|V(H)|) \cdot O(|V(L)|^2),\] for some computable
    function $f$ independent of $H$ and $L$.
\end{theorem}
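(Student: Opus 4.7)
The plan is to reduce $\homsprob{\top}{\mathcal{L}}$ to a bounded-size search problem in the primal graph. The main observation is that, after inverting the line-graph operator, either a trivial ``clique witness'' is present, or the host graph has bounded degree and a direct bounded-radius enumeration succeeds.

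First, from the given line graph $L = L(G)$, I would extract the primal graph $G$. By Whitney's isomorphism theorem, $G$ is unique up to the small exception $L(K_3) = L(K_{1,3}) = K_3$, and it can be computed in time $O(|V(L)| + |E(L)|) \subseteq O(|V(L)|^2)$ using the classical algorithms of Roussopoulos or Lehot. Since we are interested in the existence of a homomorphism, we may process connected components of $H$ separately (isolated-vertex and single-edge components are handled by checking that $V(L)$, respectively $E(L)$, is nonempty). Hence we reduce to the case where $H$ is connected with $|V(H)| \geq 2$.

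The core of the proof is a case distinction on $\Delta(G)$. If $\Delta(G) \geq |V(H)|$, then the edges of $G$ incident to a maximum-degree vertex form a clique of size at least $|V(H)|$ in $L(G)$. Since $\chi(H) \leq |V(H)|$ trivially, there is a homomorphism $H \to K_{|V(H)|} \hookrightarrow L(G)$, and the answer is YES. Otherwise $\Delta(G) < |V(H)|$, which implies $\Delta(L) \leq 2\Delta(G) - 2 < 2|V(H)|$. Because $H$ is connected, the image of any homomorphism $h\colon H \to L$ is contained in a connected vertex-subset $V' \subseteq V(L)$ of size at most $|V(H)|$. Exploiting the bounded degree of $L$, I would enumerate all such connected subsets via bounded-depth BFS from each vertex: a standard counting argument bounds the number of connected subgraphs on at most $s$ vertices containing a fixed vertex in a graph of maximum degree $\Delta$ by $(e\Delta)^s$, which here is $f(|V(H)|)$. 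For each candidate $V'$, I would decide $H \to L[V']$ by brute force over the at most $|V(H)|^{|V(H)|}$ candidate maps.

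Putting the cases together yields total running time $f(|V(H)|) \cdot O(|V(L)|^2)$, where the quadratic factor comes solely from the extraction of $G$. The statement for König graphs follows immediately since $\symking \subseteq \mathcal{L}$. The main technical obstacle is the combinatorial bound on the number of small connected subgraphs in bounded-degree graphs together with a clean, uniform treatment of degenerate components of $H$; both are standard but deserve careful bookkeeping. A subtler point is that the equivalence ``image of a connected $H$ lives inside a connected subset of $\leq |V(H)|$ vertices of $L$'' must be justified from the fact that graph homomorphisms send connected graphs to connected subgraphs, ensuring that the above enumeration is both sound and complete.
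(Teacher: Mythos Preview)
Your proposal is correct and follows essentially the same approach as the paper: compute the primal graph via Lehot's algorithm, branch on whether its maximum degree reaches $|V(H)|$ (yielding a large clique in $L$), and otherwise exploit the resulting degree bound on $L$ to do a bounded local search per connected component of $H$. The only cosmetic difference is that the paper searches directly in the $k$-neighborhood of a guessed image vertex (of size at most $(2k)^k$) rather than enumerating connected $k$-subsets, but this is the same idea with the same running-time shape.
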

\begin{proof}
    We construct an algorithm $\mathbb{A}$ for the problem $\homs{\top}{\mathcal{L}}$ that,
    given a graphs $H \in \top$ and $L \in \mathcal{L}$, correctly decides whether there
    is a
    homomorphism from $H$ to $L$. Further, the algorithm $\mathbb{A}$ runs in time
    $f(|V(H)|) \cdot O(|V(L)|^2)$ for some computable function $f$ independent of $H$ and $L$.
    The algorithm $\mathbb{A}$ relies on the clique partition of line
    graphs~\cite[Chapter~8]{Harary69}, stating that $E(L)$ can be partitioned into cliques such
    that every vertex of $L$ is contained in at most $2$ cliques.
    Here, every clique corresponds to a vertex of a primal graph of $G$ such that $L(G)=L$.
    In particular, it is easy to see that the size of the largest clique in the partition is
    precisely the maximum degree of $G$. Consequently, our algorithm first computes a primal graph
    $G$ of $L$, which can be done in time $O(|V(L)|^2)$~\cite{Lehot74}.
    Next, we compute the maximum degree $d$ of $G$, which can be done in time
    $O(|V(G)|) = O(|V(L)|^2)$.

    Now let $k =|V(H)|$ and let $H_1,\dots, H_\ell$ be the connected components of $H$.
    For every connected component $H_i$, we proceed as follows.
    If $d \geq k$ then there is a homomorphism from $H_i$ to $L$, as we can embed $H_i$
    into a clique of size $d$.
    Otherwise, the properties of the clique partition yield that the degree of $L$ is bounded
    by $2k$: Every vertex of $L$ is contained in at most two cliques and every clique is of size
    at most $d < k$.
    Consequently, we can perform a standard bounded search-tree algorithm:
    We guess the image $v \in V(L)$ of a vertex $h\in V(H_i)$.
    As the graph $H_i$ is connected and $|V(H_i)| \leq k$, every homomorphism from $H_i$ to $L$
    that maps $h$ to $v$ must also map every further vertex of $H_i$ to a vertex in the
    $k$-neighborhood of $v$. As the maximum degree of $L$ is at most $2k$, the size of the graph
    induced by the $k$-neighborhood of~$v$ is bounded by $(2k)^k$.
    We can then search for a homomorphism by brute-force; this takes time depending only on $k$.
    The final output is $1$ if a homomorphism is found from every connected component $H_i$
    and $0$ otherwise.

    The total running time is bounded by
    \[O(|V(L)|^2) + f(|V(H)|) \cdot O(|V(L)|) \leq f(|V(H)|) \cdot O(|V(L)|^2);\]
    this completes the proof.
\end{proof}

\subsection{An Explicit Criterion for Hardness of Counting Homomorphisms in Kőnig Graphs}

\Cref{thm:dec_lines_tract} in turn further motivates the study of the counting version:
The most interesting hardness results in counting complexity theory are concerned with problems
that admit a tractable decision version~\cite{Valiant79}.
In particular, we construct an explicit reduction from $\#\clique$ to prove the following
hardness result.
\begin{lemma}\label{lem:main_line_graphs}
    Let $\mathcal{H}$ be a recursively enumerable class of graphs.
    If $\mathcal{H}$ has unbounded treewidth and is closed under taking minors, then
    the problem $\#\homsprob{\mathcal{H}}{\symking}$ is $\#\W{1}$-hard.
\end{lemma}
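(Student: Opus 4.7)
The plan is to reduce from $\#\homsprob{\mathcal{G}_{\mathrm{grid}}}{\top}$, where $\mathcal{G}_{\mathrm{grid}}$ denotes the class of square grids. That problem is $\#\W{1}$-hard by the Dalmau--Jonsson dichotomy, since grids have unbounded treewidth. Because $\mathcal{H}$ is minor-closed and has unbounded treewidth, the Excluded Grid Theorem of Robertson and Seymour guarantees that every $(k\times k)$-grid $\grid_k$ lies in $\mathcal{H}$, so grids are legitimate source graphs for hardness queries to $\#\homsprob{\mathcal{H}}{\symking}$. Hence it suffices to show $\#\homsprob{\mathcal{G}_{\mathrm{grid}}}{\top}\fptred \#\homsprob{\mathcal{G}_{\mathrm{grid}}}{\symking}$.

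The core step is a gadget construction $\Phi$ taking an arbitrary host graph $G$ to a König graph $\Phi(G)$. To get the König property for free I would not construct $\Phi(G)$ directly, but instead construct a bipartite graph $B_G$ and set $\Phi(G) := L(B_G)$. The bipartite graph $B_G$ would be built from one small bipartite \emph{vertex gadget} $B_v$ for each $v \in V(G)$ and one small \emph{edge gadget} for each $uv \in E(G)$ that glues $B_u$ and $B_v$ together; making $B_G$ globally bipartite is easy, since one may fix a consistent $2$-coloring on each gadget in advance. The images of the vertex gadgets under $L(\cdot)$ form a Krausz-type clique partition of $\Phi(G)$, and the edge gadgets turn into shared clique-vertices, which is exactly the way edges of $G$ should appear.

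The key claim to verify is a count-preservation identity of the form
\[
\#\homs{\grid_k}{\Phi(G)} \;=\; \alpha_k \cdot \#\homs{\grid_k}{G},
\]
for a factor $\alpha_k$ depending only on $k$ (and on the gadget), so that dividing by $\alpha_k$ yields the desired reduction. The forward inclusion is straightforward: any homomorphism $\grid_k \to G$ lifts to $\alpha_k$ homomorphisms $\grid_k \to \Phi(G)$ by locally choosing decorations inside each vertex gadget. The other direction requires showing that any homomorphism $\grid_k \to L(B_G)$ essentially ``projects'' through $V(G)$: the Krausz partition combined with the rigidity of grid adjacencies forces each $\grid_k$-vertex to land in a specific clique coming from some $L(B_v)$. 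If a direct argument does not close, I would pass to color-prescribed counts and then reassemble via the quantum-graph/complexity-monotonicity machinery of \cref{lem:monotonicity}, exploiting \cref{fac:f_col_tensor} to keep the intermediate host in $\symking$. The main obstacle will be precisely this rigidity step: line graphs of bipartite graphs contain many $C_4$'s and triangles, and ruling out spurious grid homomorphisms that live inside a single gadget or wrap across gadget boundaries will require carefully sized vertex gadgets, possibly together with an inclusion--exclusion over ``homomorphism types'' to convert color-prescribed counts into plain counts with a uniform correction factor $\alpha_k$.
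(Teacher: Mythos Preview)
Your high-level strategy is right---build a König target via $L(\cdot)$ of a bipartite gadget graph, and use minor-closure plus the Excluded Grid Theorem to keep the source in $\mathcal{H}$---but the central identity
\[
\#\homs{\grid_k}{\Phi(G)} \;=\; \alpha_k \cdot \#\homs{\grid_k}{G}
\]
is the gap, and it is not repairable in the form you state. With the source graph unchanged, homomorphisms $\grid_k\to\Phi(G)$ have no reason to respect the gadget boundaries: a single vertex gadget $L(B_v)$ already contains cliques, so arbitrarily many grid vertices can be mapped into one gadget, and the count will not factor through $\#\homs{\grid_k}{G}$ by any uniform $\alpha_k$. Your proposed inclusion--exclusion over ``homomorphism types'' does not fix this, and the fallback to \cref{lem:monotonicity} via \cref{fac:f_col_tensor} is also problematic: that fact concerns $F$-colorability, and the class $\symking$ is not obviously closed under tensor products, so you cannot keep the oracle queries inside $\symking$.

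The paper's proof resolves this by transforming the \emph{source} as well: it applies the gadget construction $\xcrown{(\cdot)}$ to both $\grid_k$ and to the (already $\grid_k$-colored) target $G$, reducing instead from the color-prescribed problem $\#\cphomsprob{\boxplus}{\top}$. One then shows $\#\cphoms{\xcrown{\boxplus}_k^{\circ}}{\xcrown{G}}=\#\cphoms{\boxplus_k}{G}$, and recovers color-prescribed counts from plain counts $\#\homs{\xcrown{\boxplus}_k}{\xcrown{G}\setminus J}$ by inclusion--exclusion over the colors $J\subseteq V(\xcrown{\boxplus}_k)$ (König graphs are closed under vertex deletion, so these queries stay valid). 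The new source $\xcrown{\boxplus}_k$ is no longer a grid, but it is planar; since every planar graph is a minor of some grid and $\mathcal{H}$ is minor-closed with unbounded treewidth, $\xcrown{\boxplus}_k\in\mathcal{H}$. This last step is precisely where minor-closure is used, and it is the idea your proposal is missing.
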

Note that Kőnig graphs are a subset of the perfect graphs~\cite{ChudnovskyRST06},
as well as a subset of the line graphs (of arbitrary graphs). As line graphs are also
claw-free graphs~\cite{Beineke70}, Kőnig graphs are also a subset of the claw-free graphs.
Hence, the hardness result for the problem $\#\homsprob{\mathcal{H}}{\symking}$
extends to perfect graphs, line graphs, and claw-free graphs as well.

To prove \cref{lem:main_line_graphs}, we use a gadget construction that transforms an arbitrary
graph $G$ into a Kőnig graph such that the number of grid-like subgraphs remains stable.
In view of the diverse applications of the Grid-Tiling Problem
(see for instance\ \cite[Chapter~14.4.1]{CyganFKLMPPS15}), the construction might yield further
intractability results for counting problems on Kőnig graphs (and hence on claw-free and perfect graphs).

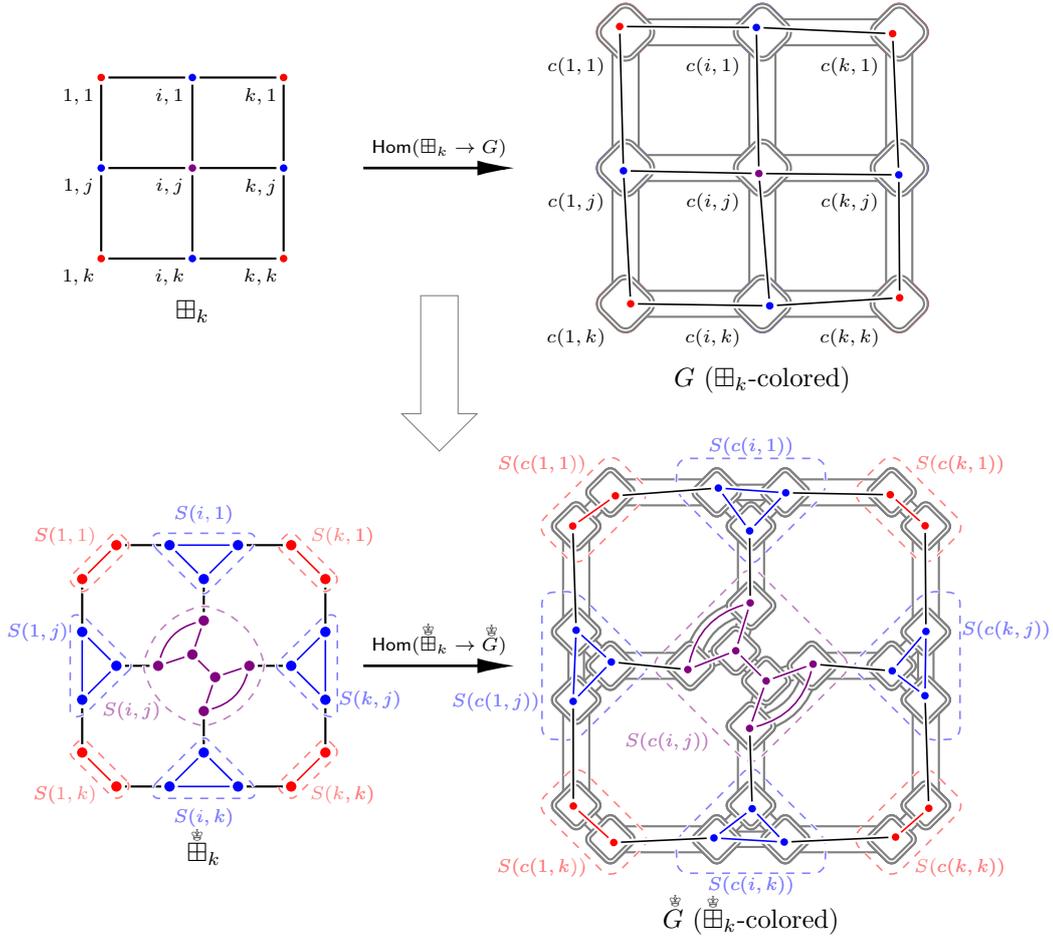
\begin{figure}[t]
    \centering
    \begin{tikzpicture}
        \pic at (0,-.4) {grid=3};
        \pic[kneser style=1] at (7.5,-.4) {grid=3};
        \draw[line width=1pt,-{Latex[length=5mm, width=2mm]}] (2.25,-0.4)
            --node[above,midway,sloped]
            {\scriptsize$\homs{\boxplus_k}{G}$} (4.25,-0.4);
        \node at (0,-2.3) {$\boxplus_k$};
        \node at (7.5,-3.2) {$G$ ($\boxplus_k$-colored)};

        \node[single arrow, rotate=-90, draw, gray] at (3.25,-3) {\color{white}{Reduction}};
        \pic[kneser style=20] at (0.15,-7) {grid=3};
    \pic[kneser style=21] at (7.35,-7) {grid=3};
\draw[line width=1pt,-{Latex[length=5mm, width=2mm]}] (2.25,-7) --node[above,midway,sloped]
    {\scriptsize$\homs{\xcrown{\boxplus}_k}{\xcrown{G}}$} (4.25,-7);
\node at (0.15,-9.4) {$\xcrown{\boxplus}_k$};
\node at (7.35,-10.3) {$\xcrown{G}$ (``$\xcrown{\boxplus}_k$-colored'')};
    \end{tikzpicture}

    \caption{General overview of the reduction from \cref{lem:main_line_graphs}:
        Corner vertices (red), border vertices (blue), and interior vertices (purple)
        get replaced by corresponding gadgets; the resulting graph $\xcrown{G}$ is a Kőnig graph.
        We use $c(\star)$ to denote the set of all vertices colored with $\star$ and
        $S(\star)$ to denote the gadget corresponding to $\star$; the numbers $i$ and $j$
        denote intermediate columns and rows.
    }\label{fig:genred}
\end{figure}%
\begin{figure}[t]
    \centering
    \begin{tikzpicture}
        \node[vertex,fill=\ccol!50] at (-3.85,3) (c10) {};
        \node[vertex,fill=\ccol] at (-4.15,3) (c0) {};
        \node[outer sep=.7pt, inner sep=0] at (-3.85,3) (10) {};
        \node[outer sep=.7pt, inner sep=0] at (-4.15,3) (0) {};
        \node[fit=(0)(10),draw=gray,double=white,
            rounded corners=5pt]{};
        \node[] at (-4.65,3.15) (c1) {};
        \node[] at (-4.65,2.85) (c2) {};
        \node[] at (-4.15,2.35) (c3) {};
        \node[] at (-4.65,3.45) (c11) {};
        \node[] at (-4,2.35) (c12) {};
        \node[] at (-3.7,2.35) (c13) {};
        \draw[white, double=gray] (c0) -- (c1);
        \draw[white, double=gray] (c0) -- (c2);
        \draw[white, double=gray] (c0) -- (c3);
        \draw[white, double=gray] (c10) -- (c11);
        \draw[white, double=gray] (c10) -- (c12);
        \draw[white, double=gray] (c10) -- (c13);
        \node at (-3.25,2.8) {\scriptsize $c(k,1)$};

        \begin{scope}[shift={(.5,0)}]
            \node[vertex,fill=\ecol!50] at (.15,3) (c10) {};
            \node[vertex,fill=\ecol] at (-.15,3) (c0) {};
            \node[outer sep=.7pt, inner sep=0] at (-.15,3) (10) {};
            \node[outer sep=.7pt, inner sep=0] at (.15,3) (0) {};
            \node[fit=(0)(10),draw=gray,double=white,
                rounded corners=5pt]{};
            \node[] at (-.65,2.85) (c2) {};
            \node[] at (-.65,3.15) (c1) {};
            \node[] at (-.65,3.45) (c11) {};
            \node[] at (.65,3.15) (d2) {};
            \node[] at (.65,3.45) (d1) {};
            \node[] at (-.3,2.35) (c3) {};
            \node[] at (0,2.35) (c12) {};
            \node[] at (.3,2.35) (c13) {};
            \draw[white, double=gray] (c0) -- (c1);
            \draw[white, double=gray] (c0) -- (c2);
            \draw[white, double=gray] (c0) -- (c3);
            \draw[white, double=gray] (c10) -- (c11);
            \draw[white, double=gray] (c10) -- (d1);
            \draw[white, double=gray] (c10) -- (d2);
            \draw[white, double=gray] (c0) -- (c12);
            \draw[white, double=gray] (c10) -- (c13);
            \node at (.75,2.8) {\scriptsize $c(i,1)$};
        \end{scope}

        \begin{scope}[shift={(5.6,0)}]
            \node[vertex,fill=\icol!50] at (.15,3) (c10) {};
            \node[vertex,fill=\icol] at (-.15,3) (c0) {};
            \node[outer sep=.7pt, inner sep=0] at (-.15,3) (10) {};
            \node[outer sep=.7pt, inner sep=0] at (.15,3) (0) {};
            \node[fit=(0)(10),draw=gray,double=white,
                rounded corners=5pt]{};
            \node[] at (-.65,3.2) (c2) {};
            \node[] at (-.65,3.4) (c1) {};
            \node[] at (-.65,3.6) (c11) {};
            \node[] at (.65,3.2) (d2) {};
            \node[] at (.65,3.4) (d1) {};
            \node[] at (.65,3.6) (d3) {};
            \node[] at (-.15,2.35) (c3) {};
            \node[] at (.15,2.35) (c12) {};
            \node[] at (-.15,3.65) (e3) {};
            \node[] at (.15,3.65) (e12) {};
            \draw[white, double=gray] (c0) -- (c2);
            \draw[white, double=gray] (c0) -- (c3);
            \draw[white, double=gray] (c0) -- (e3);
            \draw[white, double=gray] (c0) -- (d1);
            \draw[white, double=gray] (c0) -- (d3);
            \draw[white, double=gray] (c10) -- (d2);
            \draw[white, double=gray] (c10) -- (c1);
            \draw[white, double=gray] (c10) -- (c11);
            \draw[white, double=gray] (c10) -- (c12);
            \draw[white, double=gray] (c10) -- (e12);
            \node at (.75,2.8) {\scriptsize $c(i,j)$};
        \end{scope}

        \node[single arrow, rotate=-90, draw,gray] at (-4,2) {\color{white}{R}};
        \node[single arrow, rotate=-90, draw,gray] at (.5,2) {\color{white}{R}};
        \node[single arrow, rotate=-90, draw,gray] at (5.6,2) {\color{white}{R}};

        \pic[kneser style=30] at (-4.4,-1) {corner=3/3/3};
        \pic[kneser style=30] at (0,-1) {edge=1/3/3};
        \pic[kneser style=30] at (5.6,-.5) {inner=2/2/3};
    \end{tikzpicture}
    \caption{The gadgets in detail.}\label{fig:gad}
\end{figure}
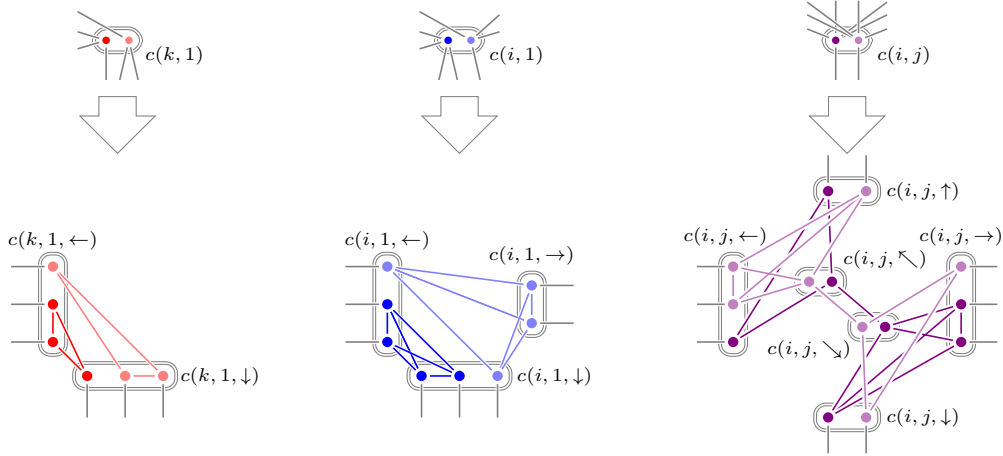

\begin{proof}
    We write $\boxplus_k$ for the $k \times k$ square grid, that is, the graph with the vertices\[
        V(\boxplus_k) := \{(i,j)\mid i,j \in [k] \},\]
    and two vertices $(i,j)$ and $(i',j')$ are adjacent if $|i-i'|+|j-j'| = 1$.
    Now let $\boxplus$ be the set of all square grids $\boxplus_k$ for $k \in \mathbb{N}$.
    We prove a reduction from the problem $\#\cphomsprob{\boxplus}{\top}$, which is known to be
    $\#\W{1}$-hard and constitutes an important intermediate step in the proof of the
    classification of the homomorphism counting problem due to Dalmau and
    Jonsson~\cite{DalmauJ04}. A sketch of the $\#\W{1}$-hardness proof can be found in
    \cref{sec:hom_hard_strategy} and the full proof can be found
    for instance\ in~\cite[Lemma~5.7]{Curticapean15}.

    Let us recall the definition of the problem $\#\cphomsprob{\boxplus}{\top}$.
    This problem expects as input a pair of a square grid $\boxplus_k$ and a graph $G$ that
    is $\boxplus_k$-colored by some given coloring~$c$.
    The task is to compute the number of color-prescribed homomorphisms from $\boxplus_k$ to $G$,
    that is, homomorphisms $h \in \homs{\boxplus_k}{G}$ that additionally satisfy $c(h(v))=v$
    for every vertex~$v$ of the grid.

    For the first part of the reduction, we present a construction that maps a
    $\boxplus_k$-colored graph~$G$ to a vertex-colored Kőnig graph $\xcrown{G}$;
    consider \cref{fig:genred} for an overview of the construction.
    Let $c$ be the coloring of $G$. We partition the vertices of $G$ into three disjoint sets
    (again, consider \cref{fig:genred}):
    \begin{enumerate}[~1~]
        \item A vertex $v$ is called a \emph{corner vertex} if its coloring $c(v)$
            is one of the values $(1,1)$, $(1,k)$, $(k,1)$, and $(k,k)$.
        \item A vertex $v$ is called a \emph{border vertex} if its coloring $c(v)$
            satisfies\[
                c(v)\in \{(i,j)\in[k]^2\mid i \in \{1,k\} \vee j \in\{1,k\} \}
                \setminus \{(1,1),(1,k),(k,1),(k,k)\}.\]
        \item All remaining vertices are called \emph{interior vertices}.
    \end{enumerate}
    We construct a gadget (graph) $S(v)$ for each vertex $v \in V(G)$.
    Here, the graph $S(v)$ depends on whether $v$ is a corner, a border or an interior vertex.
    \begin{enumerate}[~1~]
        \item The vertex $v$ is a corner vertex.
            Assume that $c(v)=(1,1)$; the other cases are symmetric.
            Now let $N_{\rightarrow}$ be the set of neighbors of $v$ that are colored by $c$
            with $(1,2)$ and let $N_{\downarrow}$ be the set of all neighbors of $v$ that are
            colored by $c$ with $(2,1)$.
            Note that that those are all neighbors of $v$ as $G$ is $\boxplus_k$-colored.
            For each vertex $u \in N_{\rightarrow}$, we add a vertex $v^u_\rightarrow$ and color
            it with $(1,1,\rightarrow)$. For each vertex $u \in N_{\downarrow}$,
            we add a vertex $v^u_\downarrow$ and color it with $(1,1,\downarrow)$.
            The graph $S(v)$ is then obtained by making all of the previous vertices adjacent
            to each other.
        \item The vertex $v$ is a border vertex.
            Assume that $c(v)=(1,j)$; the other cases are symmetric.
            Now let $N_{\rightarrow}$ be the set of neighbors of $v$ that are colored by $c$
            with $(1,j+1)$, let $N_{\leftarrow}$ be the set of neighbors of $v$ that are colored
            by $c$ with $(1,j-1)$, and let $N_{\downarrow}$ be the set of all neighbors of $v$ that
            are colored by $c$ with $(2,j)$.
            For each vertex $u \in N_{\rightarrow}$, we add a vertex $v^u_\rightarrow$
            and color it with $(1,i,\rightarrow)$.
            We proceed similarly with $N_{\leftarrow}$ and $N_{\downarrow}$.
            The graph $S(v)$ is then obtained by making all of the previous vertices adjacent.
        \item The vertex $v$ is an interior vertex.
            Let $v$ have color $c(v)=(i,j)$ and let $N_{\rightarrow}$, $N_{\leftarrow}$,
            $N_{\uparrow}$ and $N_{\downarrow}$ be the sets of neighbors of $v$ that are colored
            by $c$ with $(i,j+1)$, $(i,j-1)$, $(i-1,j)$ and $(i+1,j)$, respectively.
            We add a new vertex $v^u_\rightarrow$ for each vertex $u \in N_{\rightarrow}$
            and color it with $(i,j,\rightarrow)$; we proceed similarly with the sets
            $N_\uparrow,N_\leftarrow$, and $N_\downarrow$.
            Next, we add two new vertices $v^\nwarrow$ and $v^{\searrow}$,
            color them $(i,j,\nwarrow)$ and $(i,j,\searrow)$, and connect them by an edge.
            Then we create two cliques: The first clique contains the vertex $v^\star$ and all
            vertices that we colored with $(i,j,\leftarrow)$ or with $(i,j,\uparrow)$.
            The second clique contains the vertex $v^\searrow$ and all vertices that we colored
            with $(i,j,\rightarrow)$ or with $(i,j,\downarrow)$. The resulting graph is $S(v)$.
    \end{enumerate}
    The graph $\xcrown{G}$ is obtained by connecting the gadgets as follows:
    Let $\{v,w\} \in E(G)$ denote an edge of $G$ and assume that the vertex $v$ has color
    $c(v)=(i,j)$ and the vertex $w$ has color $c(w)=(i,j+1)$; the remaining cases are processed
    similarly. By construction, the graph $S(v)$ contains a vertex $v^{w}_\rightarrow$ and
    the graph $S(w)$ contains a vertex $w^{v}_\leftarrow$. We connect those two vertices with an
    edge.

    We first observe that this construction yields a planar graph if it is applied to the grid
    itself (Again, consider \cref{fig:genred}). Further, when applied to the graph $G$,
    we indeed obtain a Kőnig graph:
    \begin{claim}
        If the graph $G$ is $\boxplus_k$-colored, then the graph $\xcrown{G}$ is a Kőnig graph.
    \end{claim}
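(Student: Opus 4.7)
The plan is to show that $\xcrown{G}$ admits an edge-partition into cliques together with a two-coloring (red/blue) of the cliques such that every vertex of $\xcrown{G}$ lies in at most one red and at most one blue clique. By the Krausz-type characterization of line graphs of bipartite graphs (introduce one preimage vertex per clique, with the two bipartition classes being the red and the blue cliques; each vertex of $\xcrown{G}$ becomes an edge of the preimage graph joining its unique red clique to its unique blue clique), this is exactly what is required.

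The clique partition can be read off directly from the construction. I would declare as red cliques the cliques built inside the gadgets $S(v)$: for each corner or border vertex $v$ of $G$, the single clique of $S(v)$; and for each interior vertex $v$, the two cliques containing $v^\nwarrow$ (the NW clique) and $v^\searrow$ (the SE clique), respectively. I would declare as blue cliques the ``linking'' edges: the edge $\{v^\nwarrow, v^\searrow\}$ for each interior $v$, together with the single edge introduced between $S(v)$ and $S(w)$ for each $G$-edge $\{v,w\}$. One then has to check that these cliques actually partition $E(\xcrown{G})$: the only mildly subtle point is that inside an interior gadget there are no edges between NW-vertices and SE-vertices other than the designated edge $\{v^\nwarrow, v^\searrow\}$, which is explicit from case~(3) of the definition of $S(v)$.

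The remaining step is a straightforward case analysis over the vertex types of $\xcrown{G}$ to verify the at-most-one-of-each-color condition. An auxiliary vertex $v^u_\rightarrow$ or $v^u_\downarrow$ of an interior gadget belongs to the SE red clique of $S(v)$ and to exactly one blue linking edge (the one to $S(u)$, since by construction $v^u_\rightarrow$ is created solely for the neighbor $u$); the symmetric argument handles $v^u_\leftarrow$ and $v^u_\uparrow$ with the NW red clique, and the corner/border cases are analogous but simpler as there is only one red clique per gadget. The vertices $v^\nwarrow$ and $v^\searrow$ of an interior gadget each lie in exactly one red clique (NW and SE, respectively) and in the single blue edge $\{v^\nwarrow, v^\searrow\}$. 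I do not expect a serious obstacle beyond this bookkeeping; the key structural points to keep in mind are that the two cliques of each interior gadget are edge-disjoint apart from the designated blue bridge, and that every auxiliary vertex is created for exactly one $G$-neighbor and hence belongs to exactly one blue linking edge.
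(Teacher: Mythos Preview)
Your proposal is correct and is essentially the same argument as the paper's, just phrased through the Krausz-type characterization rather than by explicitly building the primal bipartite graph. The paper takes each gadget clique as the line graph of a star $K_{1,d}$, observes that adding a single edge between two cliques corresponds to merging two ray-endpoints of the respective stars, and then checks that the resulting primal is bipartite via the $2$-coloring ``centers $\mapsto 1$, (merged) rays $\mapsto 2$''; this is exactly your red/blue clique partition, with red cliques playing the role of star centers and blue linking edges playing the role of merged rays.
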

    \begin{claimproof}
        We construct a bipartite graph $B$ such that the line graph $L(B)$ of $B$ is
        the graph $\xcrown{G}$. To this end, we observe that the gadgets of corner and border
        vertices are cliques, and the gadgets of interior vertices are two cliques that are
        connected by a single edge. Hence, the entire graph $\xcrown{G}$ is obtained by connecting
        vertex disjoint cliques with edges that are pairwise disjoint.

        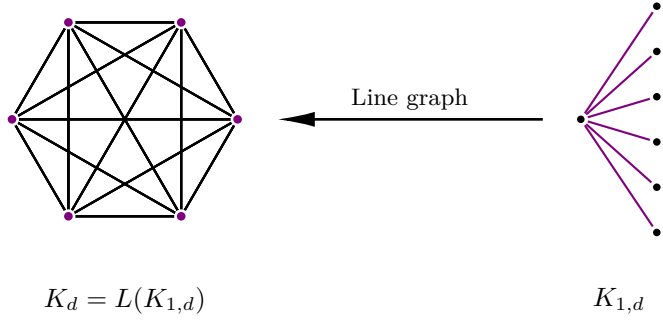
\begin{figure}
            \centering
            \begin{tikzpicture}
                \pgfmathsetmacro{\d}{6}

                \pic[kneser style=13,vertex fill color=\icol] at (-\d,0) {kneser=\d/1};

                \node at (-\d,-\d/2.5) {$K_d = L(K_{1,d})$};

                \draw[line width=1pt,-{Latex[length=5mm, width=2mm]}]
                (-.5,0) --node[above,midway] {\small Line graph} (-\d/1.5,0);

                \node[vertex] (a) at (0,0) {};
                \foreach\i in {1,...,\d}{
                    \node[vertex] (b\d) at (1,-.5*.6+-.6*\d/2+\i*.6) {};
                    \draw[thick,\icol] (b\d) -- (a);
                }
                \node at (.5,-\d/2.5) {$K_{1,d}$};

            \end{tikzpicture}
            \caption{A clique $K_d$ of size $d$ is the line graph of a star $K_{1,d}$ with
            $d$ rays.}\label{fig:stln}
        \end{figure}

        Now observe that cliques are the line graphs of stars.
        More precisely, let $K_{1,d}$ be the complete bipartite graph with $1$ vertex on the left
        side and $d$ vertices on the right side, then its line graph $L(K_{1,d})$ is the clique of
        size $d$; consider \cref{fig:stln} for a visualization.
        In what follows, we say that the single vertex on the left side of $K_{1,d}$ is the
        \emph{center} and the $d$ vertices on the right side are the \emph{rays}.

        Now, adding an edge between two vertex disjoint cliques
        corresponds to merging the right vertices of the corresponding rays of the primal graphs
        (Consider \cref{fig:stln2} for a visualization).
        Consequently, we can construct a graph $B$ whose line graph is $\xcrown{G}$ by
        merging right vertices of the rays corresponding to the edges that connect the cliques
        of the gadgets.

        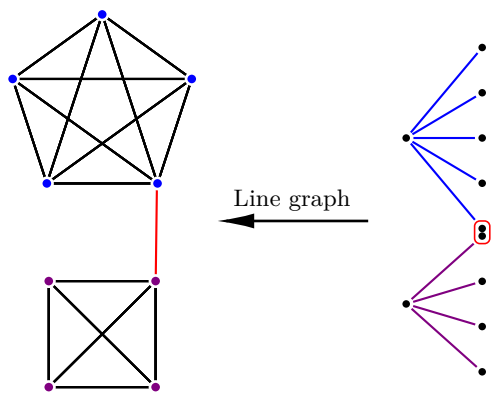
\begin{figure}
            \centering
            \begin{tikzpicture}
                \pgfmathsetmacro{\d}{6}

                \draw[thick,\ccol] (-5.295,-.73) -- (-5.28,.46);
                \pic[kneser style=13,vertex fill color=\ecol] at (-6,1.5) {kneser=5/1};
                \pic[kneser style=13,vertex fill color=\icol] at (-6,-1.5) {kneser=4/1};

                \node[vertex] (a) at (-2,1.1) {};
                \node[vertex] (c) at (-2,-1.1) {};
                \foreach\i in {1,...,5}{
                    \node[vertex] (b\i) at (-1,-.5*.6+-.6*5/2+\i*.6+1.1) {};
                    \draw[thick,\ecol] (b\i) -- (a);
                    \ifthenelse{\i<5}{
                        \node[vertex] (d\i) at (-1,-.5*.6+-.6*4/2+\i*.6-1.1) {};
                        \draw[thick,\icol] (d\i) -- (c);
                    }{}
                }
                \draw[white,double=\ccol,thick,rounded corners=2pt] ($(b1)+(-.1,-.2)$) rectangle ($(d4)+(.1,.2)$);

                \draw[line width=1pt,-{Latex[length=5mm, width=2mm]}]
                (-2.5,0) --node[above,midway] {\small Line graph} (-4.5,0);

            \end{tikzpicture}
            \caption{Connecting two vertex disjoint cliques corresponds to merging
            vertices in the corresponding primal graphs. Note that the resulting primal
            graph stays bipartite.}\label{fig:stln2}
        \end{figure}

        Finally, it is easy to see that $B$ is bipartite: A $2$-coloring is given by the function
        that maps the centers to $1$ and the (identifications of) rays to $2$.
    \end{claimproof}

    Now, recall that we colored the vertices of $\xcrown{G}$ with triples $(i,j,\star)$,
    where $\star$ is one of the symbols $\rightarrow,\leftarrow,\uparrow,\downarrow,\nwarrow$,
    and~$\searrow$.
    Call this mapping \(\xcrown{c}\) and observe
    that \(\xcrown{c}\) is in fact not a
    $\xcrown{\boxplus}_k$-coloring of $\xcrown{G}$,
    as two vertices of $G$ of the same color are adjacent in the gadget
    construction.

    Hence, define a \emph{weakly color-prescribed} homomorphism
    as a \(h \in \homs{\xcrown{\boxplus}_k}{\xcrown{G}}\) such that
    the mapping \(h \circ\xcrown{c}\) is the identity.
    We abuse notation and write
    \(\cphoms{\xcrown{\boxplus}_k}{\xcrown{G}}\) for the set of all weakly
    color-prescribed homomorphisms.

    \begin{claim}
        We have \(\#\cphoms{\xcrown{\boxplus}_k}{\xcrown{G}}
        = \#\cphoms{\boxplus_k}{G}\).
    \end{claim}
    \begin{claimproof}
        Let $h \in \cphoms{\boxplus_k}{G}$ denote a color-prescribed homomorphism.
        We define the mapping $\xcrown{h} \in
        \cphoms{\xcrown{\boxplus}_k}{\xcrown{G}}$
        as follows: For every $i,j \in [k]$ we set
        \begin{alignat*}{3}
            \xcrown{h}(i,j,\rightarrow) & :=
            h(i,j)^{h(i,j+1)}_{\rightarrow};\quad &
            \xcrown{h}(i,j,\leftarrow)  & := h(i,j)^{h(i,j-1)}_{\leftarrow}  \\
            \xcrown{h}(i,j,\downarrow)  & :=
            h(i,j)^{h(i+1,j)}_{\downarrow};
                                        &
            \xcrown{h}(i,j,\uparrow)    & := h(i,j)^{h(i-1,j)}_{\uparrow}    \\
            \xcrown{h}(i,j,\searrow)    & := h(i,j)^\searrow;
                                        &
            \xcrown{h}(i,j,\nwarrow)    & := h(i,j)^\nwarrow
        \end{alignat*}
        The construction of $\xcrown{G}$ immediately yields that $\xcrown{h}$ is a
        (weakly color-prescribed)
        homomorphism if~$h$~is color-prescribed.
        Further, the mapping $h \mapsto \xcrown{h}$ is a bijection.
    \end{claimproof}

    Now let $h$ be a homomorphism from $\xcrown{\boxplus}_k$ to $\xcrown{G}$
    for some $\boxplus_k$-colored graph $G$.
    We call $h$ \emph{colorful} if for every color (or vertex) $(i,j,\star)$ in
    $V(\xcrown{\boxplus}_k)$, there is a vertex in the image of $h$ that is colored with
    $(i,j,\star)$. Denote the set of all colorful homomorphisms from $\xcrown{\boxplus}_k$ to
    $\xcrown{G}$ with $\cfhoms{\xcrown{\boxplus}_k}{\xcrown{G}}$.

    \begin{claim}
        We have $\#\cfhoms{\xcrown{\boxplus}_k}{\xcrown{G}} = \#\auts{\xcrown{\boxplus}_k} \cdot \#\cphoms{\xcrown{\boxplus}_k}{\xcrown{G}}$.
    \end{claim}
    \begin{claimproof}
        Say that two colorful homomorphisms are equivalent if their image is equal.
        Then, every equivalence class has size $\#\auts{\xcrown{\boxplus}_k}$ and is represented
        by a homomorphism for which the induced automorphism is the identity.
        This homomorphism is then not only colorful but also (weakly) color-prescribed.
    \end{claimproof}

    \begin{claim}
        We can compute \(\#\cfhoms{\xcrown{\boxplus}_k}{\xcrown{G}}\)
        in time \(2^{|V(\xcrown{\boxplus}_k)|} \cdot |V(\xcrown{G})|^{O(1)}\)
        by querying the oracle for $\#\homsprob{\mathcal{H}}{\symking}$.
        Further, every oracle query $(\hat{H},\hat{G})$ satisfies that the size~$|V(\hat{H})|$
        depends only on the size $|V(\xcrown{\boxplus}_k)|$.
    \end{claim}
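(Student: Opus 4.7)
The plan is to employ the standard color-counting inclusion-exclusion over the color palette $V(\xcrown{\boxplus}_k)$. For each subset $S \subseteq V(\xcrown{\boxplus}_k)$, I would define the induced subgraph $\xcrown{G}_S := \xcrown{G}[\hat{c}^{-1}(S)]$, so that a homomorphism from $\xcrown{\boxplus}_k$ into $\xcrown{G}_S$ is precisely a homomorphism $\xcrown{\boxplus}_k \to \xcrown{G}$ whose image uses only colors from $S$. A Möbius-type inversion over the subset lattice then produces
\[\#\cfhoms{\xcrown{\boxplus}_k}{\xcrown{G}} = \sum_{S \subseteq V(\xcrown{\boxplus}_k)} (-1)^{|V(\xcrown{\boxplus}_k)| - |S|} \cdot \#\homs{\xcrown{\boxplus}_k}{\xcrown{G}_S},\]
and this identity is the engine of the proof: it converts the colorful count into a signed sum of ordinary homomorphism counts, each of which is a legal query to the oracle for $\#\homsprob{\mathcal{H}}{\symking}$, provided that the instances on the right-hand side lie in the promise.

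For the queries to be well-posed, I need each $\xcrown{G}_S$ to remain a König graph. The key observation will be that the class $\symking$ is closed under taking induced subgraphs: since the previous claim supplies a bipartite graph $B$ with $\xcrown{G} = L(B)$, the vertex subset $\hat{c}^{-1}(S) \subseteq V(\xcrown{G}) = E(B)$ selects a set of edges in $B$, and $\xcrown{G}[\hat{c}^{-1}(S)]$ coincides with $L(B')$, where $B'$ is the subgraph of $B$ consisting of exactly those edges (together with their endpoints). Since any subgraph of a bipartite graph is bipartite, $\xcrown{G}_S \in \symking$. The companion requirement $\xcrown{\boxplus}_k \in \mathcal{H}$ is the sole place where the hypotheses on $\mathcal{H}$ (minor-closed, unbounded treewidth) enter, and it will be handled by appealing to the Excluded Grid Theorem together with the planarity of the bounded-size blowup $\xcrown{\boxplus}_k$, so that all sufficiently large $\xcrown{\boxplus}_k$ occur as minors (hence members) of $\mathcal{H}$.

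The remaining bookkeeping is routine: the outer sum ranges over $2^{|V(\xcrown{\boxplus}_k)|}$ subsets, each $\xcrown{G}_S$ is computable from $\xcrown{G}$ in time polynomial in $|V(\xcrown{G})|$, and the left-hand side of every oracle query is simply $\xcrown{\boxplus}_k$ itself, so $|V(\hat{H})|$ depends only on $|V(\xcrown{\boxplus}_k)|$ as required. Summing the query times gives the claimed bound of $2^{|V(\xcrown{\boxplus}_k)|} \cdot |V(\xcrown{G})|^{O(1)}$. The only conceptually non-trivial step is the closure of König graphs under induced subgraphs in the specific form needed above; the rest is textbook inclusion-exclusion, so this claim is more of a packaging exercise than a genuinely hard lemma.
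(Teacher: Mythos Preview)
Your proposal is correct and follows essentially the same approach as the paper: the same inclusion--exclusion over the color palette (the paper indexes by the complement $J = V(\xcrown{\boxplus}_k)\setminus S$, a trivial reindexing), the same closure argument that deleting vertices of a K\"onig graph amounts to deleting edges of the primal bipartite graph, and the same appeal to the Excluded Grid Theorem plus planarity of $\xcrown{\boxplus}_k$ and minor-closedness of $\mathcal{H}$ to ensure $\xcrown{\boxplus}_k\in\mathcal{H}$. The only cosmetic difference is that the paper additionally cites the fact that every planar graph is a minor of some grid, which is what turns ``sufficiently large'' into ``every'' $k$.
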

    \begin{claimproof}
        By the principle of ``inclusion and exclusion'', we have that the number of
        colorful homomorphisms can be computed as
        \begin{equation}\label{eq:king_incl_excl}
            \#\cfhoms{\xcrown{\boxplus}_k}{\xcrown{G}} =
            \sum_{J\subseteq V(\xcrown{\boxplus}_k)} (-1)^{|J|} \cdot
            \#\homs{\xcrown{\boxplus}_k}{\xcrown{G}\setminus J},
        \end{equation}
        where $\xcrown{G}\setminus J$ is the graph obtained from $\xcrown{G}$ by deleting all
        vertices that are colored by $\hat{c}$ with a color in $J$.
        Hence we can compute, using the previous equation, the number of colorful homomorphisms
        in time $2^{|V(\xcrown{\boxplus}_k)|} \cdot |V(\xcrown{G})|^{O(1)}$ if an oracle to the
        function $A \mapsto \#\homs{\xcrown{\boxplus}_k}{A}$ is provided.
        In particular, Kőnig graphs are closed under the removal of
        vertices: Deleting a vertex in a Kőnig graph is equivalent to deleting an edge in primal bipartite graph and bipartiteness is closed under the removal of edges. It hence suffices to restrict the graphs $A$ to the class
        $\symking$.

        Now recall that the graph $\xcrown{\boxplus}_k$ is planar.
        By the Excluded Grid Theorem~\cite{RobertsonS86-ExGrid}, every class $\mathcal{H}$ of
        unbounded treewidth contains arbitrary large grids as minors.
        Further, every planar graph is the minor of some grid~\cite{RobertsonST94}.
        As the class $\mathcal{H}$ is minor-closed, we hence obtain that $\xcrown{\boxplus}_k$
        is contained in $\mathcal{H}$ for every $k \in \mathbb{N}$. Consequently, we can
        compute~\eqref{eq:king_incl_excl} using the given oracle for
        $\#\homsprob{\mathcal{H}}{\symking}$.
    \end{claimproof}

    Combining the previous claims, we obtain the claimed result.
\end{proof}

Now, consider the following application of \cref{lem:main_line_graphs}.
\begin{theorem}\label{cor:minor_closed_classification}
    Let $\mathcal{C}$ be one of the classes of line-graphs, claw-free graphs or perfect graphs,
    or a non-empty union thereof.
    Further, let $\mathcal{H}$ be a recursively enumerable class of graphs.
    \begin{enumerate}[~1~]
        \item If the treewidth of the class $\mathcal{H}$ is bounded, then the problem
            $\#\homsprob{\mathcal{H}}{\mathcal{C}}$ is solvable in polynomial time.
        \item Otherwise, if the class $\mathcal{H}$ is additionally minor-closed, the problem
            $\#\homsprob{\mathcal{H}}{\mathcal{C}}$ is $\#\W{1}$-hard.
    \end{enumerate}
\end{theorem}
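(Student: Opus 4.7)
\textbf{Proof plan for \cref{cor:minor_closed_classification}.}
The plan is to obtain both parts of the corollary as direct consequences of results already available in the paper, together with the simple observation that König graphs are a common subclass of all three graph classes under consideration.

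For part (1), I would invoke the classical result of Dalmau and Jonsson~\cite{DalmauJ04}: if $\mathcal{H}$ has bounded treewidth, then $\#\homsprob{\mathcal{H}}{\top}$ is polynomial-time solvable. Since the problem $\#\homsprob{\mathcal{H}}{\mathcal{C}}$ is an instance-restriction of $\#\homsprob{\mathcal{H}}{\top}$ (its promise lies inside the promise $\mathcal{H}\times\top$), any algorithm for the latter yields one for the former by the identity reduction. This handles (1) uniformly for every choice of $\mathcal{C}$.

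For part (2), the key point is that $\symking \subseteq \mathcal{C}$ in every case: König graphs are line graphs by definition; as line graphs they are claw-free by Beineke's theorem~\cite{Beineke70}; and they are perfect by König's theorem (see~\cite{ChudnovskyRST06}). A non-empty union of these classes of course also contains $\symking$. I would then give the trivial parsimonious reduction \[\#\homsprob{\mathcal{H}}{\symking} \parsired \#\homsprob{\mathcal{H}}{\mathcal{C}}\] that maps each instance $(H,G)$ to itself: if $G\in\symking$, then $G\in\mathcal{C}$ because $\symking\subseteq\mathcal{C}$, so the mapped instance satisfies the target promise, the homomorphism count is unchanged, and the parameter is preserved. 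Applying this reduction to \cref{lem:main_line_graphs} (which gives $\#\W{1}$-hardness of $\#\homsprob{\mathcal{H}}{\symking}$ for any recursively enumerable, minor-closed class $\mathcal{H}$ of unbounded treewidth), I obtain $\#\W{1}$-hardness of $\#\homsprob{\mathcal{H}}{\mathcal{C}}$.

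There is essentially no technical obstacle here: all the difficulty has been absorbed into \cref{lem:main_line_graphs}, whose gadget construction performs the real reduction from $\#\cphomsprob{\boxplus}{\top}$ to counting homomorphisms into König graphs. The only point to double-check is that the inclusion $\symking\subseteq\mathcal{C}$ indeed holds for each of the listed classes and their unions, which is immediate from the standard characterisations cited above. Hence both parts of the corollary follow by combining part~(1) with the black-box application of \cref{lem:main_line_graphs} through the trivial identity reduction.
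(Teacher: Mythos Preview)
Your proposal is correct and follows essentially the same approach as the paper's own proof: part~(1) via the identity reduction to $\#\homsprob{\mathcal{H}}{\top}$ and Dalmau--Jonsson, and part~(2) via \cref{lem:main_line_graphs} together with the inclusion $\symking\subseteq\mathcal{C}$ for each of the listed classes.
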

\begin{proof}
    We immediately obtain the reduction
    $\#\homsprob{\mathcal{H}}{\mathcal{C}} \fptred \#\homsprob{\mathcal{H}}{\top}$.
    In particular, the reduction is the identity and preserves not only fixed-parameter
    tractability, but also polynomial-time tractability.

    By the classification of Dalmau and Jonsson~\cite{DalmauJ04},
    the problem $\#\homsprob{\mathcal{H}}{\top}$ is solvable in polynomial time if
    the class $\mathcal{H}$ has bounded treewidth.
    If the class $\mathcal{H}$ has unbounded treewidth, $\#\W{1}$-hardness follows from
    \cref{lem:main_line_graphs}, as the set of Kőnig graphs is a subset of claw-free
    graphs~\cite{Beineke70}, a subset of perfect graphs~\cite{ChudnovskyRST06} and, of course,
    a subset of line graphs.
\end{proof}

\subsection{An Implicit Criterion for Hardness of Counting Homomorphisms in Kőnig Graphs}

We complement the explicit criterion for hardness for the problem $\#\homsprob{\mathcal{H}}{\mathcal{\symking}}$
from \cref{cor:minor_closed_classification} (which  works only if the class $\mathcal{H}$ is closed under
taking minors) with the following implicit exhaustive complexity classification.

\begin{theorem}\label{thm:implicit_line_dicho}
    Let $\mathcal{H}$ be a recursively enumerable class of graphs.
    Then the problem $\#\homsprob{\mathcal{H}}{\symking}$ is either fixed-parameter tractable
    or $\#\W{1}$-hard under parameterized Turing-reductions.\lipicsEnd
\end{theorem}

In particular, \cref{thm:implicit_line_dicho} shows that the negative result from
\cref{sec:ladnersection} does not apply to Kőnig graphs.

A central ingredient for the proof of \cref{thm:implicit_line_dicho} is the following lemma.
\begin{lemma}\label{lem:qgraphs_linegraphs}
    Let $H$ be a graph. There is a quantum graph $Q[H]$ such that we have for any bipartite
    graph $G$:
    \begin{equation}\label{eq:line_graph_a}
        \#\homs{H}{L(G)} = \#\homs{Q[H]}{G}.
    \end{equation}
    In particular, the mapping $H \mapsto Q[H]$ is computable.\lipicsEnd
\end{lemma}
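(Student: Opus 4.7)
The plan is to construct $Q[H]$ explicitly as a signed rational combination of graphs derived from $H$, by parametrizing homomorphisms $\phi : H \to L(G)$ via ``port-ordered'' configurations and then applying inclusion-exclusion to enforce the correct shared-endpoint condition. A homomorphism $\phi \in \homs{H}{L(G)}$ is an assignment $v \mapsto \phi(v) \in E(G)$ with the property that, for every $\{u,v\} \in E(H)$, the edges $\phi(u)$ and $\phi(v)$ are distinct and share exactly one endpoint. To encode this combinatorially, I would split every $v \in V(H)$ into two ports $v_L, v_R$ that label the two endpoints of $\phi(v)$; each $\phi$ admits exactly $2^{|V(H)|}$ such port-labelings.

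Given a port-labeling, every edge $\{u,v\} \in E(H)$ determines a \emph{gluing pattern} $\pi(\{u,v\}) \in \{L,R\}^2$ recording which port of $u$ coincides with which port of $v$ at the shared endpoint. For a fixed $\pi : E(H) \to \{L,R\}^2$, I define the auxiliary graph $H^\pi$ as the quotient of the ``blow-up'' graph on $V(H) \times \{L,R\}$ (with edges $\{v_L, v_R\}$ for each $v \in V(H)$) by the equivalence relation generated by $u_i \sim v_j$ for every $\{u,v\} \in E(H)$ with $\pi(\{u,v\}) = (i,j)$. Any homomorphism $\psi \in \homs{H^\pi}{G}$ then yields an edge-assignment $v \mapsto \{\psi([v_L]), \psi([v_R])\} \in E(G)$ satisfying the prescribed shared-endpoint conditions; the only way this fails to correspond to a valid homomorphism into $L(G)$ is if, for some edge $\{u,v\} \in E(H)$ with $\pi(\{u,v\}) = (i,j)$, the two \emph{other} ports also collapse under $\psi$, that is, $\psi([u_{i'}]) = \psi([v_{j'}])$ where $\{i,i'\} = \{j,j'\} = \{L,R\}$, which forces $\phi(u) = \phi(v)$.

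To exclude these degenerate configurations, I would apply inclusion-exclusion over subsets $\beta \subseteq E(H)$: let $H^{\pi,\beta}$ be $H^\pi$ with the additional identifications $u_{i'} \sim v_{j'}$ (as above) added for every $\{u,v\} \in \beta$. Combined with the $2^{|V(H)|}$-to-one overcount from the port-labelings, this yields
\[
    2^{|V(H)|}\cdot \#\homs{H}{L(G)} \;=\; \sum_{\pi} \sum_{\beta \subseteq E(H)} (-1)^{|\beta|} \cdot \#\homs{H^{\pi,\beta}}{G},
\]
and therefore $Q[H] := 2^{-|V(H)|} \sum_{\pi,\beta} (-1)^{|\beta|} \cdot H^{\pi,\beta}$ is the desired quantum graph. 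Only finitely many pairs $(\pi,\beta)$ exist per $H$, and each quotient graph $H^{\pi,\beta}$ is effectively computable by performing the listed identifications (e.g., via a union-find structure), so the map $H \mapsto Q[H]$ is computable.

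The main technical obstacle lies in the careful bookkeeping of the quotient graphs. Certain combinations of $\pi$ and $\beta$ may, via cascading identifications, collapse an edge $\{v_L, v_R\}$ into a self-loop, or produce parallel edges. Since $G$ is simple, any $H^{\pi,\beta}$ containing a self-loop satisfies $\#\homs{H^{\pi,\beta}}{G} = 0$, so such terms vanish and may be dropped from the support of $Q[H]$; multi-edges impose redundant constraints and may be simplified to single edges without changing the homomorphism count. After these simplifications one obtains a well-defined quantum graph in the sense of \cref{sec:quantum}. Note finally that bipartiteness of $G$ is not actually used in the construction; it only matters because the intended application is to line graphs of bipartite graphs, i.e., to König graphs in $\symking$.
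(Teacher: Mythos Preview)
Your argument is correct and takes a genuinely different route from the paper. The paper expands $\#\homs{H}{L(G)}$ through the standard chain homs $\to$ embeddings $\to$ strong embeddings $\to$ induced subgraphs, then invokes Whitney's Isomorphism Theorem to convert $\#\indsubs{H'}{L(G)}$ into $\#\subs{L^{-1}(H')}{G}$, and finally unwinds back to homomorphism counts in $G$ via the M\"obius inversion over the partition lattice. Bipartiteness of $G$ enters precisely to resolve the triangle ambiguity in Whitney's theorem. Your construction bypasses all of this: by splitting each vertex into two ports, summing over gluing patterns $\pi$, and using inclusion--exclusion over the ``collapsed-edge'' events indexed by $\beta$, you obtain an explicit combinatorial identity that holds for \emph{every} simple graph $G$, bipartite or not. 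The price is a somewhat larger (but still finite and computable) index set, namely $\{L,R\}^{E(H)} \times 2^{E(H)}$, and coefficients that are multiples of $2^{-|V(H)|}$ rather than the M\"obius values appearing in the paper; both are perfectly acceptable for a quantum graph. Your handling of the degenerate cases (self-loops vanish, multi-edges simplify) is adequate, and the observation that bipartiteness is unused is a genuine strengthening over the paper's argument. One small point worth making explicit in a write-up: the assignment $\pi(\{u,v\})=(i,j)$ presupposes a fixed orientation of each edge of $H$, and you should verify (as you implicitly do) that a ``good'' $\psi$ for $H^\pi$ cannot simultaneously realise a second gluing pattern at the same edge, which follows because $\psi([v_L])\neq\psi([v_R])$ whenever $[v_L]\neq[v_R]$.
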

The proof of \cref{lem:qgraphs_linegraphs} uses known transformations between linear combinations
of homomorphisms, subgraphs and induced subgraphs~(see for instance\ Chapter~5.2.3 in~\cite{Lovasz12}
and~\cite[Section~3]{CurticapeanDM17}), as well as Whitney's Isomorphism Theorem:
\begin{theorem}[\cite{Whitney92}]\label{thm:whitney}
    Let $H$ be a connected line graph that is not isomorphic to the triangle.
    Then the graph $F$ such that $L(F)=H$ is uniquely defined up to isolated vertices.
    More precisely, every graph $F'$ that satisfies $L(F')=H$ and that does not contain isolated
    vertices is isomorphic to $F$.\lipicsEnd
\end{theorem}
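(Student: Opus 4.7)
The plan is a reconstruction argument: given a connected line graph $H$ with $L(F) = H$ and $F$ having no isolated vertices, I show how to recover $F$ up to isomorphism from $H$ alone, provided $H \neq K_3$. The starting observation is that each vertex $v \in V(F)$ induces a clique $K_v$ in $H$, namely the set of edges of $F$ incident to $v$. These cliques cover every edge of $H$, and each vertex of $H$ (corresponding to an edge $\{u,v\}$ of $F$) lies in exactly the two cliques $K_u$ and $K_v$. Conversely, any such ``Krausz partition'' of $H$ into cliques with every vertex in at most two of them yields a primal graph $F'$ whose line graph is $H$; so uniqueness of $F$ reduces to uniqueness of this clique cover.

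The key step is to classify triangles of $H$ intrinsically. A triangle $\{a,b,c\}$ in $H$ arises from three pairwise-adjacent edges of $F$, which happens in only two ways: either the three edges share a common endpoint in $F$ (a \emph{star triangle}), or they form a $3$-cycle in $F$ (a \emph{cycle triangle}). I claim these are distinguished by the following criterion on $H$ alone: $\{a,b,c\}$ is a star triangle if and only if every vertex $d \notin \{a,b,c\}$ of $H$ that is adjacent to two vertices of $\{a,b,c\}$ is also adjacent to the third. Indeed, in a star triangle the three $F$-edges all meet at one vertex $v$, so any further $F$-edge sharing an endpoint with two of them must pass through $v$ and hence share an endpoint with all three; in a cycle triangle $v_1v_2v_3$ in $F$, any higher-degree vertex $v_i$ yields a further edge at $v_i$ that is adjacent in $H$ to exactly two of $a,b,c$.

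Using this intrinsic classification, I recover the cliques $K_v$ as follows. For each pair of adjacent vertices $w, w'$ of $H$, grow a maximal clique by iteratively including any vertex $x$ such that $\{w, w', x\}$ (and, inductively, every sub-triple with two members already present) is a star triangle. A careful analysis shows that, under the assumption $H \neq K_3$, exactly two such maximal star-closed cliques pass through each vertex $w$ of $H$, and these are precisely the two vertex cliques $K_u, K_v$ where $w$ corresponds to the edge $\{u,v\}$ of $F$ (collapsing to one clique when $w$ corresponds to a pendant edge). Reconstructing $F$ as the graph whose vertices are the identified star-closed cliques and whose edges are the $w \in V(H)$, with endpoints the at-most-two cliques containing $w$, then recovers $F$ up to isomorphism.

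The main obstacle is verifying that no exception other than $H = K_3$ arises. The star/cycle criterion becomes vacuous precisely when there is no vertex outside the triangle to test against, which happens exactly when $H = K_3$; here both $L(K_{1,3}) = K_3$ and $L(K_3) = K_3$ are valid reconstructions, accounting for the stated exclusion. The delicate part of the case analysis is to confirm that in every other connected line graph, the star-triangle criterion is unambiguous even on dense configurations (for instance when the line graphs of $K_4$ or $K_{1,n}$ contain many overlapping triangles), and that the maximal star-closed cliques through each vertex yield exactly the Krausz partition; the ``up to isolated vertices'' caveat in the statement then accommodates the freedom to add isolated vertices to $F$, since these contribute nothing to $L(F)$ and therefore cannot be detected from $H$.
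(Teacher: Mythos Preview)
The paper does not prove this statement at all: \cref{thm:whitney} is quoted as a classical result of Whitney (the citation \cite{Whitney92}) and used as a black box in the construction of $L^{-1}$ and in the proof of \cref{lem:qgraphs_linegraphs}. There is therefore no in-paper proof to compare your proposal against.

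That said, your sketch follows the standard Krausz-partition route to Whitney's theorem, and the overall architecture is correct: recover the vertex-cliques $K_v$ of the primal graph from an intrinsic classification of triangles in $H$, then rebuild $F$ as the intersection graph of that clique cover. Two points would need tightening if you were to turn this into a full proof. First, the star/cycle criterion you state (``every external common neighbour of two vertices of the triangle is adjacent to the third'') is the classical \emph{odd/even} distinction, but it can be vacuously satisfied by a cycle triangle when no vertex of the triangle has a further neighbour in $H$; this happens not only for $H=K_3$ but also, for instance, when $F=K_3$ plus a pendant edge, so one must check that in such residual cases the ambiguous triangle can be absorbed into the partition consistently. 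Second, the ``grow a maximal star-closed clique'' step needs the observation that two edges of $H$ lying in a common $K_v$ always extend to a star triangle (or have no third common neighbour), and that no clique of $H$ other than the $K_v$'s has this closure property; this is where the small exceptional configurations (essentially $K_3$ and its role inside $K_4$) have to be handled explicitly. With those cases pinned down, your outline is the textbook proof.
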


\begin{figure}
    \centering
    \begin{tikzpicture}
        \pgfmathsetmacro{\d}{3}

        \pic[kneser style=13,\icol] at (-6.5,0) {kneser=\d/1};
        \pic[kneser style=13,vertex fill color=\icol] at (-\d,0) {kneser=\d/1};

        \node at (-\d,-\d/2.5) {$L(K_3) = L(K_{1,3})$};

        \draw[line width=1pt,-{Latex[length=5mm, width=2mm]}]
        (-.5,0) --node[above,midway] {\small Line graph} (-\d/1.5,0);
        \draw[line width=1pt,-{Latex[length=5mm, width=2mm]}]
        (-5.5,0) --node[above,midway] {\small Line graph} (-4,0);

        \node[vertex] (a) at (0,0) {};
        \foreach\i in {1,...,\d}{
            \node[vertex] (b\d) at (1,-.5*.6+-.6*\d/2+\i*.6) {};
            \draw[thick,\icol] (b\d) -- (a);
        }
        \node at (.5,-\d/2.5) {$K_{1,3}$};
        \node at (-6.5,-\d/2.5) {$K_3$};
    \end{tikzpicture}
    \caption{The claw $K_{1,3}$ and the triangle $K_3$ have the same line graph $K_3$.}\label{fig:cltr}
\end{figure}
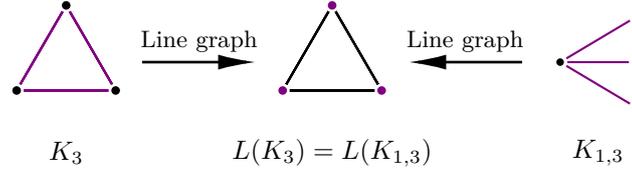

Note that both the triangle and the claw have the triangle as line graph, consider \cref{fig:cltr}.
The previous theorem states that the triangle is the only line graph whose primal graph is not
uniquely defined.
Further, the Isomorphism Theorem (\cref{thm:whitney}) allows us to define the following
function; recall that $\mathcal{G}_{P}$ is the set of bipartite graphs.
\[ L^{-1}: \symking \rightarrow \mathcal{G}_{P},\text{ such that $L^{-1}(L(G)) := G$
for every bipartite graph $G$.} \]
Note that the function $L^{-1}$ is well-defined by \cref{thm:whitney} and the fact that the
triangle is not bipartite.
In particular, we have that $L^{-1}(K_3)= K_{1,3}$, where $K_3$ is the triangle and $K_{1,3}$
is the claw, that is, the complete bipartite graph with one vertex on the left side and three
vertices on the right side (Again, consult \cref{fig:cltr} for a visualization.)
Similarly, it is well defined to write $L(F)$ for a Kőnig graph which is the line graph of the
(uniquely determined) bipartite graph $F$ without isolated vertices.

Further, again by Whitney's Isomorphism Theorem, we obtain the following lemma.
\begin{lemma}
    Let $H$ be a line graph and let $L(G)$ be a Kőnig graph. Then we have that
    \begin{equation}
        \#\indsubs{H}{L(G)} = \begin{cases} \#\subs{F}{G} ~~&\text{if } H=L(F)\in \symking \\ 0& \text{otherwise} \end{cases}. \label{eq:line_subs_indsubs}
    \end{equation}
\end{lemma}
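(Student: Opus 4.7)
The plan is to analyze induced subgraphs of $L(G)$ via edge-subsets of $G$, and then invoke Whitney's Isomorphism Theorem (\cref{thm:whitney}) to translate occurrences in $L(G)$ back into occurrences in $G$. First, I identify $V(L(G))$ with $E(G)$. For any $S \subseteq E(G)$, let $F_S$ denote the subgraph of $G$ whose vertex set is the set of endpoints of edges in $S$ and whose edge set is $S$; thus $F_S$ has no isolated vertices. A direct check of the definitions gives $L(G)[S] = L(F_S)$, and since $G$ is bipartite, $F_S$ is bipartite as well. In particular, every induced subgraph of $L(G)$ already belongs to $\symking$.

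If $H \notin \symking$, then no induced subgraph $L(F_S)$ of $L(G)$ can be isomorphic to $H$, so $\#\indsubs{H}{L(G)} = 0$, matching the second branch of the statement.

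Suppose instead that $H = L(F) \in \symking$ with $F = L^{-1}(H)$. I claim $L(F_S) \cong H$ if and only if $F_S \cong F$. The ``if'' direction is immediate; for the ``only if'' direction I would apply \cref{thm:whitney} componentwise. Whitney's theorem guarantees that any connected line graph distinct from $K_3$ determines its primal graph uniquely up to isolated vertices; the sole exception, $K_3$, is the line graph of both $K_3$ and of $K_{1,3}$. But $K_3$ is not bipartite, while both $F_S$ and $F$ are, so only $K_{1,3}$ can arise as a component of a bipartite preimage, and uniqueness is restored within bipartite graphs without isolated vertices. Hence $F_S \cong F$.

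To finish, I would observe that since $F$ has no isolated vertices, every subgraph of $G$ isomorphic to $F$ is uniquely determined by its edge set, so the map $S \mapsto F_S$ yields a bijection between edge-subsets $S \subseteq E(G)$ satisfying $F_S \cong F$ and copies of $F$ as a subgraph of $G$. Combined with the previous step, this gives
\[\#\indsubs{H}{L(G)} = \#\{S \subseteq E(G) : F_S \cong F\} = \#\subs{F}{G},\]
as desired. The only subtlety in this plan is the triangle exception in Whitney's theorem, and it is neutralized by the restriction to bipartite primal graphs: since $K_3$ itself is not bipartite, only the claw $K_{1,3}$ ever arises as a bipartite preimage of a triangular component.
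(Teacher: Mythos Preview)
Your proof is correct and follows essentially the same approach as the paper: identify induced subgraphs of $L(G)$ with edge-subsets of $G$ via $L(G)[S]=L(F_S)$, then invoke Whitney's theorem together with the bipartiteness of $G$ to neutralize the triangle ambiguity. If anything, you are slightly more explicit than the paper about applying \cref{thm:whitney} componentwise and about why every induced subgraph of $L(G)$ already lies in $\symking$, which makes your treatment of the ``otherwise'' case marginally cleaner.
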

\begin{proof}
    Assume first that the graph $H=L(F)$ is a Kőnig graph.
    By \cref{thm:whitney} and the fact that triangles are not bipartite,
    we have that $G$ is the unique bipartite graph whose line graph is isomorphic to $L(G)$.
    Let $S\subseteq V(L(G))= E(G)$ such that the induced subgraph $L(G)[S]$ is isomorphic to
    $L(F)$ and let $G'$ be the subgraph of $G$ with vertices
    \[ V(G'):= \{v \in V(G)\mid \exists e\in S: v \in e\},\]
    and edges $E(G'):=S$.

    By construction, the line graph of $L(G')$ is isomorphic to $L(F)$.
    Hence, we obtain that the graphs $F$ and $G'$ are isomorphic by \cref{thm:whitney}
    and the fact that none of the graphs $F$ and $G'$ is the triangle;
    note that a bipartite graph cannot contain a subgraph isomorphic to a triangle.

    Now let $G'$ be a subgraph of $G$ that is isomorphic to $F$.
    As the graph $F$ does not contain isolated vertices, we have that $G'$ is determined by its
    set of edges. We set $S=E(G')$ and consider the induced subgraph $L(G)[S]$.
    By construction, we have that the graph $L(G)[S]$ is isomorphic to $L(G')$ which in turn
    is isomorphic to $L(F)$ (as the graphs $G'$ and $F$ are isomorphic).
    This shows correctness for $H=L(F)\in \symking$.

    If the graph $H$ is not a Kőnig graph, then $H$ is not a triangle.
    Thus, we have that $H=L(F')$ for some non-bipartite graph $F'$ which is uniquely determined
    (up to isolated vertices) by \cref{thm:whitney}.
    In this case, the same argument as before shows that any induced subgraph of $L(G)$
    that is isomorphic to $L(F')$ yields a subgraph of $G$ that is isomorphic to $F'$.
    As the graph $G$ is bipartite and the graph $F'$ is not, such an induced subgraph cannot
    exist; note that bipartite graphs are closed under taking subgraphs.
\end{proof}

The last ingredient for the proof of \cref{lem:qgraphs_linegraphs} is the following well-known
identity which relates (strong) embeddings and (induced) subgraphs.
\begin{fact}\label{fac:auts_trans}
    For all graphs $H$ and $G$ we have that
    \begin{align}
        \#\embs{H}{G} = \#\auts{H}\cdot \#\subs{H}{G}\text{ and}\label{eq:embs_subs}\\
        \#\strembs{H}{G} = \#\auts{H}\cdot \#\indsubs{H}{G}.\label{eq:strembs_indsubs}
    \end{align}\lipicsEnd
\end{fact}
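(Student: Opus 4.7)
The plan is to prove both identities by partitioning the left-hand sides according to the image of the map and then counting, for each image, how many maps have that image. I will treat the two statements in parallel, since the arguments only differ in what the image turns out to be.

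For the first identity, I would start by observing that every embedding $h \in \embs{H}{G}$ determines a subgraph $F_h$ of $G$ with vertex set $h(V(H))$ and edge set $\{\{h(u),h(v)\} : \{u,v\}\in E(H)\}$. Since $h$ is injective on vertices, distinct edges of $H$ map to distinct (unordered) pairs, and since $h$ is a homomorphism, all of these pairs are edges of $G$. Hence $|V(F_h)|=|V(H)|$ and $|E(F_h)|=|E(H)|$, and the map $h$ itself is an isomorphism $H\to F_h$. In particular $F_h \in \subs{H}{G}$. I would then fix a subgraph $F\in\subs{H}{G}$ and count the set $\{h\in\embs{H}{G} : F_h=F\}$, which equals the set of isomorphisms from $H$ to $F$. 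Picking any one such isomorphism $\phi_F:H\to F$, every other isomorphism $H\to F$ is of the form $\phi_F\circ\alpha$ for a unique $\alpha\in\auts{H}$, and conversely each such $\alpha$ yields an isomorphism $H\to F$. Therefore there are exactly $\#\auts{H}$ embeddings with image $F$, and summing over $F\in\subs{H}{G}$ gives \eqref{eq:embs_subs}.

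For the second identity I would repeat the same partitioning argument but with one added observation. If $h\in\strembs{H}{G}$, then not only do edges of $H$ map to edges of $G$, but additionally the only edges of $G$ on the vertex set $h(V(H))$ are those coming from $E(H)$. Hence the image subgraph $F_h$ coincides with the induced subgraph $G[h(V(H))]$, so $F_h\in\indsubs{H}{G}$ and $h$ is an isomorphism onto an induced subgraph. Conversely, any isomorphism from $H$ onto an induced subgraph of $G$ is a strong embedding. Partitioning $\strembs{H}{G}$ by image and applying the same $\#\auts{H}$ count per fiber yields \eqref{eq:strembs_indsubs}.

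The only real subtlety, which I would make explicit in passing, is the two bookkeeping facts used above: (i) an embedding is injective on \emph{edges} (so $|E(F_h)|=|E(H)|$, which in turn is needed to conclude that $h$ is a genuine isomorphism onto $F_h$ and not just a surjective homomorphism), and (ii) for any two isomorphic graphs $H$ and $F$ the set of isomorphisms $H\to F$ has cardinality $\#\auts{H}$, which is a standard torsor/coset argument. Neither step is deep, so I do not anticipate any serious obstacle; the whole proof is essentially a double-counting argument in which the image of the map induces the right equivalence relation.
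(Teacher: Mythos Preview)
Your proof is correct and is the standard double-counting argument for these identities. The paper itself does not prove this statement at all: it is presented as a \emph{Fact} (prefaced by ``the following well-known identity'') and left without proof, so there is no approach to compare against. One small remark: the paper uses the nonstandard convention that $f\circ g$ means ``first $f$, then $g$'' (see the preliminaries), so under that convention your formula $\phi_F\circ\alpha$ for an isomorphism $H\to F$ should be written $\alpha\circ\phi_F$; the mathematical content is of course unaffected.
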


Finally, we are ready to prove \cref{lem:qgraphs_linegraphs}.
\begin{proof}[Proof of \cref{lem:qgraphs_linegraphs}]
    We rely on a stepwise transformation of linear combinations of homomorphisms,
    embeddings, and strong embeddings as given by \lovasz~\cite[Chapter~5.2.3]{Lovasz12}
    and as used by Curticapean, Dell and Marx~\cite[Section~3]{CurticapeanDM17}.

    For the formal statement, we need to introduce some further notation.
    Given a graph $H$, we write $\mathsf{Part}(H)$ for the set of all partitions $\rho$ of
    the vertex set $V(H)$ such that the quotient $H/\rho$ is a spasm, that is,
    $H/\rho$ does not contain self-loops.
    Further, we write $H' \supseteq H$ if the graph $H'$ can be obtained from $H$ by adding
    edges. Now, let us state the transformations:
    For all graphs $H$ and $G$, we have that
    \begin{align}
        \#\homs{H}{G} &= \sum_{\rho \in \mathsf{Part}(H)} \#\embs{H/\rho}{G},\text{ and }\label{eq:hom_to_emb}\\
        \#\embs{H}{G} &= \sum_{\rho \in \mathsf{Part}(H)} \mu(\emptyset,\rho)\cdot \#\homs{H/\rho}{G},\label{eq:emb_to_hom}
    \end{align}
    where $\mu$ is the so-called Möbius function over the partition
    lattice.\footnote{As we do not need the formal definition of the Möbius function here,
    we refer the interested reader to for instance\ \cite{Stanley11} instead.}
    Further, we have that
    \begin{align}
        \#\embs{H}{G} &= \sum_{H' \supseteq H} \#\strembs{H'}{G},\text{ and }\label{eq:emb_to_stremb}\\
        \#\strembs{H}{G} &= \sum_{H' \supseteq H} (-1)^{|E(H')|-|E(H)|} \cdot \#\embs{H'}{G}.\label{eq:stremb_to_emb}
    \end{align}
    Now let us start with the construction of the quantum graph $Q[H]$ and the proof of
    Equation~\eqref{eq:line_graph_a}.  We have that
    \begin{align*}
        \#\homs{H}{L(G)}\stackrel{\eqref{eq:hom_to_emb}}{=}&  \sum_{\rho \in \mathsf{Part}(H)} \#\embs{H/\rho}{L(G)}\,,\\
        \#\embs{H/\rho}{L(G)} \stackrel{\eqref{eq:emb_to_stremb}}{=}& \sum_{H' \supseteq H/\rho} \#\strembs{H'}{L(G)} \,,\text { and } \\
         \#\strembs{H'}{L(G)} \stackrel{\eqref{eq:strembs_indsubs}}{=}&  \#\auts{H'} \cdot \#\indsubs{H'}{L(G)}\,.
    \end{align*}
    Combining \eqref{eq:line_subs_indsubs} with the fact that line graphs are closed under taking
    induced subgraphs~\cite{Beineke70}, we obtain that \[
        \#\indsubs{H'}{L(G)} = 0,
    \] whenever $H'$ is not a Kőnig graph.
    Consequently,
    \begin{align*}
        \#\embs{H/\rho}{L(G)}=\sum_{\substack{H' \supseteq H/\rho\\ H'=L(F)\in \symking }} \#\auts{H'} \cdot \#\indsubs{H}{L(G)}.
    \end{align*}
    For a graph $H'= L(F) \in \symking$ we have that
    \begin{align*}
        \#\indsubs{H'}{L(G)} &\stackrel{\eqref{eq:line_subs_indsubs}}{=}   \#\subs{F}{G} \\
                             &\stackrel{\eqref{eq:embs_subs}}{=}   {\#\auts{F}}^{-1} \cdot \#\embs{F}{G}\\
                             &\stackrel{\eqref{eq:emb_to_hom}}{=}   {\#\auts{F}}^{-1} \cdot \sum_{\rho \in \mathsf{Part}(F)} \mu(\emptyset,\rho) \cdot \#\homs{F/\rho}{G}.
    \end{align*}
    We can thus successively apply the previous transformations and obtain:
\[
\#\homs{H}{L(G)} = \sum_{\rho \in \mathsf{Part}(H)} \sum_{\substack{H' \supseteq H/\rho\\ H'=L(F)\in \symking }}  \sum_{\delta \in \mathsf{Part}(F)} \frac{\#\auts{H'} \cdot \mu(\emptyset,\delta)}{\#\auts{F}} \cdot \#\homs{F/\delta}{G} \,.
\]
The desired quantum graph $Q[H]$ is obtained by collecting for isomorphic terms, that is,
\[Q[H] := \sum_{F'} \lambda_{F'} \cdot F' \,,\]
where
\[ \lambda_{F'} := \sum_{\rho \in \mathsf{Part}(H)} \sum_{\substack{H' \supseteq H/\rho\\ H'=L(F)\in \symking }}  \sum_{\substack{\delta \in \mathsf{Part}(F)\\ F/\delta = F'}} \frac{\#\auts{H'} \cdot \mu(\emptyset,\delta)}{\#\auts{F}} \,.\]
\end{proof}

Using \cref{lem:qgraphs_linegraphs}, we obtain a proof for \cref{thm:implicit_line_dicho}
as follows.
{
\renewcommand{\thetheorem}{\ref{thm:implicit_line_dicho}}
\begin{theorem}[repeated]
    Let $\mathcal{H}$ denote a recursively enumerable class of graphs.
    Then the problem $\#\homsprob{\mathcal{H}}{\symking}$ is either fixed-parameter tractable
    or $\#\W{1}$-hard under parameterized Turing-reductions.
\end{theorem}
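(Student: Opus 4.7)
The plan is to combine the quantum graph expansion from \cref{lem:qgraphs_linegraphs} with complexity monotonicity and the bipartite case of \cref{thm:genhomdich} (namely, $F$ equal to a single edge, so that $\mathcal{G}_F$ is the class $\mathcal{G}_P$ of bipartite graphs). The key idea is to lift the counting problem out of the world of König graphs into the world of bipartite graphs via the inverse line graph operator $L^{-1}$, and then invoke the already established dichotomy.

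Concretely, I would first define the class of quantum graphs $\mathcal{Q} := \{Q[H] \mid H \in \mathcal{H}\}$, which is recursively enumerable since $\mathcal{H}$ is recursively enumerable and the map $H \mapsto Q[H]$ is computable. Using the identity $\#\homs{H}{L(G)} = \#\homs{Q[H]}{G}$ from \cref{lem:qgraphs_linegraphs}, I would then establish the interreducibility
\[
\#\homsprob{\mathcal{H}}{\symking} \interred \#\homsprob{\mathcal{Q}}{\mathcal{G}_P}.
\]
The forward reduction maps an instance $(H, K)\in\mathcal{H}\times\symking$ to $(Q[H], L^{-1}(K))$, which is well-defined by Whitney's Isomorphism Theorem (\cref{thm:whitney}) since König graphs contain no triangle component with an ambiguous primal graph. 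The reverse reduction maps $(Q[H], G)\in \mathcal{Q}\times\mathcal{G}_P$ to $(H, L(G))$. In both directions the parameter blow-up is controlled, because $|Q[H]|$ is a computable function of $|V(H)|$.

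Next, I would invoke the version of complexity monotonicity for $F$-colorable targets announced in \cref{sec:fcol} (applied with $F$ the single edge) to obtain
\[
\#\homsprob{\mathcal{Q}}{\mathcal{G}_P} \interred \#\homsprob{\supp(\mathcal{Q})}{\mathcal{G}_P}.
\]
Finally, I would apply \cref{thm:genhomdich} with $F$ equal to the single edge to the recursively enumerable class $\supp(\mathcal{Q})$ (which is recursively enumerable because each $Q[H]$ has finite, computable support): either the treewidth of $\supp(\mathcal{Q}) \cap \mathcal{G}_P$ is bounded, in which case $\#\homsprob{\supp(\mathcal{Q})}{\mathcal{G}_P}$ is polynomial-time solvable and, by the chain of interreducibilities, $\#\homsprob{\mathcal{H}}{\symking}$ is fixed-parameter tractable; or the treewidth is unbounded, in which case $\#\homsprob{\supp(\mathcal{Q})}{\mathcal{G}_P}$ and hence $\#\homsprob{\mathcal{H}}{\symking}$ is $\#\W{1}$-hard under parameterized Turing-reductions.

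The main potential obstacle is ensuring that the extension of \cref{lem:monotonicity} to $\mathcal{G}_P$ really applies here: we need that the monotonicity algorithm's oracle queries, which are of the form $\#\homs{Q[H]}{G'}$ for various bipartite $G'$ (so that $(Q[H], G')$ is a valid instance of the promise problem), stay within the promise. This is ensured by the fact that the oracle calls in the complexity monotonicity construction produce instances where $G'$ is derived from $G$ (for instance by tensoring against small bipartite graphs or restricting to vertex-colored variants), and by \cref{fac:f_col_tensor} bipartiteness is preserved under these operations. With this checked, the dichotomy follows cleanly.
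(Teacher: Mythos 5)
Your proposal is correct and follows essentially the same route as the paper: both rest on \cref{lem:qgraphs_linegraphs}, the bipartite-target version of complexity monotonicity (the argument behind \cref{lem:f_col_monotonicity}, with \cref{fac:f_col_tensor} keeping the tensored oracle queries bipartite), and \cref{thm:genhomdich} applied with $F$ a single edge. The only difference is packaging: you factor through the intermediate problem $\#\homsprob{\mathcal{Q}}{\mathcal{G}_P}$ and classify $\supp(\mathcal{Q})$, whereas the paper collapses both steps into a single interreducibility with the class of bipartite constituents $\hat{\mathcal{H}}=\bigcup_{H\in\mathcal{H}}\supp(Q[H])\cap\mathcal{G}_P$ --- an immaterial difference, since non-bipartite constituents contribute zero over bipartite targets and the criterion of \cref{thm:genhomdich} intersects with $\mathcal{G}_P$ in any case.
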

\addtocounter{lemma}{-1}
}
\begin{proof}
    Let $P = P_1$ denote any path of length $1$, and let $\mathcal{G}_P$ denote the
    class of all $P$-colorable graph; that is, $\mathcal{G}_P$ is the class of all
    bipartite graphs. Now, consider the following class of graphs\[
        \hat{\mathcal{H}} := \bigcup_{H\in\mathcal{H}}\mathsf{supp}(Q[H]) \cap \mathcal{G}_{P},
    \]  We show the following reductions\[
        \#\homsprob{\mathcal{H}}{\symking}\interred \#\homsprob{\hat{\mathcal{H}}}{\mathcal{G}_{P}}.\]
    Note that this imples \cref{thm:implicit_line_dicho} by the classification of
    $F$-colorable graphs (\cref{thm:genhomdich}).

    For the direction $\#\homsprob{\mathcal{H}}{\symking} \fptred
    \#\homsprob{\hat{\mathcal{H}}}{\mathcal{G}_{P}}$, we assume that a graph $H\in \mathcal{H}$
    and a Kőnig graph $L(G)$ are given.
    By \cref{lem:main_line_graphs}, we can compute (in time depending only on $H$)
    the quantum graph $Q[H]$ that satisfies
    \begin{equation}\label{eq:line_dicho_eq}
        \#\homs{H}{L(G)} = \#\homs{Q[H]}{G} = \sum_{F \in \mathsf{supp}(Q[H])} \lambda_F \cdot \#\homs{F}{G},
    \end{equation}
    where the $\lambda_F$ are the coefficients of $Q[H]$.
    Now, as $L(G)$ is a Kőnig graph, we have that $G$ is bipartite.
    Hence, by \cref{lem:teclem1}, there is no homomorphism from $F$ to $G$ whenever $F$ is not
    bipartite, that is we have that $\#\homs{F}{G} = 0$.
    Note that we can verify whether a graph $F \in \mathsf{supp}(Q[H])$ is bipartite in time
    depending only on $|V(H)|$.
    All further terms $\#\homs{F}{G}$ with $F \in \mathcal{G}_{P}$ can be obtained by querying
    the oracle for the problem $\#\homsprob{\hat{\mathcal{H}}}{\mathcal{G}_{P}}$.
    Finally, we compute and the linear combination given by~\eqref{eq:line_dicho_eq}.
    This completes the first reduction.

    For the other direction, $\#\homsprob{\hat{\mathcal{H}}}{\mathcal{G}_{P}} \fptred
    \#\homsprob{\mathcal{H}}{\symking}$, we assume that graphs $F \in \hat{\mathcal{H}}$
    and $G \in \mathcal{G}_{P_2}$ are given.
    By definition of the class $\hat{\mathcal{H}}$, we have that the graph $F$ is a bipartite
    constituent of the quantum graph $Q[H]$ for some $H \in \mathcal{H}$.
    As $\mathcal{H}$ is recursively enumerable and the mapping $H \mapsto Q[H]$ is computable,
    we can compute in time depending only on $|V(F)|$ the quantum graph $Q[H]$.

    By \cref{fac:f_col_tensor}, we have that the tensor product $G \times A$ is bipartite
    for every graph $A$ as the graph $G$ is bipartite. Therefore,
    the graph $L(G \times A)$ is a Kőnig graph for every (not necessarily bipartite) graph $A$.
    Hence, we can query the oracle for the problem $\#\homsprob{\mathcal{H}}{\symking}$
    to compute for every graph $A$ whose size depends only on $|V(F)|$ the following
    values:
    \begin{align*}
        \#\homs{H}{L(G \times A)} =& \#\homs{Q[H]}{G \times A}\\
                                  =& \sum_{F' \in \mathsf{supp}(Q[H])} \lambda_{F'} \cdot
                                  \#\homs{F'}{G\times A}\\
                                  =& \sum_{F' \in \mathsf{supp}(Q[H])} \lambda_{F'} \cdot
                                  \#\homs{F'}{G} \cdot \#\homs{F'}{A}.
    \end{align*}
    Now, (the proof of) \cref{lem:f_col_monotonicity} shows that the induced system of linear
    equations is solvable for the proper choices of $A$.
    In particular, the size of those choices depends only on $|V(H)|$, which itself
    depends only
    on~$|V(F)|$. As the graph $F$ is a constituent of the quantum graph $Q[H]$, and thus
    the corresponding coefficient $\lambda_F$ is non-zero, we can compute and output
    the number $\#\homs{F}{G}$ in time depending only on $|V(H)|$, which itself depends
    only on~$|V(F)|$.
    This completes the second reduction, and hence the proof.
\end{proof}

\clearpage
\bibliography{conference}

\clearpage

\appendix
\clearpage

\section{On Hardness of \texorpdfstring{$\#\homsprob{\mathcal{H}}{\top}$}{|Hom(H->T)|}}\label{sec:hom_hard_strategy}
In this section, we take a closer look at the proof of the following complexity classification
which is due to Dalmau and Jonsson.
\begin{theorem}[\cite{DalmauJ04}]\label{thm:app_class}
    Let $\mathcal{H}$ be a recursively enumerable class of graphs.
    \begin{enumerate}[~1~]
        \item If the treewidth of $\mathcal{H}$ is bounded then
            $\#\homsprob{\mathcal{H}}{\top}$ is solvable in polynomial time.
        \item Otherwise, $\#\homsprob{\mathcal{H}}{\top}$ is $\#\W{1}$-hard
            under parameterized Turing-reductions.\lipicsEnd
    \end{enumerate}
\end{theorem}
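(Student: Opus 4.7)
The plan is to dispatch the two directions separately. For the tractable case, I would invoke the classical tree-decomposition dynamic programming: if every $H \in \mathcal{H}$ has treewidth at most $w$, then one computes a width-$w$ tree decomposition of $H$ (in FPT time; in fact $\mathrm{poly}(|V(H)|)$ for fixed $w$) and fills a table indexed by bag-assignments that records, for every bag $B$ and every map $B \to V(G)$, the number of partial homomorphisms extending to the already-processed subtree. The standard bottom-up DP gives a running time of $O(|V(H)| \cdot |V(G)|^{w+1})$, which is polynomial.

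For the hardness direction, the natural strategy is to reduce from the color-prescribed grid homomorphism problem $\#\cphomsprob{\boxplus}{\top}$, whose $\#\W{1}$-hardness we have already used in Section 5: it is established by an explicit reduction from $\#\clique$ through the Grid-Tiling formulation. To connect this to $\mathcal{H}$, I would invoke the Excluded Grid Theorem of Robertson and Seymour, which guarantees that unbounded treewidth of $\mathcal{H}$ yields, for every $k \in \N$, some graph $H_k \in \mathcal{H}$ that contains the $k \times k$ grid $\boxplus_k$ as a minor; recursive enumerability of $\mathcal{H}$ makes this witness constructively findable.

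Given an input $(\boxplus_k, G)$ with a $\boxplus_k$-coloring $c$ of $G$, the core of the reduction is a gadget construction producing a graph $G_k$ such that $\#\cphoms{\boxplus_k}{G}$ can be recovered from a small number of oracle queries of the form $\#\homs{H_k}{G_k'}$. Concretely, using the minor model of $\boxplus_k$ inside $H_k$, one replaces each vertex $v$ of $\boxplus_k$ by its branch set $B_v \subseteq V(H_k)$, and blows up each color class $c^{-1}(v)$ of $G$ into $|c^{-1}(v)|$ disjoint copies of $H_k[B_v]$; edges of $\boxplus_k$, which in $H_k$ correspond to paths between branch sets, are realized by joining the corresponding blow-ups along edges of $G$, and the remaining (non-branch) portion of $H_k$ is attached as a rigid ``decoration'' whose contribution to the homomorphism count is an easily computable multiplicative constant. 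One then extracts the color-prescribed homomorphism count by a standard color-coding style inclusion-exclusion over subsets of $V(\boxplus_k)$, i.e., by evaluating $\sum_{J \subseteq V(\boxplus_k)} (-1)^{|J|} \#\homs{H_k}{G_k \setminus J}$, exactly as in the construction of Section 5 of this paper.

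The main obstacle is the gadget itself: because $\boxplus_k$ is only a minor of $H_k$, the reduction must faithfully route the grid structure through branch sets and subdivision paths while ensuring that all ``parasitic'' homomorphisms of the larger graph $H_k$ factor into either a uniform multiplicative constant or into terms that vanish after inclusion-exclusion. Controlling this requires a careful global choice of the construction, together with a parameter-bound argument showing that $|V(H_k)|$ and the number of oracle queries both depend only on $k$, so that the reduction is a genuine parameterized Turing reduction and $(F,\kappa) \fptred \#\homsprob{\mathcal{H}}{\top}$ for $F = \#\cphomsprob{\boxplus}{\top}$; combined with $\#\clique \fptred \#\cphomsprob{\boxplus}{\top}$, this yields $\#\W{1}$-hardness.
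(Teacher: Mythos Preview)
Your overall strategy matches the paper's: polynomial-time via bounded-treewidth dynamic programming for~(1); for~(2), reduce $\#\clique \to \#\cphomsprob{\boxplus}{\top} \to \#\homsprob{\mathcal{H}}{\top}$ using the Excluded Grid Theorem and a minor-model gadget, finishing with inclusion--exclusion to pass from color-prescribed to uncolored homomorphism counts.

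There is one technical slip worth flagging. Your inclusion--exclusion runs over $V(\boxplus_k)$ and relies on the claim that the non-branch portion of $H_k$ acts as a ``rigid decoration'' contributing only a fixed multiplicative factor. In general this fails: homomorphisms from $H_k$ into $G_k$ may fold branch sets and decoration into one another in ways that do not factor cleanly, and inclusion--exclusion over the $k^2$ grid colors does not control this. The paper sidesteps the issue by first carrying out the parsimonious step $\#\cphomsprob{\boxplus}{\top} \parsired \#\cphomsprob{\mathcal{H}}{\top}$, so that the constructed graph $\hat G$ is genuinely $H_k$-colored (not merely $\boxplus_k$-blocked), and then running inclusion--exclusion over $J \subseteq V(H_k)$ rather than $V(\boxplus_k)$:
\[
N = \sum_{J\subseteq V(H_k)} (-1)^{|J|}\,\#\homs{H_k}{\hat G\setminus J}.
\]
This $N$ counts exactly those homomorphisms $h$ for which the composition $h\circ c$ with the $H_k$-coloring $c$ of $\hat G$ is surjective, hence an automorphism of $H_k$; dividing by $\#\auts{H_k}$ recovers $\#\cphoms{H_k}{\hat G}=\#\cphoms{\boxplus_k}{G}$ on the nose. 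The change is small, but it is precisely what neutralizes the parasitic homomorphisms you correctly identify as the main obstacle, and it does so without any structural hypothesis on the non-branch part of $H_k$.
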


In particular, we are interested in the proof of Statement~$(2)$ of \cref{thm:app_class},
that is $\#\W{1}$-hardness for the problem $\#\homsprob{\mathcal{H}}{\top}$.
The strategy of the proof is a line of reasoning based on the
Excluded Grid Theorem\footnote{Recall that the Excluded Grid Theorem states that every class
$\mathcal{H}$ of unbounded treewidth contains arbitrarily large grid
minors~\cite{RobertsonS86-ExGrid}.}, which is, by now, well-established
(see for instance\ \cite{DalmauJ04,GroheSS01,Curticapean15,DellRW19icalp}).
Our goal in this section is to show the following consequences of the known proofs of
Statement~$(2)$ of \cref{thm:app_class}; recall that a core is a graph without a homomorphism
from itself to any of its proper subgraphs.
\begin{lemma}\label{lem:help_hardness}
    Let $\mathcal{H}$ be a recursively enumerable class of graphs of unbounded treewidth.
    \begin{enumerate}[~1~]
        \item There is a parameterized Turing-reduction from the problem $\#\clique$ to
            the problem $\#\homsprob{\mathcal{H}}{\top}$ such that every oracle query
            $(H,G)$ satisfies that the graph $G$ is $H$-colorable.
        \item If, additionally, the class $\mathcal{H}$ contains only connected cores, then
            there is a parameterized weakly parsimonious reduction from the problem
            $\#\clique$ to the problem $\#\homsprob{\mathcal{H}}{\top}$ such that every pair
            $(H,G)$ in the image of the reduction satisfies that the graph $G$ is connected
            and $H$-colorable.\lipicsEnd
    \end{enumerate}
\end{lemma}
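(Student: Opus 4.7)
My plan is to derive both parts by composing the standard chain
\[\#\clique \fptred \#\cphomsprob{\boxplus}{\top} \fptred \#\cphomsprob{\mathcal{H}}{\top} \fptred \#\homsprob{\mathcal{H}}{\top},\]
where the first step is the well-known $\#\W{1}$-hardness of color-prescribed grid homomorphisms (cited in the proof of Lemma~\ref{lem:main_line_graphs}) and invoked as a black box. All intermediate oracle queries will have the form of a target graph equipped with an explicit coloring, which automatically yields the desired $H$-colorability; the connectedness constraint in part~(2) will be enforced through a careful gadget design in the middle reduction.

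For the middle step, I would apply the Excluded Grid Theorem to extract, for every $k$, a graph $H_k\in \mathcal{H}$ containing $\boxplus_{g(k)}$ as a minor for a function $g$ large enough to accommodate both input and gadget blow-up. Fix vertex-disjoint connected branch sets $B_v\subseteq V(H_k)$ indexed by $v\in V(\boxplus_{g(k)})$, together with one cross-edge realizing each grid edge. Given an instance $(\boxplus_k, G)$ with a $\boxplus_k$-coloring of $G$, I would construct an $H_k$-colored graph $\hat G$ by replacing each selected cell of the minor by a copy of the corresponding color class of $G$, implementing inter-cell adjacencies via the cross-edges, and realizing the remaining vertices and edges of $H_k$ through fixed auxiliary gadgets. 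The count of color-prescribed homomorphisms from $H_k$ to $\hat G$ will then factor as $\#\cphoms{\boxplus_k}{G}\cdot \alpha(H_k)$ with $\alpha(H_k)$ depending only on $H_k$, yielding a weakly parsimonious step. For part~(2) I would additionally take the auxiliary gadgets to be homomorphic images of the connected core $H_k$ itself and glue them to the cell gadgets, so that $\hat G$ becomes connected.

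For the final reduction I split the two parts. For part~(1), the inclusion-exclusion identity
\[\#\cphoms{H}{G} = \sum_{J\subseteq V(H)}(-1)^{|J|}\cdot\#\homs{H}{G\setminus J},\]
where $G\setminus J$ removes all vertices colored by elements of $J$, expresses the cp-count as a sum of at most $2^{|V(H)|}$ homomorphism counts; the restriction of the coloring certifies that each queried $G\setminus J$ remains $H$-colorable. For part~(2), Lemma~\ref{lem:coraut} implies that every endomorphism of a core $H$ is an automorphism, so for any $H$-coloring $c$ of $G$, every $h\in \homs{H}{G}$ yields an automorphism $c\circ h$ of $H$; partitioning homomorphisms by this automorphism produces the clean identity $\#\homs{H}{G}=\#\auts{H}\cdot \#\cphoms{H}{G}$. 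This is already a weakly parsimonious reduction with scaling factor $\#\auts{H}$, a computable function of $H$ alone, and composes with the previous steps to establish~(2). The main obstacle I anticipate lies in the middle reduction: arguing that the auxiliary gadgets contribute a \emph{nonzero} multiplicative factor $\alpha(H_k)$ independent of $G$ while simultaneously preserving both the $H_k$-coloring and the connectedness of $\hat G$.
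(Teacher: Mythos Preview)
Your chain of reductions is exactly the one the paper uses, and both the core argument for part~(2) and the inclusion--exclusion step for part~(1) match the paper's proof. Two small points are worth noting.

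First, your inclusion--exclusion identity is off by a factor: the right-hand side counts \emph{colorful} homomorphisms (those whose image meets every colour class), not color-prescribed ones. For any $H$-colored $G$ with coloring $c$, a colorful $h$ yields a surjective and hence bijective endomorphism $c\circ h$ of $H$, which is automatically an automorphism of the finite simple graph $H$; partitioning by this automorphism gives
\[
\sum_{J\subseteq V(H)}(-1)^{|J|}\,\#\homs{H}{G\setminus J}\;=\;\#\auts{H}\cdot\#\cphoms{H}{G}.
\]
This is precisely how the paper states it, and the extra factor is harmless for a Turing reduction.

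Second, for connectedness in part~(2) the paper avoids your gadget-gluing altogether: it first reduces $\#\clique$ parsimoniously to its restriction to \emph{connected} inputs (\cref{lem:connclique}), observes that the standard $\#\clique\to\#\cphomsprob{\boxplus}{\top}$ reduction produces a connected $\boxplus_k$-colored graph from a connected input, and then notes that the minor-based middle step (as in \cite[Lemma~5.8]{Curticapean15}) outputs a connected $\hat G$ whenever both the input and $H_k$ are connected. Since $H_k$ is a connected core by hypothesis, connectedness propagates for free, and the obstacle you flag concerning $\alpha(H_k)$ and gluing simply does not arise.
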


To accommodate readers unfamiliar with the proof of \cref{thm:app_class} on the one hand,
but to refrain from including the proof in full detail on the other hand, we provide
an outline only. For technical reasons, we start by proving that the instances of the problem
$\#\clique$ can be assumed to be connected graphs.
\begin{lemma}\label{lem:connclique}
    Let $\#\textsc{ConnClique}$ be the problem of, given a \emph{connected} graph $G$ and a
    positive integer $k$, computing the number of cliques of size $k$ in $G$. Then we have that
    \[\#\clique \parsired \#\textsc{ConnClique}.\]
\end{lemma}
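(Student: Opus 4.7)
The plan is to produce, from each instance $(G, k)$ of $\#\clique$, a connected graph $G'$ in polynomial time such that the number of $k$-cliques of $G'$ equals the number of $k$-cliques of $G$; setting $k' := k$ then yields the required parsimonious reduction. The case $k \geq 3$ admits a uniform and clean construction, while the small cases $k \in \{1, 2\}$ are dispatched by encoding the trivially computable answer into a canonical connected graph.

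For $k \geq 3$, I would let $C_1, \ldots, C_m$ denote the connected components of $G$, pick one representative vertex $v_i \in V(C_i)$ for each $i \in [m]$, and form $G'$ by adding a single new vertex $u$ together with the edges $\{u, v_i\}$ for $i \in [m]$. Then $G'$ is connected by construction. To see that $G$ and $G'$ have the same number of $k$-cliques, I would split the $k$-cliques of $G'$ according to whether they contain $u$: those avoiding $u$ lie inside the induced subgraph $G'[V(G)] = G$, giving exactly the $k$-cliques of $G$; those containing $u$ correspond to $(k-1)$-cliques inside the neighborhood $\{v_1, \ldots, v_m\}$ of $u$. Since distinct $v_i$ and $v_j$ lie in different connected components of $G$ and are hence non-adjacent in $G'$, the set $\{v_1, \ldots, v_m\}$ is independent, so no $(k-1)$-clique exists there once $k - 1 \geq 2$.

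For $k \in \{1, 2\}$, the answer $a$ equals $|V(G)|$ or $|E(G)|$, respectively, which I would compute directly and then output a canonical connected graph with $k' := k$ having exactly $a$ cliques of size $k'$; for example, a path on $a$ vertices for $k = 1$, or a path on $a+1$ vertices (or a single isolated vertex if $a = 0$) for $k = 2$. Each such output is connected and has the correct $k'$-clique count, and the whole reduction is plainly polynomial-time and preserves the parameter ($k' = k$).

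The only step requiring a genuine argument is the observation, in the $k \geq 3$ case, that $\{v_1, \ldots, v_m\}$ is independent in $G$ and hence cannot supply the additional $k - 1 \geq 2$ mutually adjacent vertices needed to extend $u$ to a $k$-clique; everything else (polynomial running time, correctness of the small cases, and preservation of the parameter) is routine verification.
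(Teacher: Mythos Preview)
Your proposal is correct and follows essentially the same approach as the paper: handle $k\in\{1,2\}$ by computing the answer directly and encoding it into a path, and handle $k\geq 3$ by connecting the components without creating any new triangle. The only cosmetic difference is that the paper connects consecutive components by single edges (forming a path through representatives) rather than adding an apex vertex, and for $k\in\{1,2\}$ it outputs the instance $(P_{a},2)$ instead of keeping $k'=k$; both variants work for the same reason.
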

\begin{proof}
    Let $(G,k)$ be an instance of $\#\clique$.
If the number $k$ is $1$, then the number of $k$-cliques in $G$ is just the number
    of vertices $|V(G)|$ of $G$. Hence, the reduction can output $(P_{|V(G)|-1},2)$, where
    $P_i$ is the path with $i+1$ edges.

    If the number $k$ is $2$, then the number of $k$-cliques in $G$ is just the number
    of edges $|E(G)|$ of $G$. Hence, the reduction can output $(P_{|E(G)|-1},2)$.

    Otherwise, let $C_1\dots,C_n$ be the connected components of $G$.
    For each $i\in{1,\dots,n-1}$, we add an edge between an arbitrary vertex in component $C_i$
    and an arbitrary vertex in component $C_{i+1}$.
    As the number $k$ is at least $3$, this operation does not change the number of $k$-cliques
    and thus the reduction can output the modified connected graph and $k$.
\end{proof}

\begin{proof}[{Outline of the proofs of~(2) of \cref{thm:app_class} and \cref{lem:help_hardness}}]
    Let $\mathcal{H}$ denote a recursively enumerable class of graphs of unbounded treewidth.
    The goal is to show that the problem $\#\homsprob{\mathcal{H}}{\top}$ is $\#\W{1}$-hard.

    The first step is the reduction from $\#\clique$ to $\#\cphomsprob{\boxplus}{\top}$, where
    $\boxplus$ is the set of all $k \times k$ square grids $\boxplus_k$ for $k\in \mathbb{N}$.
    Intuitively, given an instance $(G,k)$ of $\#\clique$, we construct a $\boxplus_k$-colored
    graph~$G'$ as follows: For each $i\in\{1,\dots,k\}$, we add the set
    \[V_{i,i}:=\{(v,v)\mid v \in V(G)\}\] to the vertices of $G'$.
    For every $i,j\in\{1,\dots,k\}$ with $i \neq j$, we add the set
    \[V_{i,j} := \{(u,v)\mid  \{u,v\} \in E(G)\}\] to the vertices of $G'$.
    Finally, we add an edge between two vertices $(v,u) \in V_{i,j}$ and $(v',u') \in V_{i',j'}$
    if $v = v'$ and $i = i'$, or if $u=u'$ and $j=j'$.
    It can the be shown that the number of $k$-cliques in $G$ equals the number
    \[\#\cphoms{\boxplus_k}{G'}\cdot (k!)^{-1},\]
    where the factor $(k!)^{-1}$ stems from the fact that the vertices of a $k$-clique are not
    ordered.
    Further, the resulting graph $G'$ is connected if $G$ is connected and
    $G'$ is $\boxplus_k$-colorable given by the homomorphism $h$ that maps every vertex in
    $V_{i,j}$ to the grid vertex $(i,j)$.

    The second step relies on the Excluded Grid Theorem \cite{RobertsonS86-ExGrid}:
    If the class $\mathcal{H}$ has unbounded treewidth, then for any number $k$, there is a
    graph $H_k \in \mathcal{H}$ that has the grid $\boxplus_k$ as a minor.
    Using this property of the class $\mathcal{H}$, it can be shown that
    \[\#\cphomsprob{\boxplus}{\top} \parsired \#\cphomsprob{\mathcal{H}}{\top}.\]
    A very clear presentation of this reduction is given by Curticapean~\cite[Lemma~5.8]{Curticapean15}.\footnote{Note that Curticapean uses the problem
    $\#\textsc{PartitionedSub}(\mathcal{H})$ which, however, is equivalent to the problem $\#\cphomsprob{\mathcal{H}}{\top}$~\cite[Definition~5.2 and Remark~5.3]{Curticapean15}.}
    In particular, given graphs $\boxplus_k$ and $G$, the reduction outputs a pair $(H_k,\hat{G})$
    such that the graph $\hat{G}$ is $H_k$-colorable and $\hat{G}$ is connected if both $G$ and
    $H_k$ are connected. Further, note that the graph $H_k$ can be found as
    $\mathcal{H}$ is recursively enumerable.

    We are now able to prove the second item of \cref{lem:help_hardness}:
    Using \cref{lem:connclique} and the properties of the previous reductions, we can, on
    input $G$ and $k$, compute in time $f(k)\cdot |V(G)|^{O(1)}$ (for some computable function $f$),
    a pair $(H_k,\hat{G})$ of graphs such that
    \begin{enumerate}[(a)]
        \item the graph $\hat{G}$ is connected,
        \item the graph $\hat{G}$ is $H_k$-colorable by some coloring $c$, and
        \item the number of $k$-cliques in $G$ is precisely $\#\cphoms{H_k}{\hat{G}}\cdot (k!)^{-1}$.
    \end{enumerate}
    Now recall that the condition of the second item of \cref{lem:help_hardness} states that
    the graph $H_k$ is a core.
    Let $h$ be a (not necessarily color-prescribed) homomorphism from $H_k$ to $\hat{G}$.
    Then, the composition of $h$ and the coloring $c$ is an endomorphism of $H_k$.
    As the graph $H_k$ is a core, it has no homomorphism to a proper subgraph. Hence,
    the homomorphism $h \circ c$ is an automorphism and, in particular, $h$ is color-prescribed
    if and only if $h\circ c$ is the identity.
    It is then straightforward\footnote{A reader familiar with group actions will find a very easy
        proof based on the observation that the automorphism group of $H_k$ acts on the set
    $\homs{H_k}{\hat{G}}$.} to show that
    \[\#\cphoms{H_k}{\hat{G}} = \#\homs{H_k}{\hat{G}} \cdot \#\mathsf{Aut}(H_k)^{-1}.\]
    This concludes the proof of the second item of \cref{lem:help_hardness}.

    For the first item, we cannot assume that the graph $\hat{G}$ is connected, as the graph $H_k$
    might be disconnected.
    However, we still obtain an algorithm that, on input $G$ and $k$, computes in time
    $f(k)\cdot |V(G)|^{O(1)}$ (for some computable function $f$),
    a pair $(H_k,\hat{G})$ of graphs such that
    \begin{enumerate}[(i)]
        \item the graph $\hat{G}$ is $H_k$-colorable by some coloring $c$, and
        \item the number of $k$-cliques in $G$ is precisely $\#\cphoms{H_k}{\hat{G}}\cdot (k!)^{-1}$.
    \end{enumerate}
    Now, using the principle of inclusion and exclusion, it is possible to compute the number $N$
    of homomorphisms $h$ from $H_k$ to $\hat{G}$ such that $h \circ c$ is an automorphism in
    time $O(2^k)\cdot|V(\hat{G})|^{O(1)}$ if an oracle to $\#\cphoms{H_k}{\star}$ is provided.
    More precisely, we have that
    \[N = \sum_{J\subseteq V(H_k)} (-1)^{|J|} \cdot \#\homs{H_k}{\hat{G}\setminus J},\]
    where $\hat{G}\setminus J$ is the graph obtained from $\hat{G}$ by deleting all vertices $v$ for
    which $c(v)\in J$. Having obtained $N$, for the same reasons as in the previous case,
    the Turing-reduction can output
    \[N \cdot \#\mathsf{Aut}(H_k)^{-1}.\]
    The first item of \cref{lem:help_hardness} now holds as the graph $\hat{G}$ is $H_k$-colorable
    and hence every subgraph $\hat{G}\setminus J$ is $H_k$-colorable as well.
\end{proof}

Note that the proofs of \cref{lem:connclique} and the second item of \cref{lem:help_hardness}
immediately show the following consequence for the decision version and parameterized
many-one reductions~\cite[Definition~2.1]{FlumG06}:
\begin{corollary}\label{cor:decision_hard}
    Let $\mathcal{H}$ be a recursively enumerable class of connected cores of unbounded treewidth.
    Then there is a parameterized many-one reduction from $\clique$ to
    $\homsprob{\mathcal{H}}{\top}$ such that every pair $(H,G)$ in the image of the reduction
    satisfies that $G$ is connected and $H$-colorable.\lipicsEnd
\end{corollary}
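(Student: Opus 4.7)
The plan is to track the proof of the second item of \cref{lem:help_hardness} and observe that, in the decision setting, the single inclusion--exclusion step that forced a Turing reduction in the counting setting can be replaced by a direct structural argument, so the whole chain becomes a parameterized many-one reduction. Concretely, I would build a three-step composition
\[
\clique \;\to\; \textsc{ConnClique} \;\to\; \cphomsprob{\boxplus}{\top} \;\to\; \cphomsprob{\mathcal{H}}{\top} \;\to\; \homsprob{\mathcal{H}}{\top},
\]
where each arrow is a parameterized many-one reduction and where the final output graph is connected and $H$-colorable by construction.

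First, I would repeat the argument of \cref{lem:connclique} verbatim: for $k\geq 3$, joining successive connected components of the input graph by a single edge does not create or destroy $k$-cliques, so $\clique$ many-one reduces to its connected variant, and the degenerate cases $k\in\{1,2\}$ are handled by outputting a connected path whose length encodes whether $V(G)$ or $E(G)$ is nonempty. Next, I would use the standard partitioned-clique construction described in the outline of \cref{thm:app_class}: given connected $(G,k)$, form the $\boxplus_k$-colored graph $G'$ with vertex sets $V_{i,j}$ and the described edges, and observe that $G'$ is connected because $G$ is and that $G$ has a $k$-clique iff $G'$ admits a color-prescribed homomorphism from $\boxplus_k$. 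Then, using recursive enumerability of $\mathcal{H}$ together with the Excluded Grid Theorem, I would search for some $H_k\in\mathcal{H}$ containing $\boxplus_k$ as a minor and apply Curticapean's construction~\cite[Lemma~5.8]{Curticapean15} to produce a pair $(H_k,\hat{G})$ of graphs such that $\hat{G}$ is $H_k$-colored by some coloring $c$ and color-prescribed homomorphisms $\boxplus_k\to G'$ correspond bijectively to color-prescribed homomorphisms $H_k\to\hat{G}$; because $G'$ is connected and $H_k$ is connected (by hypothesis on $\mathcal{H}$), the resulting graph $\hat{G}$ is connected as well.

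Finally, I would eliminate the color-prescription requirement using the core hypothesis on $H_k$. For any homomorphism $h\colon H_k\to\hat{G}$, the composition $c\circ h$ is an endomorphism of $H_k$, hence by \cref{lem:coraut} an automorphism $\sigma$; the homomorphism $h\circ\sigma^{-1}$ is then color-prescribed. Consequently $\hat{G}$ admits a homomorphism from $H_k$ iff it admits a color-prescribed one, so the reduction simply outputs $(H_k,\hat{G})$ with parameter $|V(H_k)|$ (which depends only on $k$). By construction every pair in the image satisfies that $\hat{G}$ is connected and $H_k$-colored, and the whole pipeline runs in time $f(k)\cdot|V(G)|^{O(1)}$, establishing the required many-one reduction.

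The main obstacle is purely conceptual rather than technical: I must convince myself that the decision setting really does bypass the inclusion--exclusion trick used for the first item of \cref{lem:help_hardness}. This bypass is exactly the core assumption---without it, one would be forced to issue $2^{|V(H_k)|}$ oracle queries to isolate color-prescribed solutions among arbitrary ones, collapsing the many-one reduction back to a Turing one. Given that $\mathcal{H}$ consists of connected cores, the single automorphism-composition argument above turns the many-to-one correspondence between homomorphisms and color-prescribed homomorphisms into an equivalence at the level of existence, which is all the decision version needs.
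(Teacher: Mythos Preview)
Your proposal is correct and follows precisely the approach the paper intends: the paper's own justification is simply that the proofs of \cref{lem:connclique} and the second item of \cref{lem:help_hardness} immediately yield the decision-version corollary, and you have faithfully spelled out that chain, including the key observation that the core hypothesis converts the counting identity $\#\cphoms{H_k}{\hat{G}}=\#\homs{H_k}{\hat{G}}\cdot\#\auts{H_k}^{-1}$ into an existence equivalence, thereby replacing the Turing reduction by a many-one reduction.
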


\section{Proofs of Section~\ref{sec:fcol}}\label{sec:app_fcol}
\subsection{Proof of Theorem~\ref{thm:genhomdich}}\label{sec:proof_f_homsdicho}

We prove \cref{thm:genhomdich} in two steps. First, in \cref{lem:pol},
we show a polynomial-time algorithm for graph classes $\mathcal{H}_{easy}$ that do not
contain $F$-colorable graphs of arbitrarily large treewidth.
After that, we prove $\#W[1]$-hardness for all other graph classes.

\paragraph*{Polynomial-Time Algorithm for the Tractable Cases}

Let $\mathcal{H}_{easy}$ denote any graph class such that for any graph $H \in \mathcal{H}_{easy}$,
either $H$ has a treewidth of at most $c$, or $H$ is not $F$-colorable;
where $c = c(\mathcal{H}_{easy})$ is a constant depending only  on $\mathcal{H}_{easy}$.
We obtain a polynomial-time algorithm for the (PPC) problem
$\#\homs{\mathcal{H}_{easy}}{\mathcal{G}_{F}}$ as follows. Given graphs $H \in \mathcal{H}_{easy}$
and $G \in \mathcal{G}_{F}$, we check, using Bodlaender's Algorithm~\cite{Bodlaender96},
whether $H$ has a treewidth $\tw(H)$ of at most $c(\mathcal{H}_{easy})$.
Next, if $\tw(H) \le c(\mathcal{H}_{easy})$, we use the standard dynamic programming algorithm
due to {D{\'{\i}}az} et.\ al~\cite{DiazST02} to compute $\#\homs{H}{G}$. Otherwise, that is if
$\tw(H) > c(\mathcal{H}_{easy})$, we output 0, as $H$ is not $F$-colorable by definition of
$\mathcal{H}_{easy}$. This last step is justified by the following observation.

\begin{observation}\label{lem:teclem1}
    Let graphs $F, G,$ and $H$ be given. If there is no homomorphism from $H$ to $F$,
    but a homomorphism from $G$ to $F$, then there is no homomorphism from $H$ to $G$.
\end{observation}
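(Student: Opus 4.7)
The statement is essentially a contrapositive argument about composition of homomorphisms, so the plan is to proceed by contradiction. I would assume, for the sake of contradiction, that a homomorphism $h \colon H \to G$ exists, and combine it with the given homomorphism $c \colon G \to F$ via composition.

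The key step is the observation that homomorphisms are closed under composition: if $h \colon H \to G$ and $c \colon G \to F$ are both homomorphisms, then the map $c \circ h \colon H \to F$ defined by $v \mapsto c(h(v))$ is also a homomorphism. Indeed, for any edge $\{u,v\} \in E(H)$, the edge $\{h(u), h(v)\}$ lies in $E(G)$ (since $h$ is a homomorphism), and in turn $\{c(h(u)), c(h(v))\}$ lies in $E(F)$ (since $c$ is a homomorphism). This yields a homomorphism from $H$ to $F$, contradicting the hypothesis that no such homomorphism exists.

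There is no real obstacle here; the entire argument is a one-line application of the composition property of graph homomorphisms, and the observation is used in the preceding algorithm only to justify outputting $0$ when $\tw(H) > c(\mathcal{H}_{easy})$, in which case $H \notin \mathcal{G}_F$ while the input graph $G$ is in $\mathcal{G}_F$. The only care needed in presentation is to state the contrapositive cleanly so that the reader sees immediately that the hypotheses force a homomorphism $H \to F$ through $G$.
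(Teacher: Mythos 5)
Your proof is correct and matches the paper's argument exactly: both assume a homomorphism $H \to G$ for contradiction and compose it with the given homomorphism $G \to F$ to obtain a homomorphism $H \to F$, contradicting the hypothesis. The only difference is that you spell out the routine verification that compositions of homomorphisms are homomorphisms, which the paper takes as known.
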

\begin{proof}
    Choose any homomorphism $g$ from $G$ to $F$ and suppose there was a homomorphism~$h$ from $H$
    to $G$. As the concatenation of two homomorphisms is again a homomorphism,
    in particular $f \circ g : H \to F$ is again a homomorphism, which is a contradiction
    to the assumption that there is no homomorphism from $H$ to $F$.
\end{proof}

In total, we obtain the following algorithm.

\begin{lemma}\label{lem:pol}
    For any graph classes $\mathcal{H}_{easy}$ and $\mathcal{G}_{F}$, we can solve
    the (PPC) problem $\#\homs{\mathcal{H}_{easy}}{\mathcal{G}_{F}}$ in polynomial time.
\end{lemma}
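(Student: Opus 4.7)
The plan is to simply formalize the three-step algorithm that was sketched in the paragraph preceding the lemma statement, together with a correctness argument that handles both branches of the case distinction. Since $\mathcal{H}_{easy}$ is defined so that every $H \in \mathcal{H}_{easy}$ either has $\tw(H) \le c$ for some fixed constant $c = c(\mathcal{H}_{easy})$, or is not $F$-colorable, the algorithm only needs to distinguish these two situations and handle each of them in polynomial time.

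First, given an input pair $(H,G) \in \mathcal{H}_{easy} \times \mathcal{G}_F$, I would invoke Bodlaender's algorithm~\cite{Bodlaender96} to decide whether $\tw(H) \le c$ and, if so, to produce a tree decomposition of width $\tw(H)$ in time polynomial in $|V(H)|$ (with the dependence on $c$ absorbed into a constant, as $c$ is fixed for the class). In the positive case, I feed this tree decomposition into the standard dynamic programming algorithm of D{\'{\i}}az, Serna, and Thilikos~\cite{DiazST02}, which computes $\#\homs{H}{G}$ in time $|V(G)|^{O(c)} \cdot |V(H)|^{O(1)}$, i.e.\ polynomial in the total input size since $c$ is a constant. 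This establishes correctness and polynomial running time on the branch where $\tw(H) \le c$.

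In the other branch, that is $\tw(H) > c$, the defining property of $\mathcal{H}_{easy}$ forces $H$ to be not $F$-colorable, i.e.\ $H \notin \mathcal{G}_F$. Since the promise guarantees $G \in \mathcal{G}_F$, Observation~\ref{lem:teclem1} applied with the fixed graph $F$ yields $\#\homs{H}{G} = 0$, so the algorithm may safely output $0$ without any further computation. The only subtlety to mention is that we never need to actually decide membership of $H$ in $\mathcal{G}_F$ (which could be $\cc{NP}$-hard in general); the promise together with the dichotomy encoded in the definition of $\mathcal{H}_{easy}$ allows us to infer non-$F$-colorability purely from the outcome of the treewidth test.

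I do not expect any genuine obstacle here: every ingredient (Bodlaender's algorithm, the standard homomorphism counting DP, and Observation~\ref{lem:teclem1}) is already cited or proved in the paper. The only thing to emphasize in the write-up is that the constant $c(\mathcal{H}_{easy})$ depends only on the class, not on the input, so the exponent in the running time of the D{\'{\i}}az--Serna--Thilikos DP is a constant, keeping the overall running time polynomial rather than merely fixed-parameter tractable.
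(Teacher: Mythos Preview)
Your proposal is correct and follows essentially the same approach as the paper: invoke Bodlaender's algorithm to test $\tw(H)\le c$, then either run the D{\'{\i}}az--Serna--Thilikos dynamic program or output~$0$ by Observation~\ref{lem:teclem1}. Your remark that one never needs to actually test $F$-colorability of $H$ is a nice clarification that the paper leaves implicit.
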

\begin{proof}
    The correctness follows directly from the definition of $\mathcal{H}_{easy}$ and
    \cref{lem:teclem1}.

    For the running time, set $c = c(\mathcal{H}_{easy})$, $k = |V(H)|$, and $n := |V(G)|$
    Checking whether the given graph has a treewidth of at most $c$ takes time
    $c^{O(c^3)} \cdot k$ using Bodlaender's algorithm\cite{Bodlaender96}.
    Next, computing the number of homomorphisms from a graph with treewidth at most
    $c$ takes time ${\rm poly}(k, c) \cdot n^{c + O(1)}$ due to D{\'{\i}}az et.\ al~\cite{DiazST02}.
    Hence in total, our algorithm has a running time of ${\rm poly}(k, c) \cdot n^{c + O(1)}$,
    which is polynomial, thus completing the proof.
\end{proof}

\paragraph*{$\#W[1]$-Hardness for the Intractable Cases}

It remains to demonstrate $\#\W{1}$-hardness of $\#\homs{H}{\mathcal{G}_F}$ whenever the treewidth
of the intersection $\mathcal{H} \cap \mathcal{G}_F$ is unbounded. However, as we have seen in
\cref{lem:help_hardness}, the existing hardness proof already shows the desired stronger result.

\begin{observation}\label{cor:f_col_hard}
    Let $\mathcal{H}$ be a recursively enumerable class of graphs such that the treewidth of
    $\mathcal{H}\cap \mathcal{G}_F$ is unbounded. Then we have that
    \[\#\clique \fptred \#\homsprob{\mathcal{H}}{\mathcal{G}_F}.\]
\end{observation}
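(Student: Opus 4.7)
The plan is to apply Lemma~\ref{lem:help_hardness}(1) to a suitable sub-class of $\mathcal{H}$. Concretely, define $\mathcal{H}' := \mathcal{H} \cap \mathcal{G}_F$. By hypothesis, the class $\mathcal{H}'$ has unbounded treewidth, and since $\mathcal{H}$ is recursively enumerable and $F$-colorability is a decidable property (it can be checked by brute force enumeration of mappings $V(H)\to V(F)$), the class $\mathcal{H}'$ is itself recursively enumerable. Thus Lemma~\ref{lem:help_hardness}(1) yields a parameterized Turing-reduction
\[ \#\clique \fptred \#\homsprob{\mathcal{H}'}{\top}, \]
such that every oracle query $(H,G)$ posed by the reduction satisfies both $H\in\mathcal{H}'$ and the property that $G$ is $H$-colorable.

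The key observation is that any such oracle query is in fact a valid instance of $\#\homsprob{\mathcal{H}}{\mathcal{G}_F}$. Indeed, since $H\in \mathcal{H}' = \mathcal{H}\cap \mathcal{G}_F$, the graph $H$ is $F$-colorable, so there exists a homomorphism $H\to F$. Combined with the homomorphism $G\to H$ guaranteed by the reduction, composition yields a homomorphism $G\to F$, showing that $G\in \mathcal{G}_F$. Moreover, of course, $H\in\mathcal{H}$, so $(H,G)$ lies in the promise of $\#\homsprob{\mathcal{H}}{\mathcal{G}_F}$ and the number of homomorphisms $\#\homs{H}{G}$ is correctly returned by any algorithm solving $\#\homsprob{\mathcal{H}}{\mathcal{G}_F}$.

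Therefore, we may simulate the Turing-reduction from Lemma~\ref{lem:help_hardness}(1) by forwarding each oracle query $(H,G)$ directly to an oracle for $\#\homsprob{\mathcal{H}}{\mathcal{G}_F}$; the promise is satisfied by the argument above, so the answer is the correct value $\#\homs{H}{G}$. The running time and the parameter bound are inherited from Lemma~\ref{lem:help_hardness}(1), establishing the desired reduction
\[ \#\clique \fptred \#\homsprob{\mathcal{H}}{\mathcal{G}_F}. \]
There is no significant obstacle here, as the only subtlety is verifying the promise of the target problem on oracle queries, which follows from the composition of homomorphisms $G \to H \to F$; the heavy lifting was already performed in the proof of Lemma~\ref{lem:help_hardness} sketched in Appendix~\ref{sec:hom_hard_strategy}.
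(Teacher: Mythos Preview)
Your proof is correct and follows essentially the same approach as the paper: apply Lemma~\ref{lem:help_hardness}(1) to the class $\mathcal{H}\cap\mathcal{G}_F$ and observe that each oracle query $(H,G)$ satisfies the promise because the composition of the $H$-coloring of $G$ with the $F$-coloring of $H$ yields an $F$-coloring of $G$. Your write-up is slightly more explicit (e.g.\ verifying that $\mathcal{H}\cap\mathcal{G}_F$ is recursively enumerable), but the argument is the same.
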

\begin{proof}
    We use \cref{lem:help_hardness} (1) for the class $\mathcal{H}\cap \mathcal{G}_F$:
    As every oracle query $(H,G)$ satisfies that $G$ is $H$-colorable by some coloring $c$
    and every graph $H \in \mathcal{H}\cap \mathcal{G}_F$ is $F$-colorable by some coloring $c'$,
    we obtain that the composition $c\circ c'$ is an $F$-coloring of $G$.
\end{proof}

{
\renewcommand{\thetheorem}{\ref{thm:genhomdich}}
\begin{theorem}[repeated]
    Let $F$ be a graph, and let $\mathcal{H}$ be a recursively enumerable class of graphs.
    \begin{enumerate}[~1~]
    \item If the treewidth of $\mathcal{H}\cap\mathcal{G}_F$ is bounded then the PPC problem
        $\#\homsprob{\mathcal{H}}{\mathcal{G}_F}$ is polynomial-time solvable.
    \item Otherwise, the problem $\#\homsprob{\mathcal{H}}{\mathcal{G}_F}$ is $\#\W{1}$-hard.
    \end{enumerate}
\end{theorem}
\addtocounter{lemma}{-1}
}
\begin{proof}
    Holds by \cref{lem:pol} and \cref{cor:f_col_hard}.
\end{proof}

\subsection{Proof of Theorem \ref{thm:subs_f_colored}}\label{sec:app_fcol_sub}

We start with the proof of the following lemma.
\begin{lemma}\label{lem:f_col_monotonicity}
    Let $F$ be a fixed graph and let $Q$ be a quantum graph such that
    $\mathsf{supp}(Q)\subseteq \mathcal{G}_F$.
    There is a deterministic algorithm $\mathbb{A}$ that is given oracle access to
    the problem $\#\homsprob{Q}{\star}$ and that, on input an $F$-colorable graph $G$,
    computes the number $\#\homs{H}{G}$ for every constituent $H$ of $Q$.
    Further, there are computable functions $f$ and $s$ such that the running time
    of the algorithm $\mathbb{A}$ is bounded by $f(|Q|)\cdot |V(G)|^{O(1)}$ and every graph~$G'$
    for which the oracle is queried is $F$-colorable and has at most $s(|Q|)\cdot |V(G)|$ vertices.
\end{lemma}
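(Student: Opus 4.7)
The plan is to adapt the proof of \cref{lem:monotonicity} essentially verbatim, using \cref{fac:f_col_tensor} as the sole new ingredient to ensure the $F$-colorability promise is preserved on all oracle queries. The key observation is that the algorithm underlying \cref{lem:monotonicity} produces its oracle queries as tensor products $G \times A_i$, where the auxiliary graphs $A_1,\dots,A_m$ belong to a finite family depending only on the quantum graph $Q$ and not on the input graph $G$. Since $G$ is $F$-colorable by assumption, \cref{fac:f_col_tensor} immediately yields that each $G \times A_i$ is $F$-colorable as well, so every query to the promise oracle $\#\homsprob{Q}{\star}$ is valid, i.e., stays inside $\mathcal{G}_F$.

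Concretely, I would first extract from the proof of \cref{lem:monotonicity} (that is, from \lovasz' theory developed in Chapters~5 and~6 of~\cite{Lovasz12}) the explicit form of the queries and the resulting linear system
\[\#\homs{H}{G \times A_i} \;=\; \#\homs{H}{G} \cdot \#\homs{H}{A_i}, \qquad i=1,\dots,m,\]
ranging over the constituents $H$ of $Q$. The left-hand sides are obtainable (up to a known linear combination fixed by $Q$) from one oracle call to $\#\homs{Q}{G \times A_i}$ each, while the coefficient matrix $(\#\homs{H}{A_i})_{H,i}$ is invertible by \lovasz' theory, with an inverse computable in time bounded by a function of $|Q|$ alone. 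Solving this system then recovers $\#\homs{H}{G}$ for every constituent $H$ of $Q$.

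Finally, I would verify the claimed bounds. The identity $|V(G \times A_i)| = |V(G)|\cdot |V(A_i)|$ gives the size bound $s(|Q|)\cdot |V(G)|$ with $s(|Q|) := \max_i |V(A_i)|$, which depends only on $|Q|$; the running-time bound $f(|Q|)\cdot |V(G)|^{O(1)}$ is inherited from \cref{lem:monotonicity}, since constructing the tensor products $G \times A_i$ and solving the (constant-size in $|Q|$) linear system contributes only a polynomial factor in $|V(G)|$. I do not anticipate any genuine obstacle here: the entire new content of \cref{lem:f_col_monotonicity} beyond \cref{lem:monotonicity} is the single invocation of \cref{fac:f_col_tensor} to argue that the oracle is never queried on a graph outside $\mathcal{G}_F$, so the proof should amount to essentially a two-line modification of the existing one.
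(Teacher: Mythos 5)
Your proposal is correct and matches the paper's proof essentially verbatim: both follow the Curticapean--Dell--Marx argument, note that all oracle queries are tensor products $G \times A_i$ with auxiliary graphs depending only on $Q$, and invoke \cref{fac:f_col_tensor} to conclude these queries remain $F$-colorable, with the size and running-time bounds following as you describe.
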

\begin{proof}
    We follow the lines of the proof of Lemma~3.6 in~\cite{CurticapeanDM17}:
    Given an $F$-colorable graph $G$, we wish to query the oracle for $(Q,G\times H')$ for
    certain graphs $H'$. Recall that the tensor product satisfies
    \[\#\homs{A}{B\times C} = \#\homs{A}{B}\cdot \#\homs{A}{C},\]
    for all graphs $A,B$ and $C$. Curticapean, Dell and Marx~\cite{CurticapeanDM17} discovered
    that a deep result of \lovasz~implies the existence of graphs $H_1,\dots,H_\ell$ such that the
    following system of linear equations has a unique solution.
    \[\#\homs{Q}{G\times H_i} = \#\sum_H \lambda_H \cdot \#\homs{H}{G \times H_i} = \#\sum_H c_H \cdot   \#\homs{H}{H_i},\]
    where $c_H := \lambda_H \cdot \#\homs{H}{G}$.
    In particular, the graphs $H_1,\dots,H_\ell$ can be computed in time depending only on $H$.
    Consequently the number $\#\homs{H}{G}$ can be computed whenever $\lambda_H \neq 0$ by
    standard Gaussian elimination.

    Now, the only catch is the fact that we are allowed to query the oracle only for $F$-colorable
    graphs. However, by \cref{fac:f_col_tensor}, the tensor product of an $F$-colorable graph with
    another (not necessarily $F$-colorable) graph is always $F$-colorable. Consequently, the
    original proof of Curticapean, Dell and Marx~\cite{CurticapeanDM17} transfers without
    modification to the $F$-colored setting.
\end{proof}

{
\renewcommand{\thetheorem}{\ref{thm:subs_f_colored}}
\begin{theorem}[repeated]
    Let $F$ be a fixed graph and let $\mathcal{H}$ be a recursively enumerable class of graphs.
    \begin{enumerate}[~1~]
        \item If the matching number of $\mathcal{H}\cap \mathcal{G}_F$ is bounded then
            the problem $\#\subsprob{\mathcal{H}}{\mathcal{G}_F}$ is polynomial-time solvable.
        \item Otherwise, the problem $\#\subsprob{\mathcal{H}}{\mathcal{G}_F}$ is $\#\W{1}$-hard.
    \end{enumerate}
\end{theorem}
\addtocounter{lemma}{-1}
}
\begin{proof}
    Proving~(1) is easy: Let $c$ be the constant upper bound on the matching number of graphs in
    $\mathcal{H}\cap \mathcal{G}_F$. Now, given $H\in \mathcal{H}$ and $G \in \mathcal{G}_F$,
    we first compute the matching number of $H$ in polynomial time by the
    Blossom-Algorithm~\cite{Edmonds65}. If the result is greater than $c$, the promise tells us
    that $H \notin \mathcal{G}_F$, in which case we can output $0$ as $H$ would be $F$-colorable
    if it was isomorphic to a subgraph of $G\in \mathcal{G}_F$. Otherwise, we use the algorithm
    given by the first item of \cref{thm:subsdicho}.

    For $\#\W{1}$-hardness, we construct a reduction from the problem
    \[\#\homsprob{\mathsf{spasms}(\mathcal{H})\cap \mathcal{G}_F}{\mathcal{G}_F},\]
    where $\mathsf{spasms}(\mathcal{H})$ is the set of all spasms of graphs in $\mathcal{H}$.
    We start with the following observation.

    \begin{claim}
        The class $\mathsf{spasms}(\mathcal{H}) \cap \mathcal{G}_F$ has unbounded treewidth if
        the class $\mathcal{H}\cap \mathcal{G}_F$ has unbounded matching number.
    \end{claim}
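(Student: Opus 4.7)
The plan is to prove the contrapositive by explicit construction: given any threshold $k$, I will produce a graph $J \in \mathsf{spasms}(\mathcal{H}) \cap \mathcal{G}_F$ with $\tw(J) \geq k$. Without loss of generality $F$ has at least one edge, since otherwise $\mathcal{G}_F$ contains only edgeless graphs and the matching-number hypothesis is vacuous. By the hypothesis of the claim, I can pick $H \in \mathcal{H} \cap \mathcal{G}_F$ with matching number at least $|E(F)| \cdot k^2$, and fix an arbitrary $F$-coloring $c\colon V(H)\to V(F)$ witnessing $H \in \mathcal{G}_F$.

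Now I construct a partition $\rho \in \mathsf{Part}(H)$. Starting from any matching $M \subseteq E(H)$ of size $|E(F)| \cdot k^2$, observe that $c$ maps each edge of $M$ to some edge of $F$; pigeonholing on the $|E(F)|$ possibilities yields an edge $\{a,b\} \in E(F)$ and a sub-matching $M' \subseteq M$ of size at least $k^2$ whose edges all have one endpoint in $c^{-1}(a)$ and one in $c^{-1}(b)$. Enumerate $M' = \{\{u_{i,j}, v_{i,j}\} : (i,j) \in [k]^2\}$ with $c(u_{i,j})=a$ and $c(v_{i,j})=b$, and let $\rho$ have blocks $U_i := \{u_{i,j} : j\in[k]\}$ for $i\in[k]$, $V_j := \{v_{i,j} : i\in[k]\}$ for $j\in[k]$, and singletons for every remaining vertex of $H$. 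Since $F$ is loopless and each block of $\rho$ is $c$-monochromatic, no two vertices inside a single block can be adjacent in $H$; hence $J := H/\rho$ is a spasm, and the map $\bar{c}$ that sends each block to its common color is a well-defined homomorphism $J \to F$, so $J \in \mathcal{G}_F$.

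Finally, every edge $\{u_{i,j}, v_{i,j}\} \in M'$ contributes an edge $\{U_i, V_j\}$ to $J$, and because $a \neq b$ the blocks $U_1, \dots, U_k, V_1, \dots, V_k$ are pairwise distinct. Collecting these $k^2$ edges exhibits $K_{k,k}$ as a subgraph of $J$, so $\tw(J) \geq \tw(K_{k,k}) = k$ by monotonicity of treewidth under subgraphs together with the standard computation $\tw(K_{m,n}) = \min(m,n)$. Since $k$ was arbitrary, this gives the claim. The main obstacle is the tension between the two demands on the spasm: high treewidth calls for aggressive identifications, while $F$-colorability forbids merging vertices of distinct colors and the spasm condition forbids merging adjacent vertices. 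Extracting the bichromatic sub-matching $M'$ via pigeonhole is precisely what reconciles these two demands; once it is in hand, the ``grid-like'' identification into $U_i$ and $V_j$ blocks is routine.
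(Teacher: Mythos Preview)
Your proof is correct and follows essentially the same approach as the paper: pick a large matching in an $F$-colorable $H$, pigeonhole onto a single edge of $F$ to obtain a bichromatic sub-matching of size $k^2$, and identify within color classes to manufacture a $K_{k,k}$ subgraph in an $F$-colorable spasm. Your version is in fact slightly more explicit than the paper's (you spell out the blocks $U_i$, $V_j$ and handle the degenerate case $E(F)=\emptyset$), though note that your opening sentence mislabels the argument as the ``contrapositive'' when it is actually direct.
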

    \begin{claimproof}
        Let $b \in \mathbb{N}$ be a positive integer.
        We show that there is a graph of treewidth at least $b$ in the class
        $\mathsf{spasms}(\mathcal{H}) \cap \mathcal{G}_F$.
        By assumption, there is a graph $H \in \mathcal{H}\cap \mathcal{G}_F$ such that
        $H$ contains a matching $M$ of size at least $|E(F)| \cdot b^2$;
        recall that $|E(F)|$ is a constant as the graph $F$ is fixed.

        Now let $c$ be an $F$-coloring of $H$.
        For every edge $e=\{u,v\} \in M$, we let $c(e)\in E(F)$ be the image of $e$ under
        the coloring.
        As $|M|\geq |E(F)| \cdot b^2$, we have that there is an edge
        $\hat{e}=\{\hat{u},\hat{v}\} \in E(F)$ such that $c(e)=\hat{e}$ for at least $b^2$
        many edges of $M$.
        More precisely, there are edges $\{u_1,v_1\}, \dots,\{u_{b^2},v_{b^2}\} \in M$ such that
        $c(u_i)=\hat{u}$ and $c(v_i)=\hat{v}$ for all $i \in \{1,\dots,b^2\}$.
        As $c$ is a coloring (and thus a homomorphism), we have that the sets
        $U:=\{u_1,\dots,u_{b^2}\}$ and $V:=\{v_1,\dots,v_{b^2}\}$ are independent.
        Consequently, we can contract vertices in $U$ and $V$ such that the matching $M$ becomes
        a complete bipartite graph $K_{b,b}$ with $b$ vertices on each side.

        Let $\rho$ be the induced partition on $V(H)$. Then the quotient graph $H/\rho$ contains
        as subgraph $K_{b,b}$, and $H/\rho$ is still $F$-colorable as we identified only vertices
        that are contained in the same pre-image of the $F$-coloring $c$.
        Further, $H/\rho$ is a spasm as we did not create self-loops.
        Thus, we have that $H/\rho \in \mathsf{spasms}(\mathcal{H}) \cap \mathcal{G}_F$.
        As the treewidth of $K_{b,b}$ is $b$ and the treewidth of a graph cannot increase by
        taking subgraphs, we obtain that $H/\rho$ has treewidth at least $b$.
    \end{claimproof}
    Consequently, a reduction from
    $\#\homsprob{\mathsf{spasms}(\mathcal{H})\cap \mathcal{G}_F}{\mathcal{G}_F}$
    to $\#\subsprob{\mathcal{H}}{\mathcal{G}_F}$ shows $\#\W{1}$-hardness of the latter
    problem by \cref{thm:genhomdich}.
    For the construction of the reduction, we use the known fact that, given a graph $H$,
    there is a quantum graph $Q[H]$ with $\mathsf{supp}(Q)=\mathsf{spasms}(H)$ and
    $\#\homs{Q[H]}{G} = \#\subs{H}{G}$ for every graph $G$~\cite[Equation~(6.2)]{Lovasz12},
    \cite{CurticapeanDM17}.
    Therefore, given a graph $\hat{H} \in \mathsf{spasms}(\mathcal{H})\cap\mathcal{G}_F$
    we can find (in time depending only on $\hat{H}$) a graph $H\in \mathcal{H}$ such that we have
    for all graphs $G$ that
    \[\#\subs{H}{G} = \#\homs{Q[H]}{G},\]
    and $\hat{H}\in \mathsf{supp}(Q[H])$.
    Now let $\hat{Q}[H]$ be the quantum graph obtained from $Q[H]$ by deleting all constituents
    that are not $F$-colorable. If $G$ is $F$-colorable, we obtain from the previous equation that
    \[\#\subs{H}{G} = \#\homs{Q[H]}{G} = \#\homs{\hat{Q}[H]}{G},\]
    as $\#\homs{H'}{G} = 0$ for all graphs $H'$ that are not $F$-colorable:
    Assuming otherwise, there is a homomorphism $h$ from $H'$ to $G$ which yields an
    $F$-coloring of $H'$ when composed with the $F$-coloring of $G$,
    contradicting the assumption that $H'$ is not $F$-colorable.

    It follows that an oracle query $(H,G)$ for $\#\subsprob{\mathcal{H}}{\mathcal{G}_F}$
    computes $\#\homs{\hat{Q}[H]}{G}$.
    As $\mathsf{supp}(\hat{Q}[H])=  \mathsf{spasms}(H) \cap \mathcal{G}_F$, we can use
    \cref{lem:f_col_monotonicity}, which concludes the proof.
\end{proof}

\end{document}